\let\originalleft\left
\let\originalright\right
\renewcommand{\left}{\mathopen{}\mathclose\bgroup\originalleft}
\renewcommand{\right}{\aftergroup\egroup\originalright}
\theoremstyle{plain}
\newtheorem{lemm}{Lemma}[subsection]
\newtheorem{theo}[lemm]{Theorem}
\newtheorem{prop}[lemm]{Proposition}
\newtheorem{corol}[lemm]{Corollary}
\theoremstyle{definition}
\newtheorem{defi}[lemm]{Definition}
\newtheorem{example}[lemm]{Example}
\theoremstyle{remark}
\newtheorem{rema}[lemm]{Remark}
\let\oldemph\emph
\renewcommand{\emph}[1]{\oldemph{#1}\index{#1}}
\crefname{theo}{theorem}{theorems}
\crefname{lemm}{lemma}{lemmas}
\crefname{defi}{definition}{definitions}
\newcommand           {\N}{\mathbb{N}}
\DeclareMathOperator  {\Inf}{Inf}
\DeclareMathOperator  {\Fin}{Fin}
           \newcommand{\Vertices}{V}
\newcommand           {\VerticesOf}[1]{\Vertices_{#1}}
\newcommand           {\vtx}{v}             
 \DeclareMathOperator{\Reach} {Reach}
\DeclareMathOperator  {\Parity}{Parity}
\DeclareMathOperator  {\Rabin}{Rabin}
\DeclareMathOperator  {\Streett}{Streett}
           \newcommand{\g}{\bar{g}}      
\newcommand           {\ch}{C}
\newcommand           {\prov}[1]{\prover_#1}
\newcommand           {\prover}{P}
\newcommand{\JF}[2][]{\ul{#1}\textcolor{NavyBlue}{#2}} 
\newcommand{\Ve}[1]{\textcolor{Magenta}{#1}}
\newcommand{\Ma}[1]{\textcolor{PineGreen}{#1}}
\newcommand{\NCRS}{SPE-NCRS\xspace}
\NewDocumentCommand{\pcp}{}{$P_1CP_2$\xspace}
\NewDocumentCommand{\pcpG}{}{$P_1CP_2(\mathcal{G})$\xspace}
\NewDocumentCommand{\pc}{}{$PC$\xspace}
\NewDocumentCommand{\pcG}{}{$PC(\mathcal{G})$\xspace}
\NewDocumentCommand{\pcpOG}{}{$P_1CP_2{\text -}\mathcal{O}(\mathcal{G})$\xspace}
\NewDocumentCommand{\pcOG}{}{$PC{\text -}\mathcal{O}(\mathcal{G})$\xspace}
\newcommand{\Obs}{\mathcal{O}bs}
\newcommand{\ObsAut}{\mathcal{O}}
\newcommand{\NegComponent}{$(G'\! \setminus \! G)$-component\xspace}
\newcommand{\NegComponents}{$(G'\! \setminus \! G)$-components\xspace}
\newcommand{\NegG}{V'_{(G'\! \setminus \! G)}}
\newcommand{\OComponent}{$\mathcal{O}$-component\xspace}
\newcommand{\OComponents}{$\mathcal{O}$-components\xspace}
\newcommand{\APun}{A'_{P_1}}
\newcommand{\APd}{A'_{P_2}}
\newcommand{\AC}{A'_{C}}
\newcommand{\VPun}{V'_{P_1}}
\newcommand{\VPd}{V'_{P_2}}
\newcommand{\VC}{V'_{C}}
\newcommand{\Plays}[1]{\mathsf{Plays}_{#1}}
\newcommand{\Hist}[1]{\mathsf{Hist}_{#1}}
\newcommand{\Histi}[2]{\mathsf{Hist}_{#1}^{#2}}
\newcommand{\pstable}{player-stable\xspace}
\newcommand{\Pstable}{Player-stable\xspace}
\newcommand{\pstability}{player-stability\xspace}
\newcommand{\Pstability}{Player-stability\xspace}
\newcommand{\StronlgyPstable}{strongly player-stable\xspace}
\newcommand{\stronglyPstability}{strong player-stability\xspace}
\newcommand{\StronglyPstability}{Strong player-stability\xspace}
\newcommand{\actionstable}{action-unique\xspace}
\newcommand{\actionstability}{action-unicity\xspace}
\newcommand{\visibleactionstable}{action-stable\xspace}
\newcommand{\visibleactionstability}{action-stability\xspace}
\newcommand{\Visibleactionstability}{Action-stability\xspace}
\newcommand{\ARondePd}{{\mathbf{\mathbb{A}}}_{P_2}}
\newcommand{\corresp}[1]{\varphi(#1)}
\newcommand{\correspbis}[1]{\phi(#1)}
\newcommand{\cor}[1]{\varphi(#1)}
\newcommand{\simu}[1]{\mathsf{sim}(#1)}
\newcommand{\gainSim}[1]{\mathsf{simGain}(#1)}
\newcommand{\alphabet}{\Sigma}
\newcommand{\languageOf}{L}
\newcommand{\automaton}{{\cal A}}
\tikzset{
  node/.style = {draw, minimum size=1em, inner sep=0pt, font=\scriptsize
    font=\sffamily}
}
\tikzset{
state1P/.style={
		rectangle,
		fill=gray!50,
		rounded corners,
       draw=black, thick,
       minimum height=2em,
       inner sep=5pt,
       text centered,
       }
}
\tikzset{
state2Pr/.style={
		rectangle split,
       rectangle split horizontal=true,
       rectangle split parts=2,
       rectangle split part fill={none,gray!30},
       draw=black, thick,
       minimum height=2em,
       inner sep=3pt,
       text centered,
       }
}
\tikzset{
state2Prc/.style={
		rectangle split,
       rectangle split horizontal=true,
       rectangle split parts=2,
       rectangle split part fill={none,gray!30},
       rounded corners,
       draw=black, thick,
       minimum height=2em,
       inner sep=3pt,
       text centered,
       }
}
\tikzset{
state2Pc/.style={
		circle split,
       draw=black, thick,
       minimum height=2em,
       inner sep=3pt,
       text centered,
       }
}
\tikzset{
state3P/.style={
		rectangle split,
       rectangle split horizontal,
       rectangle split parts=3,
       rectangle split part fill={none,gray!50,none},
       rounded corners,
       draw=black, thick,
       minimum height=2em,
       inner sep=3pt,
       text centered,
       }
}
\begin{document}
\title{The Non-Cooperative Rational Synthesis Problem for Subgame Perfect Equilibria and $\omega$-regular Objectives\footnote{This work is partially supported by Fondation ULB (https://www.fondationulb.be/en/), the Thelam fund 2024-F2150080-0021312 (Open Problems on the Decidability and Computational Complexity of Infinite Duration Games), and by FNRS under PDR Grant T.0023.22 (Rational).}}
\author{Véronique Bruyère$^1$ \and Jean-François Raskin$^2$ \and Alexis Reynouard$^2$ \and Marie Van Den Bogaard$^3$}

\date{$^1$Université de Mons, UMONS, Belgium\\ $^2$Université libre de Bruxelles, ULB, Belgium\\ $^3$Univ. Gustave Eiffel, CNRS, LIGM, Marne-la-Vallée, France}

\maketitle

\begin{abstract}
 This paper studies the rational synthesis  problem for multi-player games played on graphs when rational players are following subgame perfect equilibria. In these games, one player, the system, declares his strategy upfront, and the other players, composing the environment, then rationally respond by playing  strategies forming a subgame perfect equilibrium. We study the complexity of the rational synthesis problem when the players have $\omega$-regular objectives encoded as parity objectives. Our algorithm is based on an encoding into a three-player game with imperfect information, showing that the problem is in 2ExpTime. When the number of environment players is fixed, the problem is in ExpTime and is NP- and coNP-hard. Moreover, for a fixed number of players and reachability objectives, we get a polynomial algorithm.
\end{abstract}

\section{Introduction}

Studying non zero-sum games played on graphs of infinite duration with multiple players~\cite{DBLP:conf/lata/BrenguierCHPRRS16,DBLP:conf/dlt/Bruyere17} poses both theoretical and algorithmic challenges. This paper primarily addresses the (non-cooperative) rational synthesis problem for $n$-player non zero-sum games featuring {\em $\omega$-regular} objectives.
In this context, the goal is to algorithmically determine the existence of a strategy $\sigma_0$ for the system (also called player~0) to enforce his objective against any {\em rational} response from the environment (players $1,\dots,n$). So, {\em rational synthesis} supports the automatic synthesis of systems wherein the environment consists of multiple agents having each their {\em own} objective. These agents are assumed to act rationally towards their own objective rather than being {\em strictly antagonistic} (to the system). This approach contrasts with the simpler scenario of zero-sum two-player game graphs, the fully antagonistic setting, a framework extensively explored in earlier reactive synthesis research, see~\cite{DBLP:reference/mc/BloemCJ18} and the numerous references therein.

While the computational complexity of rational synthesis, where {\em rationality} is defined by the concept of {\em Nash equilibrium} (NE), has been explored in~\cite{DBLP:conf/icalp/ConduracheFGR16}, this paper revisits the rational synthesis problem using the more encompassing definition of {\em subgame perfect equilibrium} (SPE) to formalize rationality. Nash equilibria (NEs) have a known limitation in {\em sequential} games, including the infinite-duration games on graphs that we consider here: they are prone to {\em non-credible threats}. Such threats involve decisions within subgames (potentially reached after a deviation from the equilibrium) that, while not rational, are intended to coerce other players into specific behaviors. To address this limitation, the concept of subgame-perfect equilibrium (SPE) was introduced, as discussed in~\cite{Osborne94}. An SPE is a profile of strategies that form Nash equilibria in every subgame, thereby preventing non-rational and thus non-credible threats. Although SPEs align more intuitively with sequential games, their algorithmic treatment in the context of infinite-duration graph games remains underdeveloped. This gap persists primarily because SPEs require more complex algorithmic techniques compared to NEs. Moreover, the standard backward induction method used in {\em finite} duration sequential games~\cite{kuhn53} cannot be directly applied to {\em infinite} duration games due to the non-terminating nature of these interactions.

Kupferman et al. introduced rational synthesis in two distinct forms. The first approach, dubbed {\em cooperative} rational synthesis~\cite{DBLP:conf/tacas/FismanKL10}, considers an environment working collaboratively with the system to determine whether a specific SPE that ensures the specification of the system is met. So, in this model, the agents of the environment engage in an SPE that guarantees a win for player~0, provided such an equilibrium exists. At the opposite, the second approach, termed {\em non-cooperative} rational synthesis (NCRS)~\cite{DBLP:journals/amai/KupfermanPV16}, grants the environment greater flexibility. Here, the system first selects a fixed strategy $\sigma_0$ and then the environmental agents respond by selecting any strategy that form an SPE with the fixed strategy $\sigma_0$ of the system. The central algorithmic challenge is determining whether there exists a strategy $\sigma_0$ for the system such that every resulting $\sigma_0$-fixed SPE ensures the specification of the system is upheld (\NCRS problem). 

The computational complexity of decision problems for the cooperative synthesis problem with subgame-perfect equilibria (SPE) is now well understood. While the decidability of this problem was first established in~\cite{DBLP:conf/fsttcs/Ummels06}, its exact complexity was resolved in~\cite{DBLP:conf/csl/BriceRB22}, where the problem was shown to be NP-complete for parity objectives.

In contrast, the computational complexity of the \NCRS problem remains less thoroughly investigated. Although the decidability of this problem can be determined through an encoding in Strategy Logic~\cite{DBLP:conf/concur/MogaveroMPV12}, such an encoding does not provide clear insights into the effective construction of the strategy $\sigma_0$ and is suboptimal from an algorithmic standpoint. For example, for rational environment behaviors modeled by Nash equilibria (NEs) instead of SPEs, the NE-NCRS problem can be solved in 3ExpTime using a Strategy Logic encoding~\cite{DBLP:journals/amai/KupfermanPV16}, while it can be solved in 2ExpTime through the use of tree automata~\cite{DBLP:conf/tacas/KupfermanS22}. This shows that using a reduction to Strategy Logic does not deliver optimal worst-case complexity. Addressing this gap, our contribution is twofold. First, we introduce an \emph{innovative} algorithm that transforms the \NCRS problem into a three-player imperfect information game. Such games were analyzed in~\cite{DBLP:conf/icalp/Chatterjee014}, where computationally optimal algorithms for their analysis were presented. Second, our reduction offers a clear advantage over the Strategy Logic encoding, providing improved complexity (double exponential time as opposed to triple exponential time for LTL objectives in~\cite{DBLP:journals/amai/KupfermanPV16})). It also enables a precise analysis of the algorithmic complexity when the number of players is \emph{fixed}, a consideration that is practically relevant in cases where the environment consists of a limited number of players.

\begin{figure} 
\begin{center}
	\begin{tikzpicture}[->, >=latex,shorten >=1pt, scale=1, every node/.style={scale=1, align=center}]	
	\node[draw, rounded corners] (SPE-NCRSp) at (5, 11) {\NCRS problem on multi-player \\ non zero-sum games \\ with parity objectives};
	\node[draw, rounded corners] (pr-g-ii) at (5, 9) {Three-player \pcp game \\ with imperfect information};
	\node[draw, rounded corners, dashed] (pr-g-parity) at (10, 7.5) {Three-player \pcp game \\ with imperfect information \\ and a Rabin objective};
	\node[draw, rounded corners] (2-pl-pr-g-ii) at (5, 6) {Two-player zero-sum \pc game \\ with imperfect information \\ and a Rabin objective};
	\node[draw, rounded corners] (pr-g-pi) at (5, 3) {Two-player zero-sum game \\ with perfect information \\ and a parity objective};
	
	\path[<->] (SPE-NCRSp) edge[double] node[right] {Thm.~\ref{thm:equivalence}} (pr-g-ii);
	\path[<->] (pr-g-ii) edge[double] (2-pl-pr-g-ii);
	\path[<->] (pr-g-ii) edge node[above right] {Sec.~\ref{subsec:pr-g-parity}} (pr-g-parity);
	\path[<->] (pr-g-parity) edge node[below right] {Thm.~\ref{thm:2Provers->1Prover}} (2-pl-pr-g-ii);
	\path[<->] (2-pl-pr-g-ii) edge[double]  node[right] {Thm.~\ref{thm:PerfectInfo}} (pr-g-pi);
	\path[<->] (SPE-NCRSp) edge[double, bend right=90]  node[left] {Thm.~\ref{thm:main}} (pr-g-pi);

	\end{tikzpicture}
\end{center}
\caption{Structure of the article}
\label{fig_structure}
\end{figure}
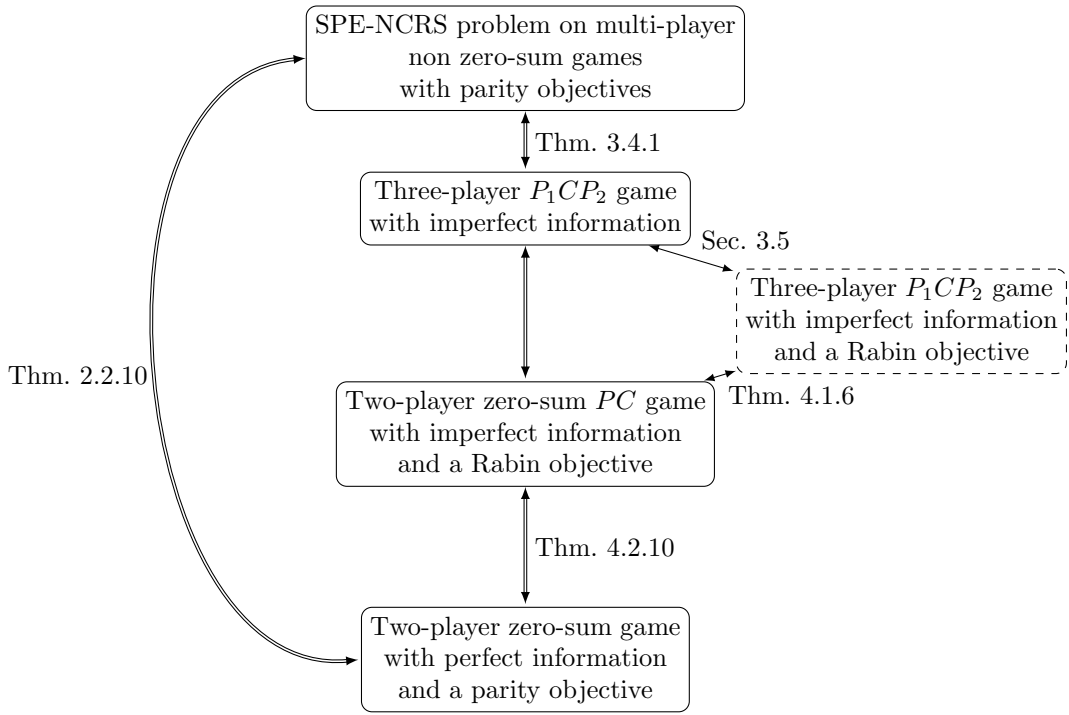

\paragraph{Technical contributions and sketch of our solution}
The summary of our technical contributions is depicted in Figure~\ref{fig_structure}. Our main result shows how to transform the \NCRS problem into a two-player zero-sum parity game with perfect information (Theorem~\ref{thm:main}), a well-studied class of games for which algorithms are available. (Thus, our algorithm supports LTL specifications via their translation into deterministic parity automata, yielding parity
conditions on the game graph). The transformation is structured in several non-trivial steps. 

First, to solve the \NCRS problem, we propose to use the {\em Prover-Challenger} framework initially introduced for the development of algorithms capable of determining the presence of a simulation relation between transition systems (for more details and extensions of this concept, see e.g.~\cite{DBLP:conf/concur/AlurHKV98,DBLP:conf/concur/HenzingerKR97,DBLP:conf/ijcai/Milner71}). However, in our context, we need to use \emph{two Provers} $P_1$ and $P_2$ with the Challenger $C$: $P_1$ aims to demonstrate the existence of a solution, i.e., a strategy $\sigma_0$ for the system, to the \NCRS problem, while $C$ seeks to counter this assertion, i.e., with a subgame perfect response $\bar{\sigma}_{-0}$ that results in an unfavorable outcome for player~0. Then, $P_2$ endeavors to demonstrate that the combined profile $(\sigma_0,\bar{\sigma}_{-0})$ is either not a $0$-fixed SPE or its outcome favors player~$0$. To ensure that $\sigma_0$ is fixed and cannot be modified in subgames, we prevent $P_1$ from adjusting his strategy based on the interactions between $C$ and $P_2$ by imposing to him {\em imperfect information} of the game. More intuition and the formal definition of this \pcp{} game are given in Section~\ref{sec:SPE-NCRS_PCP} together with a proof of correctness (Theorem~\ref{thm:equivalence}).

Second, we detail our method for solving the \pcp game. Given that this game involves three players with imperfect information, specialized techniques are essential for its resolution as multi-player games with imperfect information are undecidable in general (see e.g.~\cite{DBLP:conf/focs/PetersonR79}). We employ a solution specifically adapted for our context, derived from a transformation introduced in~\cite{DBLP:conf/icalp/Chatterjee014}. This transformation was originally proposed for addressing similar types of three-player games with imperfect information. Due to the intricate winning condition present in our Prover-Challenger reduction, we detail its translation into an explicit Rabin objective (this is done in Section~\ref{subsec:pr-g-parity}). After that transformation of the winning condition, the aforementioned techniques for three-player games with imperfect information can be adapted. As a result of this transformation, we obtain a more conventional two-player zero-sum game with imperfect information with a Rabin objective (Theorem~\ref{thm:2Provers->1Prover}).

Third, techniques to remove imperfect information, see e.g.~\cite{DBLP:journals/lmcs/RaskinCDH07,DBLP:journals/jcss/Reif84}, can then be used to obtain the desired two-player zero-sum parity game with perfect information (Theorem~\ref{thm:PerfectInfo}). Therefore, solving the \NCRS problem reduces to solving this two-player zero-sum parity game for which algorithms are well-known.

In Section~\ref{sec:complexity}, we provide a detailed complexity analysis of each step of our construction. This analysis enables us to derive detailed complexity results: solving the \NCRS problem for parity objectives is exponential in the size of the graph and the number of priorities used by the parity objectives, and double-exponential in the number of players (a PSpace lower bound can be deduced from~\cite{DBLP:conf/icalp/ConduracheFGR16}). Consequently, when the number of players is fixed, we achieve membership in ExpTime. Furthermore, by adapting proofs from the NE-NCRS problem~\cite{DBLP:conf/concur/ConduracheOT18}, we establish NP- and co-NP-hardness (the NE-NCRS problem is in PSpace and both NP- and co-NP-hard). Finally, for the specific case of reachability objectives, we obtain a polynomial-time algorithm when the number of players is fixed, as is the case for the NE-NCRS problem~\cite{DBLP:conf/concur/ConduracheOT18}.

\paragraph{\bf Related work}
Recent literature, such as surveys in \cite{DBLP:conf/lata/BrenguierCHPRRS16,DBLP:conf/dlt/Bruyere17,DBLP:journals/siglog/Bruyere21}, underscores a growing interest in non zero-sum games for synthesis. 

Algorithms have been developed for reasoning about NEs in graph games, both for $\omega$-regular \cite{DBLP:conf/fossacs/Ummels08} and quantitative objectives \cite{DBLP:conf/lfcs/BrihayePS13}, or even in a concurrent setting~\cite{DBLP:journals/corr/BouyerBMU15}. The concept of secure equilibrium, a refinement of NE, was introduced in \cite{DBLP:journals/tcs/ChatterjeeHJ06} and its synthesis potential was demonstrated in later studies \cite{DBLP:conf/tacas/ChatterjeeH07}. Similarly, doomsday equilibria, an expansion of secure equilibria for $n$-player games, is elaborated upon in \cite{DBLP:journals/iandc/ChatterjeeDFR17}. All those results are related to the notion of NE.

Algorithms to reason on SPEs are more recent and usually more intricate. For that reason, progresses on weaker notions like weak SPE~\cite{DBLP:journals/iandc/BrihayeBGR21,DBLP:conf/csl/BrihayeBMR15,DBLP:journals/iandc/BruyereRPR21} have been needed before establishing tight complexity bounds for SPEs. For SPEs, the exact complexity for the constrained existence for reachability games was established in~\cite{DBLP:journals/lmcs/BrihayeBGRB20}, for parity games in~\cite{DBLP:conf/csl/BriceRB22}, and for mean-payoff games in~\cite{DBLP:conf/icalp/BriceRB22}. All those works introduce new algorithms that are substantially more sophisticated than those needed to study the notion of NE. 
As mentioned previously, the SPE-NCRS problem with LTL objectives can be addressed by reducing it to the model-checking problem of Strategy Logic, as shown in~\cite{DBLP:journals/amai/KupfermanPV16}. However, this encoding results in a triple-exponential complexity, even when the number of players is fixed. Consequently, this approach does not allow for a fine-grained complexity analysis when the number of players is treated as a fixed parameter, and so it cannot yield the ExpTime complexity we have achieved in this case, nor the PTime complexity for a fixed number of players with reachability objectives.

Cooperative rational synthesis was first introduced in~\cite{DBLP:conf/tacas/FismanKL10}. The adversarial version was later introduced in~\cite{DBLP:journals/amai/KupfermanPV16}. In both cases, as mentioned above, the decidability results were given through a reduction to Strategy Logic~\cite{DBLP:conf/concur/MogaveroMPV12}. A more detailed analysis of the complexity for a large variety of $\omega$-regular objectives and for rationality expressed as NEs was given in~\cite{DBLP:conf/icalp/ConduracheFGR16} for turn-based games and in~\cite{DBLP:conf/concur/ConduracheOT18} for concurrent games, for LTL objectives in~\cite{DBLP:conf/tacas/KupfermanS22}, and for two players and for mean-payoff and discounted-sum objectives in~\cite{NoncoopSynth_Meanpayoff_2players}. Those results do not cover SPEs as we do in this paper. 

\emph{Rational verification} (instead of rational synthesis)
studies the problem of verifying that a given strategy $\sigma_0$ for the system is a solution to the NE/SPE-NCRS problem. The complexity of this simpler problem has been studied for several kinds of objectives in \cite{DBLP:conf/mfcs/BriceRB23,DBLP:journals/ai/GutierrezNPW20,DBLP:journals/amai/GutierrezNPW23}.

Another notion of rational environment behavior  treats the environment as a single agent but with multiple, sometimes conflicting, goals, aiming for behaviors that achieve a \emph{Pareto-optimal} balance among these objectives. Both rational synthesis and verification have been very recently studied for this concept of rationality~\cite{gaspard-pareto-quanti-reach,DBLP:journals/tocl/BruyereFRT24,DBLP:conf/concur/BruyereGR24,Pareto-Rational-Verification}.

\paragraph{\bf Structure of the paper}
In Section~\ref{sec:prel}, we introduce the necessary preliminaries on games played on graphs, the notion of subgame perfection for rationality, and the \NCRS problem that we study in this paper. In Section~\ref{sec:SPE-NCRS_PCP}, we show how to reduce this problem to a Prover-Challenger game with two Provers, called the \pcp game. In Section~\ref{sec:SolvingPCPGame}, we provide an algorithm to solve the \pcp game. This algorithm is structured in two important steps. The first step transforms the game with three players with imperfect information into a game with two players with imperfect information. The second step eliminates the imperfect information and leads to a two-player zero-sum game with a parity objective. As this kind of game is well-studied, we are then equipped to present in Section~\ref{sec:complexity} a detailed complexity analysis of our algorithm, including an analysis when the number of players in the environment is fixed. In Section~\ref{sec:conclusion}, we end the article with a conclusion.

\section{Preliminaries} \label{sec:prel}

In this section, we recall the necessary notions and concepts underpinning this work: First, we specify the model of games on graphs that we study, along with the corresponding definitions of plays, (winning) strategies, and objectives. Second, we present the solution concepts relevant to our non-zero-sum model and specific problem, ranging from Nash equilibria to $0$-fixed subgame-perfect equilibria. 
Finally, we provide a precise statement of the \NCRS problem (see Definition~\ref{def:SPE-NCRSp}) and our main results (see Theorems~\ref{thm:main} and \ref{thm:mainReach}).

\subsection{Games on Graphs}

\begin{defi}[Game structure]\label{def:game_structure}
A \emph{game structure} is a tuple $G = (V, A, \Pi, \delta, v_0
)$, where:
\begin{itemize}
    \item $\Pi = \lbrace 0, \dots, n \rbrace$ is a finite set of \emph{players},
    \item $A = \bigcup_{i=0}^n A_i$ is the set of \emph{actions} of the players, such that $A_i$ is the action set of player $i$, $i \in \Pi$, and $A_0 \cap (\bigcup_{i\neq 0} A_i) = \varnothing$, 
    \item $V = \bigcup_{i=0}^n V_i$ is the set of \emph{states}, such that $V_i$ is the state set of player $i$, $i \in \Pi$, and 
    $V_i \cap V_j = \varnothing$ for all $i \neq j$,
    \item $v_0 \in V$ is the \emph{initial state},
    \item $\delta: \bigcup_{i=0}^n (V_i \times A_i) \rightarrow V$ is a partial function called the \emph{transition function}, such that:
    \begin{enumerate}
        \item $G$ is \emph{deadlock-free}: for every state $v \in V_i$, there exists $a \in A_i$ an action of player~$i$, such that $\delta(v,a)$ is defined,
        \item $G$ is \emph{\actionstable}: for every state $v \in V_i$, and for all  $a,b \in A_i$ actions of player~$i$, we have 
        $\delta(v,a) = \delta(v,b) \Longleftrightarrow a=b$.
    \end{enumerate}
\end{itemize}
The \emph{size} of a game structure is given by the numbers $|V|$, $|A|$, and $|\Pi|$ of its vertices, actions, and players respectively.
\end{defi}

We say that a state $v \in V_i$ is \emph{controlled} or \emph{owned} by player~$i$.
Note that the condition $A_0 \cap (A \setminus A_0) = \varnothing$ means that one knows from the actions when player~$0$ is the one that is playing.  This property of the action set of player~$0$ is not classical nor necessary but will reveal itself useful in the remaining of the paper. Indeed, in the setting we study here, player~$0$ has a distinguished role compared to the other players, see Section~\ref{sec:SPE-NCRS_PCP}.
Condition~$1$ on the transition function ensures that in every state, there is always a possible action to play. Condition~$2$ requires that a transition between two states can only be achieved via a unique action.

\begin{defi}[Play and history]
A \emph{play} in a game structure $G$ is an infinite sequence $\rho \in (VA)^\omega$ of states and actions, such that $\rho$ is of the form $v_0a_0v_1a_1 \cdots$ with $v_0$ the initial state and where for every $k\in \N$, we have $\delta(v_k,a_k) = v_{k+1}$.
The set of all plays in a game structure $G$ is denoted $\Plays{G}$. 
A \emph{history} is a finite prefix $h \in (VA)^*V$ of a play ending in a state of $G$, that we describe as $h = v_0 a_0 v_1 a_1 \dots v_k$. The set of all histories in a game structure $G$ is denoted $\Hist{G}$, while for $i\in \Pi$, the set of histories ending in a state controlled by player~$i$ is denoted $\Histi{G}{i}$.
We write $h \sqsubset \rho$ if the history $h$ is a prefix of the play $\rho$. We also use both notations $\sqsubset$ and $\sqsubseteq$ for two histories.
\end{defi}

\begin{defi}[Strategy]\label{def:strategy}
A \emph{strategy} for player $i \in \Pi$ is a function $\sigma: \Histi{G}{i} \rightarrow A_i$, that prescribes an action $\sigma(h)$ for player~$i$ to choose for every history where it is his turn to play. The set of all strategies $\sigma$ for player~$i$ is denoted by $\Sigma_i$. 
A collection $\bar{\sigma} = (\sigma_i)_{i\in \Pi}$ of strategies, one for each player, is called a \emph{profile} of strategies.
A play $\rho = v_0 a_0 v_1 a_1 \dots$ is \emph{compatible} with a strategy $\sigma_i$ of player~$i$ if for every $k \in \N$  such that $v_k \in V_i$, we have $a_k = \sigma_i(v_0 a_0 \dots v_k)$. Histories compatible with a strategy are defined similarly. A strategy $\sigma$ is said to be \emph{memoryless} when the prescribed action only depends on the last visited state, that is, for every $hav, h'a'v \in \Histi{G}{i}$, we have $\sigma(hav) = \sigma(h'a'v)$.
 
Given a profile of strategies $\bar{\sigma} = (\sigma_i)_{i\in \Pi}$, there is a unique play starting in $v_0$ that is compatible with every strategy of the profile, that we call the \emph{outcome} of $\bar{\sigma}$ and denote it by $\langle \bar{\sigma} \rangle_{v_0}$. Given a strict subset of players $\Pi' \subset \Pi$, we write $\bar{\sigma}_{-\Pi'}$ to refer to a partial profile of strategies that contains a strategy for each player except the ones in $\Pi'$. In particular, we will often focus on the strategy of one player, say $\Pi' = \{i\}$, and the profile of strategies of the rest of the players, and use the notation $(\sigma_i, \bar{\sigma}_{-i})$ to denote the complete profile of strategies. 
\end{defi}

\begin{defi}[Objective and game]
A \emph{winning condition}, or \emph{objective} for player~$i$ is a subset $W_i \subseteq \Plays{G}$ of plays in the game structure $G$. 
We say that a play $\rho$ is \emph{winning} for player~$i$ or \emph{satisfies} his objective if $\rho \in W_i$. Otherwise, we say that $\rho$ is \emph{losing} for player~$i$.
A \emph{game} is a pair $\mathcal{G} = (G, (W_i)_{i \in \Pi})$ consisting in a game structure $G$, together with a profile of objectives for the players. When the context is clear, we often only write $\mathcal{G}$ to designate a game. Given a strategy profile $\bar \sigma$, its \emph{gain profile} (or simply \emph{gain}) is the Boolean vector $\bar g$ such that for all $i \in \Pi$, we have $g_i = 1$ if $\langle \bar{\sigma} \rangle_{v_0} \in W_i$, and $g_i = 0$ if $\langle \bar{\sigma} \rangle_{v_0} \not\in W_i$. We also say that $\bar g$ is the gain profile of the outcome of $\bar \sigma$.
\end{defi}

In this paper, we consider the concept of \emph{parity} objective (that encodes all classical $\omega$-regular objectives), as well as the particular case of \emph{reachability} objective. Both concepts are defined in the following way:

\begin{defi}[Reachability objective] \label{def:reach-objective}
For each $i \in \Pi$, let $T_i \subseteq V$ be a \emph{target set}.
The reachability objective for player~$i$ is then $W_i = \lbrace \rho = v_0a_0v_1a_1 \ldots \in \Plays{G} ~\vert~ \exists k \in \N, v_k \in T_i \rbrace$ that we also denote by $\Reach(T_i)$. Given such a profile of target sets, we say that $(G,(\Reach(T_i))_{i \in \Pi})$ is a \emph{reachability game}.
\end{defi}

The plays of a reachability objective $W_i$ are those visiting a state of the target set $T_i$. 
On the other hand, the parity objective has constraints on states that appear infinitely often in a play.
For a play $\rho = v_0 a_0 v_1 a_1 \dots$, let $\mathrm{Inf}(\rho) = \lbrace v \in V ~\vert~ \exists^{\infty}~k,~ v_k = v \rbrace$, that is, the set of states that appear infinitely often in $\rho$. 
The plays of a parity objective are those whose smallest priority visited infinitely often is even. 

\begin{defi}[Parity objective] \label{def:parity-objective}
For each $i \in \Pi$, let $C_i = \{0,1, \ldots, d_i\}$ be a finite set of \emph{priorities} and $\alpha_i: V \rightarrow C_i$ be a \emph{priority function}, that is, a function that assigns a priority to each state of the game.
The parity objective for player~$i$ is then $W_i = \lbrace \rho \in \Plays{G} ~\vert~ \min_{v \in \mathrm{Inf}(\rho)}(\alpha_i(v)) \text{ is even} \rbrace$ that we also denote by $\Parity(\alpha_i)$.
Given such a profile of priority functions, we say that $(G,(\Parity(\alpha_i))_{i \in \Pi})$ is a \emph{parity game}.
The \emph{size} of each parity objective, denoted by $|\alpha_i|$, is the maximum priority $d_i$.
\end{defi}

\begin{rema}
Note that the objectives we consider in this work only put constraints on the states of the plays (and not on the actions).
\end{rema}

\begin{defi}[Winning strategy]\label{def:win_strat}
Let $G$ be a game structure and $W_i$ be an objective for player~$i$.
A strategy $\sigma_i$ of player~$i$ is \emph{winning} for $W_i$, if for every profile $\bar{\sigma}_{-i}$, the outcome $\rho$ of the profile $(\sigma_i,\bar{\sigma}_{-i})$ is winning for player~$i$, that is, $\rho \in W_i$.
\end{defi}

This definition focuses on $W_i$ only. 
A winning strategy $\sigma_i$ ensures that player~$i$ satisfies his objective $W_i$ against any strategy profile $\bar{\sigma}_{-i}$ of the other players.
In particular, it ensures that player~$i$ wins even if the other players are \emph{strictly antagonistic}, or \emph{adversarial}, that is, their objective is to make player~$i$ lose.
This context corresponds to the classical \emph{zero-sum} setting, and we use notation $\mathcal{G} = (G,W_i)$ for the game, since the objective $W_j$, for all players $j \ne i$, is implicitly defined as $\Plays{G} \setminus W_i$.

\begin{figure}
        \centering
    	\begin{tikzpicture}[->,>=latex,shorten >=1pt, initial text={}, scale=0.65, every node/.style={scale=0.65}]
    	\node[initial above, state, rectangle] (a) at (0, 0) {$v_0$};
    	\node[state] (b) at (-2, -2) {$v_1$};
    	\node[state] (c) at (2, -2) {$v_2$};
    	\node[state, double] (d) at (-3, -4) {$v_3$};
    	\node[state] (e) at (-1, -4) {$v_4$};
    	\node[state] (f) at (1, -4) {$v_5$};
    	\node[state] (g) at (3, -4) {$v_6$};
    	\path[->, blue] (a) edge node[above left] {$\ell$} (b);
    	\path[->, red] (a) edge node[above right] {$r$} (c);
    	\path[->, red] (b) edge node[above right] {$r'$} (e);
    	\path[->, blue] (b) edge node[above left] {$\ell'$} (d);
    	\path[->, blue] (c) edge node[above left] {$\ell'$} (f);
    	\path[->, red] (c) edge node[above right] {$r'$} (g);
    	\path (d) edge [loop below] node[below] {$\ell'$} (d);
    	\path (e) edge [loop below] node[below] {$\ell'$} (e);
    	\path (f) edge [loop below] node[below] {$\ell'$} (f);
    	\path (g) edge [loop below] node[below] {$\ell'$} (g);
    	\end{tikzpicture}
    	\caption{Two NEs and one SPE}
    	\label{fig_ne_spe}
\end{figure}
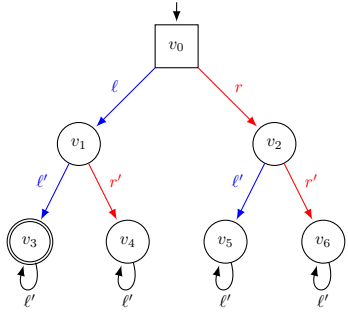

\begin{example}[A simple reachability game] 
Consider the game structure $G$ pictured in Figure~\ref{fig_ne_spe}. 
Its initial state is $v_0$ and there are two players, player $0$, who owns the circle states, and player $1$, who owns the square state. 
Transitions are represented by the arrows between states, such that there exists an arrow from a state $v$ to another state $v'$ if there exists an action $a$ (either left or right here) such that $\delta(v, a) = v'$. The set of actions $A$ is thus partitioned into $A_0 = \{\ell',r'\}$ and $A_1 = \{\ell,r\}$.

Both players have the same reachability objective, with $T_0 = T_1 = \lbrace v_3 \rbrace$, denoted in the figure by the fact that state $v_3$ is double-circled.
None of the two players have a winning strategy: player $0$ can prevent player~$1$ to ever reach state $v_3$ by choosing to go to $v_4$, $v_5$ or $v_6$, depending on the first action of player~$1$, while player~$1$ can prevent player $0$ from winning by going to state $v_2$.
\end{example}

\subsection{Solution Concepts and Studied Problem}

Within the classical \emph{zero-sum} context, the \emph{synthesis problem} asks if there exists, for a particular player, a winning strategy (see~\cite{DBLP:reference/mc/BloemCJ18} for an introduction).
As recalled in the preliminaries (cf.\ Definition~\ref{def:win_strat}), such a strategy ensures that his player wins against all possible strategies of the other players.
However, when we depart from this adversarial hypothesis and consider games that are \emph{non zero-sum}, that is, where each player has his own objective, which may overlap with the objectives of the others, this solution concept shows its limits.
Hence the call to solution concepts such as \emph{Nash equilibrium} or \emph{subgame-perfect equilibrium}, on which we focus in this work.

\begin{defi}[Nash equilibrium~\cite{Osborne94}] 
Let $\mathcal{G} = (G,(W_i)_{i \in \Pi})$ be a game and $\bar{\sigma} = (\sigma_i)_{i \in \Pi}$ be a strategy profile. We say that $\bar{\sigma}$ is a \emph{Nash equilibrium} (NE for short) if for every player $i \in \Pi$ and strategy $\sigma'_i \in \Sigma_i$, we have $\langle \sigma'_i, \bar{\sigma}_{-i} \rangle_{v_0} \in W_i \implies \langle \sigma_i, \bar{\sigma}_{-i} \rangle_{v_0} \in W_i$.
\end{defi}

In other words, no player has an incentive to \emph{deviate} unilaterally from its fixed strategy $\sigma_i$ in the profile $\bar{\sigma}$: if he does so, the resulting outcome will not satisfy his objective if the outcome of $\bar{\sigma}$  was not already doing so.

\begin{example}[A game with two NEs]
Let us come back to the reachability game of Figure~\ref{fig_ne_spe}. Two strategy profiles are represented in Figure~\ref{fig_ne_spe}: the first one, $\bar \sigma$, by the red transitions between states and the second one, $ \bar \sigma'$, by the blue ones.
Both profiles are NEs.
Indeed, for the red profile with outcome $v_0 r v_2 r'(v_6 \ell')^\omega$, if player $0$ deviates from $\sigma_0$, the resulting play $v_0 r v_2 \ell'(v_5 \ell')^\omega$ is still losing for him, since player~$1$ chooses to go to state $v_2$. On the other hand, if player~$1$ deviates from $\sigma_1$, the resulting play is $v_0 \ell v_1 r' (v_4 \ell')^\omega$, since player~$0$ chooses to go to state $v_4$, and this play is losing for player~$1$.
Thus, none of the players have a profitable deviation from ${\bar\sigma}$. Notice that the gain profile of the red profile is equal to $(0,0)$.
Similarly, one can easily check that the blue profile is also an NE with a gain profile equal to $(1,1)$.
\end{example}

In the red profile $\bar \sigma$ of the previous example, the possible choice of going from $v_1$ to $v_4$ for player~$0$ is irrational: after all, his objective is to reach state $v_3$. 
Notice, however, that this behavior is part of the NE $\bar \sigma$. 
These completely adversarial choices of actions are called \emph{non-credible threats}, and can appear once a deviation has already occurred in the outcome of the NE. 
In other words, NEs allow players to forget about their own objectives once the equilibrium outcome has been left.
These non-credible threats are one important limitation of NEs as a solution concept that captures rationality of the players in sequential games.
In order to enforce rationality in every scenario, even the ones that stem from deviations, one needs to monitor what the strategy profile prescribes after \emph{every} history.
This is exactly what \emph{subgame perfect equilibria}  (defined below) do~\cite{Osborne94}.

Let $\mathcal{G} = (G, (W_i)_{i \in\Pi})$ be a game. To each history $hv \in \Hist{G}$, with $v \in V$, corresponds a \emph{subgame} $\mathcal{G}_{\upharpoonright hv}$ that is the game $\mathcal{G}$ starting \emph{after the history $hv$}: its plays $\rho'$ start at the initial state $v$ and for all $i \in \Pi$, $\rho'$ is winning for player~$i$ if, and only if, $h\rho' \in W_i$. Given a strategy $\sigma_i$ for player~$i$, this strategy in $\mathcal{G}_{\upharpoonright hv}$ is defined as $\sigma_{i,\upharpoonright hv}(h') = \sigma_i(hh')$ for all histories $h' \in \Hist{G}$ starting with the initial vertex $v$. Given a strategy profile $\bar\sigma$, we denote by $\bar\sigma_{\upharpoonright hv}$ the profile $(\sigma_{i,\upharpoonright hv})_{i\in \Pi}$ (note that its outcome starts in~$v$). 

\begin{defi}[Subgame perfect equilibrium]
Let $\mathcal{G} = (G, (W_i)_{i \in\Pi})$ be a game. A \emph{subgame perfect equilibrium} (SPE for short) is a profile $\bar{\sigma}$ of strategies such that $\bar{\sigma}_{\upharpoonright hv}$ is an NE in each subgame $\mathcal{G}_{\upharpoonright hv}$ of $G$.
\end{defi}

\begin{example}\label{ex:diff_NE_SPE}
Consider again the game in Figure~\ref{fig_ne_spe}.
In the subgame $\mathcal{G}_{\upharpoonright v_0 \ell v_1}$, starting after the history $v_0 \ell v_1$, the red profile is not an NE.
Indeed, choosing to go to state  $v_4$ is a non-credible threat from player~$0$, as his target set $\{v_3\}$ is accessible from state $v_1$, which he owns. 
The player~$0$ strategy that goes to $v_3$ from $v_1$ is thus a profitable deviation in the subgame $\mathcal{G}_{\upharpoonright v_0 \ell v_1}$.
Therefore, the red profile is not an SPE in $\mathcal{G}$.
On the other hand, one can check that the blue profile is an SPE in $\mathcal{G}$, as it is an NE in every of its subgames.
\end{example}

In this work, we focus on \emph{synthesizing} a good strategy for a specific player (see Definition~\ref{def:SPE-NCRSp}), that we assume to be trustworthy, thus not prone to deviate from the strategy we prescribe.
As stated in the beginning of this section, we consider the \emph{non zero-sum} setting, where each player has his own objective, which may overlap with the objective of some of the other players. 
Hence, we want a solution concept that encompasses both the fact that the specific 
player will not deviate, and that other players may have behaviors that are \emph{adversarial} to the specific player, as long as it does not compromise their respective objectives, as they are assumed to be rational~\cite{DBLP:journals/amai/KupfermanPV16}.
We next define \emph{$0$-fixed} equilibria, which fit the requirements.

\begin{defi}[0-Fixed NE]
Let $\mathcal G$ be a game. A strategy profile $\bar{\sigma} = (\sigma_0, \bar{\sigma}_{-0})$ is a \emph{$0$-fixed NE}, if for every player~$i$, where $i \neq 0$, and every strategy $\sigma'_i \in \Sigma_i$, we have $\langle \sigma'_i, \bar{\sigma}_{-i} \rangle_{v_0} \in W_i \implies \langle \sigma_i, \bar{\sigma}_{-i} \rangle_{v_0} \in W_i$.
\end{defi}

In other words, if the strategy $\sigma_0$ of player~$0$ is fixed, that is, if we assume player~$0$ \emph{will stick} to his strategy, no other player has an incentive to deviate unilaterally. In this paper, we focus on 0-fixed SPEs: profiles $\bar{\sigma}$ of strategies that are 0-fixed NE in each subgame of $\mathcal G$ compatible with the fixed strategy $\sigma_0$ of player~$0$. Formally:

\begin{defi}[0-Fixed SPE]
Let $\mathcal G$ be a game. A strategy profile $\bar{\sigma} = (\sigma_0, \bar{\sigma}_{-0})$ is a \emph{$0$-fixed SPE} if for every history $hv$ compatible with $\sigma_0$, the profile $\bar{\sigma}_{\upharpoonright hv}$ is a $\sigma_0$-fixed NE is the subgame  $\mathcal{G}_{\upharpoonright hv}$.
\end{defi}

We also write that a strategy profile is a $\sigma_0$-fixed SPE to make clear that the profile is a $0$-fixed SPE where player $0$ has strategy $\sigma_0$. 
Furthermore, given a strategy $\sigma_0$, we say that the strategy profile $\bar{\sigma}_{-0}$ is a \emph{subgame-perfect response} to $\sigma_0$ if the complete profile $(\sigma_0, \bar{\sigma}_{-0})$ is a $0$-fixed SPE.
The next theorem guarantees the existence of an SPE in every reachability or parity game. This result also holds for $0$-fixed SPEs{~\cite{DBLP:conf/fsttcs/Ummels06}.

\begin{theo}[Existence of (0-fixed) SPEs]\label{th:SPE-existence}
Given a game $\mathcal{G} = (G, (W_i)_{i \in \Pi})$ that is a parity game or a reachability game, 
\begin{itemize}
    \item there always exists an SPE in $\mathcal G$,
    \item for every strategy $\sigma_0$ of player~$0$, there always exists a $\sigma_0$-fixed SPE in $\mathcal{G}$.
\end{itemize}
\end{theo}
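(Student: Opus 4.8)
The plan is to establish the two claims by exhibiting explicit equilibria through a backward-induction-style argument adapted to infinite-duration games. The key fact I would rely on is that both reachability and parity objectives are prefix-independent enough (parity fully so, reachability after the target is hit) to support a construction of SPEs via a \emph{fixed-point over strategy profiles}, following the approach pioneered by Ummels~\cite{DBLP:conf/fsttcs/Ummels06}. Concretely, for the first bullet (existence of an ordinary SPE), I would invoke the technique of iterated elimination: one defines, for each history $hv$, the set of achievable gain profiles in the subgame $\mathcal{G}_{\upharpoonright hv}$ under the constraint that the continuation is itself an equilibrium, and shows this collection is nonempty and closed under a suitable refinement operator. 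Since parity games are determined and positionally so in the zero-sum case, the single-player deviation problem in each subgame reduces to solving a (zero-sum) parity game, which always has a value; this is what guarantees that a consistent choice of continuation profiles exists.

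For the concrete construction I would proceed as follows. First, I would recall that in a zero-sum parity game each player either has a winning strategy for his objective or the opponents have a winning strategy against it (determinacy), and these can be taken memoryless. Second, I would build the desired SPE by a transfinite or least-fixed-point computation on the tree of histories: starting from a profile and repeatedly ``locally improving'' it wherever some player has a profitable one-shot deviation in some subgame, one reaches a profile with no profitable deviation in any subgame, which is exactly an SPE. The termination/convergence of this process is the delicate point and is precisely where Ummels's argument is needed; I would cite it rather than reprove it, since Theorem~\ref{th:SPE-existence} is stated as a known result attributed to~\cite{DBLP:conf/fsttcs/Ummels06}.

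For the second bullet, the $\sigma_0$-fixed case, the plan is to reduce it to the first. Given a fixed strategy $\sigma_0$ for player~$0$, I would fold $\sigma_0$ into the game structure: since $\sigma_0: \Histi{G}{0} \to A_0$ prescribes player~$0$'s move after every history where he plays, one can regard player~$0$'s vertices as no longer being choice points but as deterministic (history-dependent) transitions dictated by $\sigma_0$. This yields a derived game $\mathcal{G}^{\sigma_0}$ on the histories compatible with $\sigma_0$, in which only players $1,\dots,n$ make genuine choices. A $0$-fixed SPE of $\mathcal{G}$ is then exactly an SPE of $\mathcal{G}^{\sigma_0}$ among the remaining players (with player~$0$ treated as part of the environment whose moves are forced), so the first bullet applied to $\mathcal{G}^{\sigma_0}$ delivers the existence. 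The only subtlety is that $\sigma_0$ may use memory, so $\mathcal{G}^{\sigma_0}$ is a priori an infinite game on the unfolding rather than on a finite graph; however, since the existence result of the first bullet does not require a finite arena (it holds for the game tree of reachability/parity objectives), this causes no problem.

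The main obstacle, and the part I expect to be genuinely hard, is the convergence of the local-improvement fixed point in the \emph{infinite-duration} setting: unlike finite-horizon backward induction~\cite{kuhn53}, there is no terminal layer to anchor the induction, so one must argue that the elimination process stabilizes on a genuine equilibrium rather than oscillating. This is exactly the gap the introduction flags (``the standard backward induction method $\dots$ cannot be directly applied to \emph{infinite} duration games''), and it is resolved in the literature via an ordinal-indexed least-fixed-point argument on the value-profiles. Since the statement is explicitly credited to~\cite{DBLP:conf/fsttcs/Ummels06} and re-derived for $0$-fixed SPEs there, I would present the reduction of the $0$-fixed case to the unrestricted case as above and defer the core convergence argument to that reference.
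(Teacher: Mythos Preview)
The paper does not prove this theorem; it states it as a known result and attributes it to~\cite{DBLP:conf/fsttcs/Ummels06}, so there is no ``paper's own proof'' to compare against. Your plan to defer the core convergence argument to Ummels is therefore exactly what the paper does, and your reduction of the $\sigma_0$-fixed case to the unrestricted case (by hard-wiring $\sigma_0$ into the arena and applying the first bullet to the remaining players) is a standard and correct way to derive the second item from the first.

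One point worth tightening: you note that $\mathcal{G}^{\sigma_0}$ may be infinite when $\sigma_0$ has memory, and claim this ``causes no problem'' because the first bullet holds on the game tree. Be careful here: Ummels's original result is stated for finite arenas, and the passage to infinite (tree-shaped) arenas, while true for $\omega$-regular objectives, is not entirely free. The clean way to handle this is to observe that one may take $\sigma_0$ to be finite-memory without loss of generality for the existence question (or, equivalently, that the product of $G$ with the memory structure of $\sigma_0$ is again a finite parity/reachability arena on which the cited result applies directly). As written, your justification of that step is a little too quick.
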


In this paper, we study the following synthesis problem. Intuitively, the \emph{non-cooperative rational synthesis problem} asks, for a distinguished player, if there exists a strategy that is \emph{as close as possible} to a winning strategy, taking into account the fact that the others are behaving rationally.
The solutions seeked in this problem are indeed strategies that ensure winning for \emph{every} possible way to complete the profile in a rational manner (i.e., other players do not cooperate but take care of their own objectives first).

\begin{defi}[Non-cooperative rational synthesis problem]\label{def:SPE-NCRSp}
The \emph{non-cooperative rational synthesis problem} (\NCRS problem for short), asks, given a game $\mathcal{G} = (G, (W_i)_{i \in \Pi}) $, if there exists a strategy $\sigma_0$ of player $0$, such that, for every subgame-perfect response $\bar{\sigma}_{-0}$, the resulting outcome $\langle (\sigma_0, \bar{\sigma}_{-0}) \rangle_{v_0} $ is winning for player~$0$. 
\end{defi}

\begin{example}
The answer to the \NCRS problem for the game in Figure~\ref{fig_ne_spe} is positive. Indeed, consider the strategy $\sigma_0$ of player~$0$ that chooses action $\ell'$ in both states $v_1$ and $v_2$. The unique subgame-perfect response $\bar{\sigma}_{-0}$ for player~$1$ is to choose action $\ell$ in $v_0$. The resulting strategy profile is thus the blue one whose outcome is winning for player~$0$. 
\end{example}

We now state our main result on parity games. 

\begin{theo}[Complexity of the \NCRS problem for parity games]\label{thm:main}
\begin{enumerate}
\item The \NCRS problem for parity games is in 2ExpTime and PSpace-hard. Given ${\mathcal G} = (G,(\Parity(\alpha_i))_{i\in\Pi})$, this problem is solvable in time exponential in $|V|$ 
and each $|\alpha_i|$, $i \in \Pi$, and double-exponential in $|\Pi|$.

\item If the number of players is fixed, the problem is in ExpTime, and it is NP-hard and co-NP-hard.
\end{enumerate}
\end{theo}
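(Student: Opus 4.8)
The plan is to establish \autoref{thm:main} by combining the chain of reductions summarized in Figure~\ref{fig_structure} with a careful bookkeeping of the size of each intermediate object, and then to separately prove the lower bounds. I would structure the argument in three parts: first the upper bounds (the 2ExpTime and ExpTime membership), obtained by composing Theorems~\ref{thm:equivalence}, \ref{thm:2Provers->1Prover}, and \ref{thm:PerfectInfo}; second the PSpace-hardness; and third the NP- and co-NP-hardness for a fixed number of players.

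For the upper bounds, the key is to track how the size parameters $|V|$, $|\Pi|$, and the $|\alpha_i|$ propagate through the construction. By \autoref{thm:equivalence}, the \NCRS problem is equivalent to solving a three-player \pcp{} game with imperfect information; I would argue that the state space of this game is polynomial in $|V|$ but carries gain vectors in $\Gains = \{0,1\}^{\maxPl+1}$, so it is already exponential in $|\Pi|$. Then \autoref{thm:2Provers->1Prover} turns this into a two-player zero-sum \pc{} game with a Rabin objective, where the Rabin condition has a number of pairs controlled by the parity sizes $|\alpha_i|$ and the number of players. Finally \autoref{thm:PerfectInfo} removes imperfect information via a subset (powerset) construction, incurring a second exponential blow-up, and yields a two-player zero-sum parity game with perfect information. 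Solving parity games is in time polynomial in the graph and quasi-polynomial (hence exponential) in the number of priorities. Composing these bounds, I expect the overall running time to be exponential in $|V|$ and in each $|\alpha_i|$ (coming from the Rabin-to-parity and parity-solving steps together with the single powerset exponential), and double-exponential in $|\Pi|$ (the gain-vector exponential composed with the determinization powerset). When $|\Pi|$ is fixed, the double-exponential factor collapses to a constant, leaving a single-exponential bound, which gives membership in ExpTime.

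For the lower bounds, I would proceed by reduction. The PSpace-hardness is claimed to follow from~\cite{DBLP:conf/icalp/ConduracheFGR16}: the idea is that the NE-NCRS problem is already PSpace-hard, and I would either observe that the same hard instances carry over to the SPE setting, or exhibit a reduction that forces the relevant equilibria to coincide (for instance on instances where every $0$-fixed NE is automatically a $0$-fixed SPE, so that the two synthesis problems agree). For the fixed-number-of-players case, the NP- and co-NP-hardness is claimed by adapting the proofs for the NE-NCRS problem from~\cite{DBLP:conf/concur/ConduracheOT18}. Here I would take their reductions from \textsc{Sat} (for NP-hardness) and its complement (for co-NP-hardness), which use only a bounded number of players, and verify that the constructed instances behave identically under SPE rationality — again by designing the gadget so that subgame-perfect responses and Nash responses give the same outcomes for player~$0$.

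The main obstacle I anticipate is the \emph{precise} complexity accounting in the upper-bound part, specifically disentangling which exponential blow-up attaches to which parameter. The delicate point is showing that the powerset determinization step contributes only a \emph{single} exponential in $|V|$ (not in $|\Pi|$), while the gain-vector component of the \pcp{} game contributes the exponential in $|\Pi|$, so that their composition is double-exponential in $|\Pi|$ but merely exponential in $|V|$. Getting this separation right — rather than crudely bounding everything by a double exponential in the total size — is exactly what yields the refined statement and, in particular, the ExpTime membership for a fixed number of players; I would devote the bulk of the proof to carefully stating the size of each object produced by Theorems~\ref{thm:equivalence}, \ref{thm:2Provers->1Prover}, and \ref{thm:PerfectInfo} and multiplying them out.
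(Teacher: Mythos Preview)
Your upper-bound plan is essentially the paper's: compose the chain of Theorems~\ref{thm:equivalence}, \ref{thm:2Provers->1Prover}, and~\ref{thm:PerfectInfo} (together with Proposition~\ref{prop:PCPParityRabin}, which you fold into the middle step) and track sizes through to a perfect-information parity game solved via~\cite{DBLP:journals/siamcomp/CaludeJKLS22}. The paper does exactly this bookkeeping, and your identification of the delicate point---that the gain-vector component is responsible for the $|\Pi|$-exponential while the powerset/determinization step is exponential in $|V|$ and $|J|$---matches the paper's analysis. One small item you omit: the paper uses $|A|\le|V|^2$ (action-unicity) at the very end to absorb $|A|$ into the $|V|$ bound.

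For the lower bounds, your approach diverges from the paper's. You propose adapting \textsc{Sat}-style reductions from~\cite{DBLP:conf/concur/ConduracheOT18} and arguing that NE and SPE responses coincide on the gadget. The paper instead reduces from two-player zero-sum games with a \emph{conjunction of two parity objectives} (coNP-hard by~\cite{DBLP:conf/fossacs/ChatterjeeHP07}): for NP-hardness it builds a three-player parity game where player~$J_1$ wins the conjunction iff the \NCRS instance has \emph{no} solution; for coNP-hardness, a four-player variant with three sink states where $J_1$ wins iff there \emph{is} a solution. The SPE analysis in these gadgets is done directly, not by appeal to NE equivalence---in particular, the coNP-hardness gadget requires constructing an explicit subgame-perfect response to an arbitrary $\sigma'_0$, and the argument genuinely uses the subgame structure (which player has a winning strategy from the current vertex) rather than just checking a single NE. Your route via \textsc{Sat} may well work, but you would need to supply the gadget and the SPE argument from scratch; ``verify that NE and SPE responses give the same outcome'' is not automatic and is where such adaptations typically fail. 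The paper's choice of generalized parity games as the source problem is what makes the gadget small (three or four players, a handful of extra sinks) and the SPE reasoning tractable.
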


When the number of players is fixed, the \NCRS problem becomes polynomial in the particular case of reachability games.




\begin{theo}[Complexity of the \NCRS problem for reachability games]\label{thm:mainReach}
If the number of players is fixed, the \NCRS problem is solvable in time polynomial in $|V|$ 
for reachability games. 
\end{theo}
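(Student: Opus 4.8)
The plan is to avoid running the pipeline of Theorem~\ref{thm:main} as a black box, since even for a fixed number of players that pipeline removes $P_1$'s imperfect information through a subset (knowledge) construction whose vertex set is a powerset of the game's states, and this stays exponential in $|V|$ regardless of the objective — yielding only the ExpTime bound. The idea is instead to exploit the special structure of reachability to \emph{replace} this knowledge construction by a polynomial amount of bookkeeping, and then to solve a polynomial-size two-player zero-sum reachability/safety game. Existence of a $\sigma_0$-fixed SPE (needed for the quantifier $\forall$ to range over a non-empty set) is guaranteed by Theorem~\ref{th:SPE-existence}.

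The first observation is that for reachability the only information about the past that ever matters, both for evaluating the objectives and for deciding subgame-perfection, is the current state together with the vector $b \in \{0,1\}^{\Pi}$ recording which players have already visited their target set $\Reach(T_i)$. For a fixed number of players this vector ranges over a set of constant size, so the augmented arena $V \times \{0,1\}^{\Pi}$ has size polynomial in $|V|$, and on it reachability becomes a parity objective using only two priorities. In particular every deterministic automaton appearing in the construction underlying Theorem~\ref{thm:main} has polynomial size here, so the genuinely exponential factor that must be eliminated is \emph{only} the knowledge construction.

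Next I would precompute, in polynomial time, the values that govern \emph{credible} threats. For each player $i$ and each augmented state these are the two-player zero-sum reachability values of the coalition of all players other than $i$ trying to prevent $i$ from reaching $T_i$, refined iteratively into the largest gain player~$i$ can secure in a subgame when the remaining players respond rationally. Because the gains are Boolean and there are only polynomially many augmented states for fixed $|\Pi|$, the associated requirement fixpoint (in the spirit of the reachability SPE analyses of~\cite{DBLP:journals/lmcs/BrihayeBGRB20}) stabilizes after polynomially many rounds, each round being a family of attractor computations and hence polynomial. This yields a characterization of the outcomes of $0$-fixed SPEs: a play compatible with $\sigma_0$ is such an outcome iff at no history does any player $i \neq 0$ have a profitable deviation that is credible with respect to the precomputed values.

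Using this characterization, the \NCRS question $\exists \sigma_0\, \forall\, \text{subgame-perfect responses}$ can be encoded as a two-player zero-sum game on the augmented arena $V \times \{0,1\}^{\Pi}$ (enriched with the precomputed threat values): player~$0$ resolves his own moves while the opponent resolves the moves of the other players but is constrained to remain inside the set of credible rational responses certified above, and player~$0$ wins iff $T_0$ is reached. For fixed $|\Pi|$ this arena and objective are polynomial, and the resulting reachability game is solved in polynomial time by attractor computation, with a winning strategy read back as the desired $\sigma_0$. The main obstacle is the third step: proving that subgame-perfection for reachability can be faithfully monitored with only the current state and the vector of already-reached targets — that is, that the exponential knowledge construction of the general pipeline is unnecessary here. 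Concretely one must establish correctness and polynomial convergence of the requirement fixpoint and show that the credible-response constraint is recognizable on the polynomial augmented arena; the delicate point is handling deviations in subgames reached \emph{off} the prescribed play while keeping $\sigma_0$ fixed, which is exactly where the imperfect-information difficulty was concentrated in the general construction.
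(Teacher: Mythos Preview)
Your approach diverges substantially from the paper's, and the proposal contains a genuine gap rather than a complete alternative argument.

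\textbf{What the paper actually does.} The paper does \emph{not} bypass the $P_1CP_2$ pipeline; it runs it, but shows that for reachability with fixed $|\Pi|$ (and, w.l.o.g., constant $|A|$) the knowledge blow-up never touches $|V|$. The observer automaton $\ObsAut$ is redesigned so that its states record only the \NegComponent of the current \pcp state together with the visited-targets vector $\bar t$ and the deviation bookkeeping $(j,d,f)$; all of these are constant-size, so $\ObsAut$ and the derived parity automaton $\automaton$ have constantly many states. The decisive observation is then Lemma~\ref{lem:coherent}: in the final subset construction, every knowledge cell $U$ consists of states that share the same $G$-component $v\in V$. Hence $U$ is determined by a single vertex $v$ together with a subset of the \NegComponents and \OComponents, which are independent of $V$. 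The final arena therefore has $|V|$ times a constant many states and constantly many priorities, and is solved in polynomial time. In short, you are wrong that ``this stays exponential in $|V|$ regardless of the objective'': for reachability the paper shows it does not.

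\textbf{The gap in your route.} Your plan precomputes credible-threat values for each player $i$ by a fixpoint \emph{before} choosing $\sigma_0$, and then plays a two-player game on $V\times\{0,1\}^{\Pi}$ in which the opponent is restricted to ``credible'' responses. But whether a deviation by player~$i$ in a subgame is profitable depends on what $\sigma_0$ does \emph{after} that deviation, and $\sigma_0$ is exactly the object you are synthesizing. The values you propose to precompute are therefore not well-defined independently of $\sigma_0$; treating player~$0$ as part of the adversarial coalition against $i$ gives a pessimistic bound that would let non-SPE responses slip through, while treating him cooperatively would block legitimate SPE responses. You explicitly flag this (``the delicate point is handling deviations in subgames reached off the prescribed play while keeping $\sigma_0$ fixed'') but do not resolve it. Making your scheme work would at minimum require proving that a solution $\sigma_0$ may be taken with memory equal to the visited-targets vector and that, for such $\sigma_0$, the set of $\sigma_0$-fixed SPE outcomes is recognizable on the augmented arena; neither is argued, and neither follows from~\cite{DBLP:journals/lmcs/BrihayeBGRB20}, which treats cooperative constrained existence rather than the $\exists\sigma_0\,\forall$ synthesis question.
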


\section{\NCRS Problem and \texorpdfstring{\pcp} Game }\label{sec:SPE-NCRS_PCP}

\subsection{The approach}\label{sec:description_SPE-NCRS}
In order to answer the question asked by the \NCRS problem, one may notice the particular role \emph{player~$0$} has in the setting of the problem: one needs to find, if it exists, a strategy $\sigma_0$ of player $0$ that ensures him to win \emph{for all} possible partial profiles $\bar{\sigma}_{-0}$ such that $(\sigma_0,\bar{\sigma}_{-0})$ is a $0$-fixed SPE. 
Our approach is to construct, in the spirit of \cite{Meunier16}, a zero-sum two-player \emph{Prover-Challenger} game that will, once solved, deliver the answer. 
The key idea in this approach is that Prover is set to prove that there exists a solution $\sigma_0$ to the \NCRS problem, while Challenger wants to disprove this claim. 
However, the special role of player $0$ in the original game $\mathcal G$ must be taken into account. 
To this end, we \emph{split} Prover into a coalition of two Provers, \emph{Prover $1$} and \emph{Prover $2$} (we later show, in Section~\ref{sec:SolvingPCPGame}, how to come back to the two-player model). 
The goal of this coalition of Provers is still to show that there exists a solution to the \NCRS problem.
The efforts towards this goal are distributed between the two as follows:
The role of Prover~$1$ (noted $P_1$) is to exhibit the solution, a strategy $\sigma_0$ of player $0$ in the game $\mathcal G$.
The role of Prover $2$ (noted $P_2$) is, roughly, to show that for any profile $\bar{\sigma}_{-0}$, the complete profile $(\sigma_0,\bar{\sigma}_{-0})$ is either \emph{not} a $0$-fixed SPE, or its outcome is \emph{winning} for player~$0$.
The objective of Challenger (noted $C$) is to show that there does not exist a solution to the \NCRS problem, in other words, to prove that, for each strategy $\sigma_0$ of player~$0$, there exists a subgame perfect response $\bar{\sigma}_{-0}$ such that the outcome of $(\sigma_0,\bar{\sigma}_{-0})$ is \emph{losing} for player~$0$. 
This game is called \emph{\pcp game} and is denoted by \pcpG.

Before proceeding to the formal definitions, let us give some intuition about this \pcp game for the fixed game $\mathcal G$. 
The three players $P_1$, $C$, and $P_2$, proceed to construct, step by step and with some additional interactions, a play in \pcpG, that \emph{simulates} a play in $\mathcal G$ (see Definition~\ref{def:sim_play_in_G} below). Each player has a specific part to play in the construction of this simulating play. As mentioned previously, player~$P_1$ has the particular task to exhibit a candidate solution strategy $\sigma_0$ to the \NCRS problem for $\mathcal G$.
This is done by simulating player $0$ playing $\sigma_0$ and not deviating from it.
Thus, along the play in \pcpG being constructed, whenever the corresponding state in the simulated play of $\mathcal G$ belongs to player $0$, it is then up to $P_1$ to choose the next state.
One of the tasks of player~$C$, is to try to complete $\sigma_0$ with a subgame perfect response $\bar{\sigma}_{-0}$ from the other players in $\mathcal G$.
Thus, whenever the corresponding state in the simulated play of $\mathcal G$ belongs to a player~$i \neq 0$, it is player~$C$'s turn to play in \pcpG: he \emph{proposes} to play an action according to some strategy $\sigma_i$ of player~$i$ in $\mathcal G$ and in addition \emph{predicts} the gain profile of what will be the outcome of the profile $(\sigma_0,\bar\sigma_{-0})$, that is, of the simulated play being constructed. However, since players other than player $0$ in $\mathcal G$ can deviate, whenever $C$ has made a proposal for a next state, he has to let player~$P_2$ have a say: $P_2$ can either accept this proposal or refuse it and \emph{deviate} on behalf of player~$i$ by choosing another state as the next one in the simulating play. This phase is called the \emph{decision phase}. If this phase results in a deviation, the game proceeds to the \emph{adjusting} phase: since the current subgame has changed, Challenger has to predict the gain profile of $(\sigma_0,\bar\sigma_{-0})$ in this subgame, that shows the deviation of player~$i$ was not a profitable one.
Then, the play in \pcpG resumes to the construction phase again.

How do we ensure that $\sigma_0$ is fixed? In other words, what restricts $P_1$ from adapting his strategy to the way $C$ and $P_2$ interact, but only according to the so far constructed simulated history in $G$?
The solution we choose is to make use of \emph{imperfect information}: 
in the \pcp game, $P_1$ cannot, in fact, distinguish between all the states. 
To model the partial view $P_1$ has on the states of the game, we speak of \emph{observations}. 
Informally, to each state (and action) is associated an observation, and  $P_1$ has to take action upon sequences of observations only (that we call \emph{observed histories}).
In particular, it is assumed that the strategies should \emph{respect} the observations, in the sense that two different histories yielding the same sequence of observations should trigger the same action of $P_1$. 
Thus, we only consider these \emph{observation-based strategies} as possible behaviors of $P_1$.

We now proceed to the formal definition of the \pcp game. In Section~\ref{sec:SolvingPCPGame}, we show that solving the \NCRS problem is equivalent to solving this game in a sense stated in Theorem~\ref{thm:equivalence}.

\subsection{Observation Functions} \label{subset:Obs}

We begin by introducing the concepts of observation and of observation-based strategy.

\begin{defi}[Observation function]\label{def:observations}
Let $G =  (V, A, \Pi, \delta, v_0)$ be a game structure and player~$i$ be some player in $G$. 
Let $\mathcal{O}$ be a partition of $V$, that defines an \emph{observation function} $\Obs: V \rightarrow \mathcal{O}$ for player~$i$, that is, for every state $v$, we have $\Obs(v) = o \in \mathcal{O}$.
Let then $\bar{\mathcal{O}}$ be a partition of $A$, that extends the previous function $\Obs$ to actions, that is, for every action $a$, we have $\Obs(a) = \bar{o} \in \bar{\mathcal{O}}$. 
\end{defi}

The function $\Obs$ extends to histories and plays in the straightforward way. We say that player~$i$ \emph{observes G through $\Obs$} if he cannot distinguish between states (resp. actions, histories, plays) that yield the same observation via the function $\Obs$. 

For those states or actions that are the unique element of their observation set, we say that they are \emph{visible}.
We say that $\Obs$ is the \emph{identity function} if $\Obs(v) = \{v\}$ for all $v \in V$ and $\Obs(a) = \{a\}$ for all $a \in A$.
Whenever player~$i$ observes the game structure through the identity function, we say that player~$i$ has \emph{perfect information}. If it is through any other function, we say that he has \emph{imperfect information}.
We will always assume that a player \emph{sees his own actions perfectly} (that is, his observation function restricted to the domain of his action set corresponds to the identity function).
Furthermore, we assume all players have \emph{perfect recall}: they remember the full sequence of observations they witnessed from the start of the play.

\begin{example}\label{ex:info_imparfaite}
In Figure~\ref{fig_ex_imperfect_info}, a two-player game structure is pictured, together with the observations of player~$0$, who owns the circle states (we suppose that player~$1$ has perfect information).
Each state is divided in two sections: on the left, with a white background, the name of the state is displayed, while on the right side, with a gray background, the observation of player~$0$ is displayed. 
Similarly, the actions $\ell$ and $r$ of player~$1$, who owns the square state, are accompanied by a $\#$ on their right side, which is an abuse of notation to mean $\lbrace \ell, r \rbrace$, that is, player~$0$ cannot distinguish between the two possible actions of player~$1$.
Notice that from states owned by player~$0$, the actions do not have an observation attached to them, as player~$0$ knows his own actions, by hypothesis.

Let us now look at which histories player~$0$ can or cannot distinguish.
From $v_0$, player~$0$ cannot distinguish between the actions of player~$1$, and both states $v_1$ and $v_2$ have the same observation set $\lbrace v_1, v_2 \rbrace$. Therefore, the two histories $v_0 \ell v_1$ and $v_0 r v_2$ are indistinguishable for player~$0$, i.e., $\Obs(v_0 \ell v_1) = \Obs(v_0 r v_2)$.
Similarly, we have that $\Obs(v_0 \ell v_1 \ell' v_3) = \Obs(v_0 r v_2 \ell' v_5)$.
\end{example}

\begin{figure}
        \centering
    	\begin{tikzpicture}[->,>=latex,shorten >=1pt, initial text={}, scale=0.65, every node/.style={scale=0.65}]
    	\node[initial above, state2Pr] (a) at (0, 0) {$v_0$ \nodepart{two}$\lbrace v_0 \rbrace$};
    	\node[state2Prc] (b) at (-4, -2) {$v_1$ \nodepart{two} $\lbrace v_1, v_2 \rbrace$};
    	\node[state2Prc] (c) at (4, -2) {$v_2$ \nodepart{two} $\lbrace v_1, v_2 \rbrace$};
    	\node[state2Prc, double] (d) at (-6, -4) {$v_3$ \nodepart{two} $\lbrace v_3, v_5 \rbrace$};
    	\node[state2Prc] (e) at (-2, -4) {$v_4$ \nodepart{two} $\lbrace v_4, v_6 \rbrace$};
    	\node[state2Prc] (f) at (2, -4) {$v_5$ \nodepart{two} $\lbrace v_3, v_5 \rbrace$};
    	\node[state2Prc] (g) at (6, -4) {$v_6$ \nodepart{two} $\lbrace v_4, v_6 \rbrace$};
    	\path[->, blue] (a) edge node[above left] {$\ell~\vert~\sharp$} (b);
    	\path[->, red] (a) edge node[above right] {$r~\vert~\sharp $} (c);
    	\path[->, red] (b) edge node[above right] {$r'$} (e);
    	\path[->, blue] (b) edge node[above left] {$\ell'$} (d);
    	\path[->, blue] (c) edge node[above left] {$\ell'$} (f);
    	\path[->, red] (c) edge node[above right] {$r'$} (g);
     
    	\path (d) edge [loop below] node[below] {$\ell'$} (d);
    	\path (e) edge [loop below] node[below] {$\ell'$} (e);
    	\path (f) edge [loop below] node[below] {$\ell'$} (f);
    	\path (g) edge [loop below] node[below] {$\ell'$} (g);
     
    	\end{tikzpicture}
    	\caption{Imperfect information in a reachability game}
    	\label{fig_ex_imperfect_info}
\end{figure}
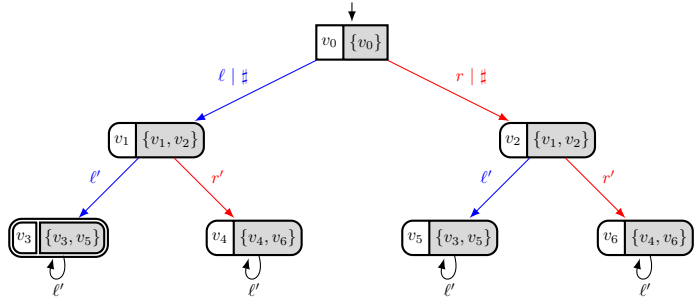

In the sequel, we only consider \emph{\pstable} observation functions such that the last states of the histories observed similarly all belong to the same player.

\begin{defi}[\Pstable observation function] \label{def:pstable}
An observation function $\Obs$ for player~$i$ is \emph{\pstable} if for every two histories $h = v_0a_0v_1a_1 \ldots v_k$ and $g = u_0b_0u_1b_1 \ldots u_k$ such that $\Obs(h) = \Obs(g)$, then the states $v_\ell,u_\ell$ are controlled by the same player, for all $\ell \in \{0,\ldots, k\}$. 
\end{defi}

The identity observation function is trivially \pstable. This is also the case for the observation function of Example~\ref{ex:info_imparfaite} due to the form of its game structure. A game $\mathcal G$ with objectives $W_i$ for all its players $i \in \Pi$ and a \pstable observation function $\Obs$ for one of its players is denoted $\mathcal{G} = (G,(W_i)_{i \in \Pi},\Obs)$. When $\Obs$ is the identity observation function, we do not mention it in this notation.

Notice that with a \pstable observation function for player~$i$, we have that, given a history $h \in \Histi{G}{i}$, all histories $h'$ with the same observation as $h$ also belong to $\Histi{G}{i}$. It is therefore natural to ask that a strategy $\sigma_i$ for player~$i$ is \emph{observation-based}, that is, the same action is played by player~$i$ after all histories observed similarly.

\begin{defi}[Observation-based strategy]\label{def:obs-based-strat}
Given a \pstable observation function $\Obs$, a strategy $\sigma_i$ of player~$i$ is \emph{observation-based} if for every pair of histories $h, h'$, if $h \in \Histi{G}{i}$ and $\Obs(h) = \Obs(h')$, then $\sigma_i(h)=\sigma_i(h')$. 
\end{defi}

\begin{example}\label{ex:info_imparfaite_2}
Let us come back to the game structure of Figure~\ref{fig_ex_imperfect_info}. Suppose that the objective of player~$0$ is to reach the target set $\lbrace v_3 \rbrace$. One can verify that player~$0$ does not have an observation-based winning strategy.
Indeed, as noted in Example~\ref{ex:info_imparfaite}, player~$0$ cannot distinguish between histories $h_1 = v_0 \ell v_1$ and $h_2 = v_0 r v_2$.
By definition, any observation-based strategy must prescribe the same action from both $h_1$ and $h_2$.
If this action is $r'$, then both resulting plays are losing for player~$0$.
If it is $\ell'$, then one of the resulting play is winning, while the other is losing.
Thus, no observation-based strategy ensures player~$0$ to reach $\{v_3\}$.
\end{example}

\subsection{Definition of the \texorpdfstring{\pcp} Game} \label{subsec:DefPCPgame}

In this section, let us fix a game $\mathcal{G} = (G,(W_i)_{i \in \Pi})$ with a game structure $G = (V, A, \Pi, \delta, v_0)$ and objectives $(W_i)_{i \in \Pi}$. Later, this game will be a reachability game or a parity game.
We here formally define the corresponding \pcp game, that was informally presented in Section~\ref{sec:description_SPE-NCRS}.
Recall that the role of player~$P_1$ is to simulate a strategy $\sigma_0$ of player $0$ in $\mathcal G$, while player~$C$ has to propose a subgame perfect response to $\sigma_0$ such that $(\sigma_0,\bar{\sigma}_{-0})$ is losing for player~$0$ in $\mathcal G$}, response to which player~$P_2$ can react by choosing some deviations (decision phase). The interactions between the three players result in a play in \pcpG that simulates a play in $\mathcal G$. Challenger has to predict a gain profile for this simulated play, which he can update upon deviations from~$P_2$ (adjusting phase). 

We begin by defining the game structure of the new game \pcpG, in Definition~\ref{def:pcp-game-structure-states} for its states and in Definition~\ref{def:pcp-game-structure-transitions} for its transition function.\footnote{Note that to avoid confusion with the players of $G$, and without loss of generality, the three players of the \pcp game have been given explicit names rather than numbers.} 
We then define the observation function of $P_1$ in Definition~\ref{def:Obs-of-the-PCP-game}; and we finally define the objectives of the three players in Definition~\ref{def:objectivesPCP}. Figure~\ref{fig:prover-game-play} should help the reader for understanding those definitions.

The states (except its initial state $v'_0$) of the \pcp game structure have several components:
\begin{itemize}
    \item all of them have a state of $G$ as their first component, that we call \emph{$G$-component},
    \item similarly, all of them have a \emph{gain-component}, which consists of a gain profile $\bar g \in \{0,1\}^{|\Pi|}$ for the players of $G$, where in each component $g_i$ of $\bar g$, $0$ symbolizes a loss and $1$ a win with respect to objective $W_i$.
\end{itemize} 

The states that have only these two components are called \emph{$G$-states} and belong either to $P_1$ or $C$.
A projection on the $G$-components of these states determines the simulated play in $G$ (see Definition~\ref{def:sim_play_in_G}).

As already stated in Section~\ref{sec:description_SPE-NCRS}, whenever the current state of the simulated play belongs to player $0$ in $G$, the next state of the simulated play is chosen directly by $P_1$.
However, when it is not the case, $C$ has to make a proposition, which should then be validated or changed by $P_2$ (decision phase).
Thus, to reflect this proposal while remaining hidden to $P_1$, the successor states of $G$-states belonging to $C$ have an extra component, called \emph{action-component}, which records the action of $G$ proposed by $C$.
From such states called \emph{action-states}, $P_2$ has to react by confirming or changing the action that was proposed by $C$. This is modeled by a third kind of states, called \emph{$player$-states}.
Such states have a \emph{player-component}, which consists of a player from $G$ different from player~$0$ to signal to $C$ this player has indeed deviated  (or the empty set, to signal acceptance by Prover~$2$ of Challenger's proposal).

\begin{defi}[States of the \pcp game] \label{def:pcp-game-structure-states}
Given a game $\mathcal G$, the game structure of \pcpG is a game structure $G' = (V', A', \lbrace P_1, C, P_2 \rbrace, \delta', v'_0)$, where $P_1$, $C$, and $P_2$ are the three players and the set of states $V' = \VPun \cup \VC \cup \VPd$ is as follows:\footnote{The action set and the transition function are defined in Definition~\ref{def:pcp-game-structure-transitions}.}
\begin{itemize}
    \item $\VPun = \lbrace (v, \bar{g})~\vert~ v \in V_0,~ \bar{g}\in \lbrace 0, 1 \rbrace^{\vert \Pi \vert} \rbrace$,
    \item $\VC  = \lbrace v'_0 \rbrace \cup \lbrace (v, \bar{g})~\vert~ v \in V \setminus V_0,~ \bar{g}\in \lbrace 0, 1 \rbrace^{\vert \Pi \vert} \rbrace \cup \lbrace (v,i, \bar{g})~\vert~ v \in V,~ i\in \Pi \setminus \lbrace 0 \rbrace \cup \lbrace \varnothing \rbrace,~ \bar{g}\in \lbrace 0, 1 \rbrace^{\vert \Pi \vert} \rbrace$,
    \item $\VPd = \lbrace (v, a, \bar{g}) ~\vert~v \in V \setminus V_0, ~ a \in A \setminus A_0 \text{ and } \delta(v, a) \text{ is defined},~ \bar{g}\in \lbrace 0, 1 \rbrace^{\vert \Pi \vert} \}$,
    \item $v'_0$ is the initial state.
\end{itemize}
Among those states, $(v,\bar{g})$ are $G$-states, $(v,a,\bar{g})$ are action-states, and $(v,i,\bar{g})$ are player-states. Moreover, $v$ is a $G$-component, $\bar{g}$ is a gain-component, $a$ is an action-component, and $i$ is a player-component. Notice that the sets $\VPun, \VC, \VPd$ are pairwise disjoint.
\end{defi}

After the definition of the states of the \pcp game structure, let us proceed to the definition of its transitions. We already began to explain some actions of the players. Let us continue these explanations, see also Figure~\ref{fig:prover-game-play}. Recall that the goal of Challenger is to prove the existence of a $\sigma_0$-fixed SPE losing for player $0$ in $\mathcal G$. 
To be able to verify this claim in \pcpG, Challenger has to predict the gain of the simulated play that is being constructed.
To do so, the gain-components of the states of \pcpG are used as follows:
\begin{itemize}
    \item first, Challenger owns the initial state $v'_0$, and has to choose a gain profile $\bar{g}$, losing for player $0$\footnote{As an outcome of a subgame perfect response to $\sigma_0$ losing for player $0$ is exactly what Challenger wants to exhibit, for any $\sigma_0$ simulated by $P_1$.}, to start the construction of the simulated play in the $G$-state $(v_0, \bar{g})$,
    \item then, whenever $P_2$ chooses to make some player~$i$ deviate, before reaching a $G$-state,  Challenger has to respond by choosing an adjusted gain profile with a lower or equal gain for this player~$i$ (adjusting phase). This new gain expresses to $P_2$ the absence of any profitable deviation for player~$i$.
\end{itemize}

Note that in the following definition of the \pcp game structure, two modeling decisions have been made to help handle the subsequent developments. 
First, to ensure a certain regularity of the plays' shape in the \pcp game and that $P_1$ cannot infer whether $P_2$ actually made a deviation, the adjusting phase is played regardless of a deviation occurring or not: if there was no deviation, Challenger has no choice but to play the same gain profile.
Second, to avoid confusion on who, among Challenger and Prover~$2$, is currently playing some action $a \in A \setminus A_0$, this action is of the form $(a, i)$ for $C$ and $a$ for $A$, that is, we additionally recall the player~$i$ from $G$ performing the action $a$ when Challenger is playing.

\begin{defi}[Transitions of the \pcp game] \label{def:pcp-game-structure-transitions}
The set of actions $A' = \APun \cup \AC \cup \APd$ of the \pcpG game structure is as follows:
\begin{itemize}
    \item $\APun = A_0$,
    \item $\AC  = (A \setminus A_0 \times \Pi \setminus \lbrace 0 \rbrace) \cup \lbrace 0, 1 \rbrace^{\vert \Pi \vert}$,
    \item $\APd = A \setminus A_0$,
\end{itemize}
Notice that the sets $\APun, \AC, \APd$ are pairwise distinct.
The transition function $\delta'$ of the \pcpG game structure is defined as follows:
\begin{itemize}
\item For player~$P_1$:
\begin{itemize}
    \item for every state $(v, \bar{g}) \in \VPun$, action $a \in \APun$ and state $u \in V$ such that $\delta(v,a) = u$, we have $\delta'((v, \bar{g}), a) = (u, \bar{g})$,
\end{itemize}
\item For player~$C$: 
\begin{itemize}
    \item for every gain profile $\bar{g} \in \lbrace 0, 1 \rbrace^{\vert \Pi \vert}$ such that $g_0 = 0$, we have $\delta'(v'_0, \bar{g}) = (v_0, \bar{g})$,
    \item for every state $(v, \bar{g}) \in \VC$ with $v \in V_i$, and action $a \in A_i$, we have $\delta'((v, \bar{g}), (a, i)) =(v, a ,\bar{g})$,
    \item for every state $(v, i, \bar{g}) \in \VC$ with $i \neq \varnothing$, and gain $\bar{g}'$ such that $g'_i \leq g_i$, we have $\delta'((v, i, \bar{g}), \bar{g}') =(v, \bar{g}')$,
    \item for every state $(v, i, \bar{g}) \in \VC$ with $i = \varnothing$, we have $\delta'((v, i, \bar{g}), \bar{g}) =(v, \bar{g})$,
\end{itemize}
The last two cases are the \emph{adjusting phase}.
\item For player~$P_2$:
\begin{itemize}
    \item for every state $(v, a, \bar{g}) \in \VPd$ and state $u \in V$ such that $\delta(v, a) = u$, we have $\delta'((v, a, \bar{g}), a) = (u, \varnothing, \bar{g})$,
    \item for every state $(v, a, \bar{g}) \in \VPd$ with $v \in V_i$, action $b \in A_i$ such that $b \neq a$, and state $u \in V$ such that $\delta(v, b) = u$, we have $\delta'((v, a, \bar{g}), b) = (u, i, \bar{g})$.
\end{itemize}
These two cases are the \emph{decision phase}.
\end{itemize}
\end{defi} 

\begin{figure}
    \centering
\begin{tikzpicture}[->,on grid,node distance=1.5cm and 1.5cm,
    vertex/.style={ellipse, draw}, inner sep = 0.3ex]

  \node [vertex] (vg0) {$\vtx, \g$} node[right=+15pt]{\footnotesize \emph{$G$-state}};
  
    \node (vg0C) [below left=of vg0] {$\vtx \notin \VerticesOf{0}$};
      \node [vertex, inner sep = 0.2ex] (vvg) [below=of vg0C] {$v, a, \g$} ;
        \node [vertex] (vpg) [below left=of vvg] {$u', i, \g$};
          \node [vertex] (vg1) [below=of vpg] {$u', \g'$};
        \node [vertex] (vng) [below right=of vvg] {$u, \varnothing, \g$};
          \node [vertex] (vg2) [below=of vng] {$u, \g$};
    \node (vg0P) [below right=of vg0] {$\vtx \in \VerticesOf{0}$};
      \node [vertex] (vg3) [below=of vg0P] {$\vtx', \g$} ;

  \draw [dotted] (0,0.5)  --  (vg0.north) ;  
  \draw [dotted] (vg1.south)  --  ([shift=({0cm,-0.7cm})]vg1.south) ;
    \draw [dotted] (vg2.south)  --  ([shift=({0cm,-0.7cm})]vg2.south) ;
    \draw [dotted] (vg3.south)  --  ([shift=({0cm,-0.7cm})]vg3.south) ;
\draw node at ([shift=({-0.9cm,-0cm})]vvg.west) {\emph{\footnotesize action-state}};
\draw node at ([shift=({0.6cm,-0cm})]vg3.east) {\emph{\footnotesize $G$-state}};
\draw [dashed, very thick, red, - ] ([shift=({-3cm,-0cm})]vvg.south)  --  node[left] {Decision phase} ([shift=({-3cm,-1.6cm})]vvg.south) ;
 \draw [dashed, very thick, blue, - ] ([shift=({-1.5cm,-0.2cm})]vpg.south)  --  node[left] {Adjusting phase} ([shift=({-1.5cm,-1.6cm})]vpg.south) ;

  \path (vg0)  edge [-] (vg0C)
               edge [-] (vg0P)
        (vg0C) edge node[left]       {$\ch : a, i $}       (vvg)
        (vvg)  edge node[above left] {${\prov2} : a' \neq a$} (vpg)
               edge node[above right]{${\prov2} : a$}      (vng)
        (vpg)  edge node[left]       {$\ch : \g'$}            (vg1)
        (vng)  edge node[right]      {$\ch : \g$}           (vg2)
        (vg0P) edge node[right]      {${\prov1} : a_0$} (vg3);
        
        \draw [dashed, -, shorten <=3, shorten >=3 ] (vg1)   --  node[below] {\footnotesize \emph{$G$-state}}  (vg2) ; 
        \draw [dashed, -, shorten <=3, shorten >=3 ] (vpg)   --  node[below] {\footnotesize \emph{player-state}}  (vng) ;

\end{tikzpicture}
\caption{General structure of a play in \pcpG}
    \label{fig:prover-game-play}
\end{figure} 

As already mentioned in Section~\ref{sec:description_SPE-NCRS}, given a play in \pcpG, there exists a unique corresponding play in $G$ that is being simulated by the interactions of the Provers and Challenger.

\begin{defi}[Simulated play and gain in $G$]\label{def:sim_play_in_G}
 Let $\rho = v'_0a'_0v'_1a'_1v'_2a'_2\dots$ be a play in the game structure of \pcpG. The \emph{simulated play} of $\rho$ is the play $\simu{\rho} \in \Plays{G}$ being the projection of $\rho$ on the $G$-component of $G$-states (which belong either to $P_1$ or $C$) and on actions of $P_1$ and $P_2$.\footnote{This projection does not take into account the action- and player-states and the actions of Challenger.}
The definition extends naturally to histories.
The \emph{simulated gain} of $\rho$ is a Boolean vector $\gainSim{\rho} \in \lbrace 0, 1 \rbrace^{\vert \Pi \vert}$, such that 
$\gainSim{\rho}_i = 0$ if $\simu{\rho}$ is losing for player~$i$, and $\gainSim{\rho}_i = 1$ if $\simu{\rho}$ is winning for player~$i$.
\end{defi}

\begin{example}\label{ex:simu_play_gain} 
Consider the fictional history 
$$h = v'_0~\bar{g}~(v_0,\bar{g})~a_0~(v_1,\bar{g})~(a_1,i)~(v_1, a_1, \bar{g})~a'_1~(v_2, i, \bar{g})~\bar{g}'~(v_2, \bar{g}')$$
in the \pcp game of some game $\mathcal G$. 
It starts in the initial state $v'_0$, where $C$ chooses a gain profile $\bar{g}$ where $g_0 = 0$. Thus the second state of the history is the $G$-state $(v_0, \bar{g})$. 
Looking at the form of the third state $(v_1,\bar{g})$ in $h$, we can deduce that $(v_0, \bar{g})$ belongs to Prover~$1$, that is, $v_0$ belongs to player~$0$ in $G$. Moreover $\delta(v_0,a_0) = v_1$.
Then, the fourth state $(v_1, a_1, \bar{g})$ in $h$ is an action-state, which means that the previous state $(v_1,\bar{g})$ belongs to Challenger, such that $v_1$ belongs to player~$i \neq 0$ in $G$, and $a_1$ is the action from $G$ proposed by $C$ in this scenario. 
One can see that $P_2$ then chooses the action $a'_1$, which is different from $a_1$ since the fifth state $(v_2, i, \bar{g})$ of $h$ is a state whose player-component $i$ is not equal to $\varnothing$.
This indicates that $P_2$ makes player~$i$ deviate in $G$ and that $\delta(v_1, a'_1) = v_2$.
By now, Challenger has to choose a new gain profile $\bar{g}'$, yielding the last state $(v_2, \bar{g}')$ of $h$, which is a $G$-state.
By projecting $h$ on the $G$-components of its $G$-states and on the actions of the Provers, one can check that the simulated history of $h$ in $G$ is $\simu{h} = v_0a_0v_1a'_1v_2$.

Consider now the fictional play 
$$\rho = h ~\bigl((a_2,j)~(v_2, a_2, \bar{g}')~a_2~(v_2, \varnothing, \bar{g}')~\bar{g}'~(v_2, \bar{g}')\bigr)^\omega$$
in the \pcp game.
Looking at $\rho$, one can deduce that from $(v_2, \bar{g}')$, the last state of $h$, Challenger proposes action $a_2$ from $G$ for player~$j \neq 0$ and that $P_2$ accepts this proposal.
Furthermore, Challenger cannot adjust the gain profile, leading to state $(v_2, \bar{g}')$.
One can see that this behavior repeats indefinitely, thus the corresponding simulated play is $\simu{\rho} = \simu{h} (a_2 v_2)^\omega$. The simulated gain $\gainSim{\rho}$ of $\rho$ is a Boolean vector deduced from $\simu{\rho}$, such that its $i$-th component is equal to $1$ if, and only if, $\simu{\rho}$ is winning for  player~$i$.
\end{example}

\begin{rema}\label{rema:several_plays_in_pcp_for_one_in_G}
  Note that while there exists a unique simulated play $\simu{\rho}$ in $\mathcal{G}$ for every play $\rho$ in \pcpG, the converse does not hold. Indeed, several different sequences of interactions between $P_2$ and $C$ yield the same simulated play in $\mathcal{G}$. For instance, Challenger can propose different actions that $P_2$ can refuse. We will come back to the remark in Lemma~\ref{lem:tau_C_histG->histG'} below.
\end{rema}

We have seen that each state and action of $G$ appear in several contexts in the game structure $G'$ of \pcpG. Let us now state how the size of $G'$ depends on the size of $G$. The proof of this lemma directly follows from the definition of the \pcp game.

\begin{lemm} \label{lem:sizeGamePCP}
Given the size $|V|$, $|A|$, and $|\Pi|$ of the game structure of $\mathcal G$, the game structure of \pcpG has 3 players and a size 
\begin{itemize}
\item $|V'|$ linear in $|V|$ and $|A|$, and exponential in $|\Pi|$, 
\item $|A'|$ linear in $|A|$ and exponential in $|\Pi|$.
\end{itemize}
\end{lemm}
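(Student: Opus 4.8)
The plan is to read off each bound directly from \Cref{def:pcp-game-structure-states,def:pcp-game-structure-transitions} by counting, component by component, how many states and actions the construction creates. Since \Cref{lem:sizeGamePCP} only asks for the asymptotic dependence of $|V'|$ and $|A'|$ on $|V|$, $|A|$, and $|\Pi|$, no clever argument is needed: the entire content is a careful enumeration of the three disjoint state sets $\VPun$, $\VC$, $\VPd$ and the three disjoint action sets $\APun$, $\AC$, $\APd$, followed by recording which of the parameters each factor depends on. First I would treat the states. From \Cref{def:pcp-game-structure-states}, a $G$-state $(v,\bar g)$ ranges over $v \in V$ and $\bar g \in \{0,1\}^{|\Pi|}$, contributing $|V| \cdot 2^{|\Pi|}$ states; an action-state $(v,a,\bar g)$ ranges over a state $v$, an action $a \in A \setminus A_0$ with $\delta(v,a)$ defined, and a gain profile, contributing at most $|V| \cdot |A| \cdot 2^{|\Pi|}$; and a player-state $(v,i,\bar g)$ ranges over $v \in V$, a player-component $i \in (\Pi \setminus \{0\}) \cup \{\varnothing\}$, and a gain profile, contributing at most $|V| \cdot (|\Pi|+1) \cdot 2^{|\Pi|}$. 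Adding the single initial state $v'_0$ and summing these disjoint contributions, the dominant behaviour is that $|V'|$ is linear in $|V|$, linear in $|A|$ (through the action-states), polynomial in $|\Pi|$, and exponential in $|\Pi|$ through the factor $2^{|\Pi|}$; the exponential term dominates the polynomial one, so $|V'|$ is linear in $|V|$ and $|A|$ and exponential in $|\Pi|$, as claimed.

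Next I would do the same bookkeeping for the actions, using the description of $A' = \APun \cup \AC \cup \APd$ in \Cref{def:pcp-game-structure-transitions}. Here $\APun = A_0$ has size at most $|A|$; $\APd = A \setminus A_0$ likewise has size at most $|A|$; and $\AC = \bigl((A \setminus A_0) \times (\Pi \setminus \{0\})\bigr) \cup \{0,1\}^{|\Pi|}$ has size at most $|A| \cdot |\Pi| + 2^{|\Pi|}$, the first summand coming from Challenger's proposed actions labelled by the deviating player and the second from the gain profiles Challenger may choose during the adjusting phase. Summing the three disjoint sets, $|A'|$ is linear in $|A|$, polynomial in $|\Pi|$ through the $|A|\cdot|\Pi|$ term, and exponential in $|\Pi|$ through the $2^{|\Pi|}$ term; again the exponential factor dominates, giving $|A'|$ linear in $|A|$ and exponential in $|\Pi|$, with no dependence on $|V|$. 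The count of players is immediate from \Cref{def:pcp-game-structure-states}: the player set is $\{P_1, C, P_2\}$, so there are exactly $3$ players.

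There is no real obstacle in this argument; it is entirely mechanical, and the only point requiring a sentence of justification is that the gain-component, which every state and the adjusting-phase actions carry, ranges over $\{0,1\}^{|\Pi|}$ and so contributes the multiplicative factor $2^{|\Pi|}$ responsible for the exponential dependence on the number of players, whereas the other components ($G$-, action-, and player-components) contribute only factors linear in $|V|$, $|A|$, or $|\Pi|$. Since a polynomial factor in $|\Pi|$ is absorbed by the exponential factor $2^{|\Pi|}$, the asymptotic statement simplifies to the form given in \Cref{lem:sizeGamePCP}, and the proof is complete.
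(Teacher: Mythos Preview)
Your proposal is correct and takes essentially the same approach as the paper: the paper's proof is a single sentence stating that the result follows directly from the definition of the \pcp game, and your argument is exactly the mechanical component-by-component enumeration that this sentence abbreviates.
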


In the \pcp game, recall that players~$C$ and $P_2$ have perfect information whereas player~$P_1$ has imperfect information to ensure that $P_1$ cannot adapt his strategy from witnessing the interactions between $C$ and $P_2$. The observation function $\Obs$ for player~$P_1$ is defined below on $V'$ and $A'$. For each state $v'$ of $V'$, he is only able to observe the $G$-component of $v'$. Concerning the actions of $A'$, those of $\APun \cup \APd$ are all visible whereas those of $\AC$ are not visible at all. 
In the next definition, there is an abuse of notation in $\mathcal{O}$ and $\bar{\mathcal{O}}$, as already done in Example~\ref{ex:info_imparfaite}.

\begin{defi}[Information in the \pcp game] \label{def:Obs-of-the-PCP-game}
Given the game structure $G'$ of \pcpG, players~$C$ and $P_2$ have perfect information and the observation function $\Obs$ of player~$P_1$ is defined as follows.
\begin{itemize}
    \item We have $\Obs : V' \rightarrow \mathcal{O} = \{v'_0\} \cup \{v \mid v \in V\}$ such that:
    \begin{itemize}
        \item $\Obs(v,\bar{g}) = v$ for all $(v,\bar{g}) \in \VPun$,
        \item $\Obs(v'_0) =  v'_0$,
        \item $\Obs(v, \bar{g}) = v$ for all $(v, \bar{g}) \in \VC$,
        \item $\Obs(v,i,\bar{g}) = v$ for all $(v,i,\bar{g}) \in \VC$,
        \item $\Obs(v,a,\bar{g}) = v$ for all $(v,a,\bar{g}) \in \VPd$.
    \end{itemize}
    \item We have $\Obs : A' \rightarrow \bar{\mathcal{O}} = \{\sharp\} \cup \{a \mid a \in A\}$ such that:
    \begin{itemize}
        \item $\Obs(a') = a'$ for all $a' \in \APun$,
        \item $\Obs(a') =  \sharp$ for all $a' \in \AC$,
        \item $\Obs(a') = a'$ for all $a' \in \APd$.
    \end{itemize}
\end{itemize}
\end{defi}

To finalize the definition of the \pcp game, it remains to define the objectives of the three players. 
Recall that the objective of Challenger is to show that for each strategy $\sigma_0$ of player~$0$ in $\mathcal G$, there exists a subgame perfect response $\bar{\sigma}_{-0}$ such that the outcome of $(\sigma_0,\bar{\sigma}_{-0})$ is losing for player~$0$. This objective $W_C$ of Challenger is detailed below in Definition~\ref{def:objectivesPCP}. Let us first give some intuition on what is a play $\rho$ in the \pcp game that is winning for $C$. 
Three winning situations may occur for $C$ along $\rho$: 
\begin{itemize}
    \item[$(iC)$] eventually, that is, after reaching some subgame, $P_2$ always accepts the action proposals of $C$ and the gain predicted by $C$ in the subgame is correct (it is equal to the simulated gain in this subgame),
    \item[$(iiC)$] eventually, $P_2$ keeps making one unique player~$i$ deviate in the decision phase, but $C$ is able to adjust the gain to show that this deviation is not profitable for player~$i$,
    \item[$(iiiC)$] $P_2$ keeps making at least two different players deviate, essentially conceding the play, as the only deviations that can be considered within the scope of ($0$-fixed) SPEs are \emph{unilateral} deviations.
\end{itemize}

In the \pcp game, the two Provers have the same objective $W_P$ that is opposed to $W_C$. 
Indeed, recall that their objective is to exhibit a strategy $\sigma_0$ for player~$0$ in $G$, such that for every subgame-perfect response $\bar{\sigma}_{-0}$ to $\sigma_0$, the outcome of the resulting profile $(\sigma_0,\bar{\sigma}_{-0})$ is winning for player~$0$.
Let us give some intuition on what is a winning play for the Provers, see Definition~\ref{def:objectivesPCP} for the formal definition. 
Two winning situations may occur for the Provers along a play $\rho$ in the \pcp game: 
\begin{itemize}
    \item[$(iP)$] eventually, after reaching some subgame, $P_2$ always accepts the action proposals of $C$ and the gain predicted by $C$ is incorrect. 
    \item[$(iiP)$] eventually, $P_2$ keeps making one unique player~$i$ deviate in the decision phase, and $C$ is not able to adjust the gain to show that this deviation is not profitable for player~$i$.
\end{itemize}

\begin{defi}[Objectives of the \pcp game] \label{def:objectivesPCP}
Let $\rho = v'_0a'_0v'_1a'_1v'_2a'_2\dots$ be a play in the game structure of \pcpG. Let $\gainSim{\rho}$ be its simulated gain.

\begin{itemize}
\item The play $\rho$ is \emph{winning for Challenger} if one of the following conditions is satisfied: \begin{itemize}
    \item[$(iC)$] there exist $n \in \N$ and $g \in \lbrace 0, 1 \rbrace^{\vert \Pi \vert}$, such that for every state $v'_k$ with $k > n$, if $v'_k$ is a player-state, then it is of the form $(v_k, i_k, \bar{g}_k)$ such that its player-component $i_k$ equals $\varnothing$ and its gain-component satisfies
    $$\bar{g}_k = \gainSim{\rho},$$ 
    \item[$(iiC)$]
    there exists a player $i \neq 0$ such that 
    \begin{itemize}
        \item for every $n \in \N$, there exists $k > n$ for which $v'_{k}$ is a player-state with its player-component being equal to $i$,
        \item there exists $n \in \N$ such that for every state $v'_k$ with $k > n$, if $v'_{k} = (v_k, i_k, \bar{g}_k)$ is a player-state with $i_k \neq \varnothing$, then its player-component $i_k$ equals $i$ and the $i$-th component $\bar{g}_{k,i}$ of its gain-component $\bar{g}_k$ satisfies 
        $$\bar{g}_{k,i} \geq \gainSim{\rho}_i,$$
    \end{itemize}

    \item[$(iiiC)$] there exist two distinct players $i, j  \neq 0$ such that for every $n \in \N$, there exist $k, \ell > n$ for which $v'_{k}$ and $v'_{\ell}$ are player-states, with their player-component being respectively $i$ and $j$.
\end{itemize}
The set of plays satisfying one of these conditions is denoted $W_C$.

\item The play $\rho$ is \emph{winning for Prover~$1$ and Prover~$2$} if one of the following conditions is satisfied:
\begin{itemize}
    \item[$(iP)$]\label{cond_iP} there exist $n \in \N$ and $g \in \lbrace 0, 1 \rbrace^{\vert \Pi \vert}$, such that for every state $v'_k$ with $k > n$, if $v'_k$ is a player-state, then it is of the form $(v_k, i_k, \bar{g}_k)$ such that its player-component $i_k$ equals $\varnothing$ and its gain-component satisfies
    $$\bar{g}_k \neq \gainSim{\rho},$$ 
    \item[$(iiP)$]\label{cond_iiP} there exists a player $i \neq 0$ such that 
    \begin{itemize}
        \item for every $n \in \N$, there exists $k > n$ for which $v'_{k}$ is a player-state with its player-component being equal to $i$,
        \item there exists $n \in \N$ such that for every state $v'_k$ with $k > n$, if $v'_{k} = (v_k, i_k, \bar{g}_k)$ is a player-state with $i_k \neq \varnothing$, then its player-component $i_k$ equals $i$ and the $i$-th component $\bar{g}_{k,i}$ of its gain-component $\bar{g}_k$ satisfies 
        $$\bar{g}_{k,i} < \gainSim{\rho}_i,$$
    \end{itemize}

    \end{itemize}
The set of plays satisfying one of these conditions is denoted $W_P$.
\end{itemize}
\end{defi}

\begin{rema}
We have $W_{P_1}= W_{P_2} = W_P$ and $W_P = \Plays{G'} \setminus W_C$.
\end{rema}

Given a game $\mathcal G$, the definition of its \pcp game is now completed: we have \pcpG $= (G',(W_{P_1},W_C,W_{P_2}),\Obs)$, with the game structure $G'$ defined in Definitions~\ref{def:pcp-game-structure-states} and~\ref{def:pcp-game-structure-transitions}, the observation function $\Obs$ of $P_1$ defined in Definition~\ref{def:Obs-of-the-PCP-game}, and the objectives $W_{P_1},W_C,W_{P_2}$ of the three players defined in Definition~\ref{def:objectivesPCP}. To lighten notations, the resulting game \pcpG will often be denoted $(G',W_P,\Obs)$, where $W_P$ is the objective just defined.

\subsection{Equivalence Between \NCRS Problem and \texorpdfstring{\pcp} Game}

In this section, we prove that solving the \NCRS problem for a game $\mathcal G$ is equivalent to solving its \pcp game, as stated in the following theorem.  

\begin{theo}[Equivalence theorem between $\mathcal G$ and \pcpG]\label{thm:equivalence}
Let $\mathcal{G} = (G, (W_i)_{i \in \Pi})$ be a game, and \pcpG $= (G',W_P, \Obs)$ be its \pcp game. There exists in $\mathcal{G}$ a strategy $\sigma_0$ of player $0$, such that, 
for every subgame-perfect response $\bar{\sigma}_{-0}$, the play $\langle (\sigma_0, \bar{\sigma}_{-0}) \rangle_{v_0}$ is winning for player~$0$ if, and only if, there exists in \pcpG an observation-based strategy $\tau_{P_1}$ of $P_1$ such that for all strategies $\tau_C$ of $C$, there exists a strategy $\tau_{P_2}$ of $P_2$ such that the play $\langle \tau_{P_1}, \tau_C, \tau_{P_2}\rangle_{v'_0}$ belongs to $W_P$. 
\end{theo}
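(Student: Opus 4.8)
The plan is to prove both implications by setting up a tight dictionary between the objects of $\mathcal{G}$ and those of \pcpG, and then reading the winning conditions $W_P$ and $W_C$ as, respectively, the \NCRS property and its negation. On the $P_1$ side, since $P_1$ observes exactly the $G$-component of the simulated history and sees his own actions, an observation-based strategy $\tau_{P_1}$ is nothing but a strategy $\sigma_0$ of player~$0$ in $\mathcal{G}$, and conversely; this yields a correspondence $\tau_{P_1}\leftrightarrow\sigma_0$. On the $C$ side, I would fix a \emph{canonical realization} $h\mapsto\cor{h}$ sending each history $h$ of $\mathcal{G}$ compatible with $\sigma_0$ to the play-prefix of \pcpG in which $P_2$ deviates exactly at the steps where $h$ departs from $C$'s proposals (this is the correspondence announced in \cref{rema:several_plays_in_pcp_for_one_in_G}). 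Reading $C$'s action-proposals along $\cor{\cdot}$ gives a partial profile $\bar{\sigma}_{-0}$, and reading the gain-components gives a prediction $\gamma(h)$ for every reachable subgame. The adjusting rule $g'_i\le g_i$ makes the $i$-th predicted gain nonincreasing along any maximal run of $i$-deviations, and this monotone Boolean behaviour is the combinatorial fact encoding non-profitability.

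For the implication from \pcpG to $\mathcal{G}$, I would take the winning observation-based $\tau_{P_1}$, let $\sigma_0$ be its corresponding strategy, and argue by contradiction. If some subgame-perfect response $\bar{\sigma}_{-0}$ to $\sigma_0$ had an outcome losing for player~$0$, I would build the Challenger strategy $\tau_C$ proposing the actions of $\bar{\sigma}_{-0}$ and, at $v'_0$ and in every adjusting phase after a history $h$, predicting the gain of the genuine subgame outcome $\langle \bar{\sigma}_{\upharpoonright h}\rangle$. The $0$-fixed SPE property delivers precisely the constraint $g'_i\le g_i$ demanded in \cref{def:pcp-game-structure-transitions}, since a unilateral deviation of~$i$ cannot raise $i$'s gain, so $\tau_C$ is well defined. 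A case analysis on the eventual deviation pattern of an arbitrary $\tau_{P_2}$ then places the outcome in $W_C$: if $P_2$ eventually always accepts we are in $(iC)$, because the prediction equals the realized simulated gain; if $P_2$ eventually deviates a single player~$i$ we are in $(iiC)$, because the Nash condition in the reached subgame forces the predicted $i$-gain to dominate the realized one; and if two players deviate infinitely often we are in $(iiiC)$. Hence $\tau_C$ beats every $\tau_{P_2}$, contradicting the Prover side.

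For the converse, given a solution $\sigma_0$ I let $\tau_{P_1}$ simulate it (automatically observation-based) and must show that no Challenger strategy beats all $\tau_{P_2}$, which is \emph{literally} the assertion $\forall\tau_C\,\exists\tau_{P_2}$ (so no determinacy argument is needed). Assuming some $\tau_C$ beat all $\tau_{P_2}$, I extract $\bar{\sigma}_{-0}$ and the predictions $\gamma$ as above and show $\bar{\sigma}_{-0}$ is a subgame-perfect response whose outcome loses for player~$0$, contradicting that $\sigma_0$ solves \NCRS. The outcome loses for~$0$ because the always-accept response forces $(iC)$, pinning $\gamma(v_0)$ to the simulated gain of the main outcome, whose first component is~$0$. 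For each reachable $h$ I steer $P_2$ to $h$ by finitely many deviations and then branch: accepting forever afterwards exposes $(iP)$ unless $\gamma(h)$ equals the true subgame gain, so all predictions must be correct; and were there a profitable deviation of some player~$i$ after $h$, deviating $i$ accordingly (and accepting the others) would realize a win for~$i$, while the now-correct $\gamma(h)_i=0$ together with the nonincreasing $i$-gain forces $(iiP)$. As neither exploit may succeed, $\bar{\sigma}_{-0}$ has correct predictions and no profitable deviation, i.e.\ it is a $0$-fixed SPE (whose existence is in any case guaranteed by \cref{th:SPE-existence}).

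The main obstacle I anticipate is this last extraction: making the exploiting strategies $\tau_{P_2}$ reach \emph{off-main-line} subgames while keeping Challenger's responses faithful to $\bar{\sigma}_{-0}$, and controlling the gain-components across the adjusting phases. This is where the canonical-realization correspondence (\cref{lem:tau_C_histG->histG'}) does the real work: steering to $h$ must land exactly on $\cor{h}$, so that $C$'s recorded proposal and prediction there are the ones defining $\bar{\sigma}_{-0}$ and $\gamma(h)$, and the finitely many steering deviations must not disturb the \emph{eventual} conditions $(iP)$--$(iiiC)$, which are tail properties. The monotone Boolean behaviour of the $i$-gain under $g'_i\le g_i$ is precisely what converts ``player~$i$ has a profitable deviation'' into the strict inequality $\bar{g}_{k,i}<\gainSim{\rho}_i$ of condition $(iiP)$.
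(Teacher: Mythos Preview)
Your proposal is correct and follows essentially the same approach as the paper, which proves the two directions as Lemmas~\ref{lem:SPE-NCRSp->pcp} and~\ref{lem:pcp->SPE-NCRSp}. The dictionary $\sigma_0\leftrightarrow\tau_{P_1}$, the extraction of $\bar\sigma_{-0}$ and the predictions from $\tau_C$ via the canonical history $\cor{h}$ (the paper's Lemma~\ref{lem:tau_C_histG->histG'}), and the case analysis on $P_2$'s eventual deviation pattern against a $\tau_C$ built from an SPE are all exactly what the paper does.

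The one organizational difference is in the forward direction. The paper, given $\tau_C$, \emph{constructs} a single $\tau_{P_2}$ directly: extract $\bar\sigma_{-0}$; if it is a $0$-fixed SPE, accept everywhere (then $(iP)$ since $g_0=0$ but the outcome wins for~$0$); if not, pick one witnessing $(h,i,\sigma'_i)$ and split on whether $C$ predicts a win or a loss for~$i$ at $h$. You instead assume some $\tau_C$ beats \emph{all} $\tau_{P_2}$, and use the whole family of probing strategies (accept; steer-then-accept at every $h$; steer-then-deviate) to force $\bar\sigma_{-0}$ to be an SPE response with losing outcome for~$0$. The two are contrapositives at the level of a fixed $\tau_C$; your version makes the ``all predictions are correct'' step explicit, which is pleasant but not needed.

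One small imprecision: in your step~3 you write that a profitable deviation ``forces $(iiP)$''. This is only the case when $\sigma'_i$ disagrees with $\sigma_i$ infinitely often. If the deviation is cofinitely equal to $\sigma_i$, then eventually $P_2$ accepts forever, and the resulting play lands in $(iP)$ rather than $(iiP)$: the eventually-constant predicted $i$-gain is $0$ (by your monotone observation from $\gamma(h)_i=0$), while the simulated $i$-gain is $1$, so the prediction is wrong. The paper makes this finite/infinite split explicit in case~2(a) of Lemma~\ref{lem:SPE-NCRSp->pcp}; your argument already contains everything needed to cover it.
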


In the following, we assume that both games $\mathcal{G} = (G,(W_i)_{i \in \Pi})$ and \pcpG $= (G', W_P, \Obs)$ are fixed. Before proving the equivalence theorem, we are going to show that the observation function $\Obs$ is \pstable (see Definition~\ref{def:pstable}). So it makes sense to speak about observation-based strategies for player~$P_1$ as in Theorem~\ref{thm:equivalence}.

\begin{lemm}[\Pstability of $\Obs$] \label{lem:playerStable}
In \pcpG, the observation function $\Obs$ is \pstable for player~$P_1$.
\end{lemm}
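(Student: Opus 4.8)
The plan is to show that two histories of \pcpG yielding the same observation pass, position by position, through states owned by the same player. The key preliminary remark is that every state of $G'$ comes in exactly one of five shapes and that the owner is a function of the shape: the initial state $v'_0$ (owned by $C$), the $G$-states $(v,\bar{g})$ with $v \in V_0$ (in $\VPun$, owned by $P_1$), the $G$-states $(v,\bar{g})$ with $v \notin V_0$ (in $\VC$, owned by $C$), the action-states $(v,a,\bar{g})$ (in $\VPd$, owned by $P_2$), and the player-states $(v,i,\bar{g})$ (in $\VC$, owned by $C$). Reading the owner off the $G$-component alone is \emph{not} possible, since a player-state $(v,i,\bar{g})$ with $v \in V_0$ (owned by $C$) and a $P_1$-state $(v,\bar{g})$ (owned by $P_1$) share the same observation $v$; this is precisely the difficulty the proof must overcome.

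First I would record how the actions are observed. By Definition~\ref{def:Obs-of-the-PCP-game}, the actions of $P_1$ (those of $\APun = A_0$) and of $P_2$ (those of $\APd = A \setminus A_0$) are visible, while every action of $C$ (those of $\AC$, be it a proposal $(a,i)$ or a gain profile) is observed as $\sharp$. Since $A_0 \cap (A \setminus A_0) = \varnothing$, the observation of an action falls into one of three disjoint classes — a letter of $A_0$, a letter of $A \setminus A_0$, or $\sharp$ — and this class determines the owner of the \emph{source} state of that action ($P_1$, $P_2$, or $C$ respectively). This already fixes the owner of every non-final state of a history directly from the observation of the action that follows it.

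The remaining, and main, difficulty is the last state of a history, whose owner cannot always be read from the preceding action alone: a $\sharp$ only reveals that its source is owned by $C$, without telling apart a $C$-owned $G$-state (a proposal) from a player-state (an adjustment) or the initial state, and these lead to successors of different shapes (an action-state owned by $P_2$ versus a $G$-state). To resolve this I would prove, by induction on $\ell$, that the observation prefix $\Obs(v'_0 a'_0 \dots v'_\ell)$ determines the shape of $v'_\ell$. The base case is immediate since every history starts in $v'_0$. For the inductive step I would argue that the shape of $v'_\ell$ (known by induction) together with $\Obs(v'_{\ell+1})$ determines the shape of $v'_{\ell+1}$: from the initial state, a $P_1$-state, or a player-state the next state is a $G$-state, which is a $P_1$-state iff its observed $G$-component lies in $V_0$ and a $C$-owned $G$-state otherwise; from a $C$-owned $G$-state the next state is necessarily an action-state; and from an action-state the next state is necessarily a player-state. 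Each transition of shapes is thus a function of the current shape and the observed successor, so the induction carries through.

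Finally I would conclude: if $\Obs(h) = \Obs(g)$ for two histories $h = v'_0 a'_0 \dots v'_k$ and $g = u'_0 b'_0 \dots u'_k$, the induction gives that $v'_\ell$ and $u'_\ell$ have the same shape for every $\ell$, hence the same owner, which is exactly the \pstability of $\Obs$ for $P_1$. I expect the only genuinely delicate point to be the bookkeeping in the inductive step, showing that the $\sharp$-ambiguity is always resolved by the shape carried forward rather than by the current observation in isolation.
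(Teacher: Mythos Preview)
Your proof is correct and shares the paper's high-level plan: the observation class of an action (an element of $A_0$, of $A\setminus A_0$, or $\sharp$) already determines the owner of its source state, so all non-final positions are handled immediately, and the genuine work is the last state. Where you and the paper differ is in how that last state is resolved. The paper does a backward case analysis on the owners of $v'_{k-1}$ and, when that state belongs to $C$, of $v'_{k-2}$ as well, invoking the auxiliary Lemma~\ref{lem:remark} (itself built on action-stability, Lemma~\ref{lem:actionStable}). Your forward induction tracking the \emph{shape} of every $v'_\ell$ is a cleaner and more self-contained organization of the same structural facts: it proves a slightly stronger statement (the observation prefix determines the shape, not merely the owner), and it does so without appealing to action-stability at all, since the shape-transition table you give follows directly from the definition of $\delta'$.
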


The proof of this lemma is based on the next useful property. The proof of both lemmas is given in Appendix~\ref{app:Action-stability}.

\begin{lemm}[\Visibleactionstability of $\Obs$] \label{lem:actionStable}
The observation function $\Obs$ is \emph{\visibleactionstable} for player~$P_1$, that is, in $G'$, for every four states $v'_1, v'_2, u'_1, u'_2 \in V'$ and two actions $a'_1, a'_2 \in A'$ such that $\delta'(v'_1, a'_1) = u'_1$, $\delta'(v'_2, a'_2) = u'_2$, and $\Obs(v'_1) = \Obs(v'_2)$,
    \begin{itemize}
        \item if $a'_1 = a'_2$, then $\Obs(u'_1) = \Obs(u'_2)$,
        \item if $a'_1, a'_2$ are visible actions and $\Obs(u'_1) = \Obs(u'_2)$, then $a'_1 = a'_2$. 
    \end{itemize}
\end{lemm}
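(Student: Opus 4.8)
The plan is a one-step case analysis driven by two orthogonal features of Definition~\ref{def:Obs-of-the-PCP-game}: the observation of any non-initial state of $G'$ is exactly its $G$-component, and each action of $A'$ is playable only from states of a single structural kind. The latter holds because the sets $\APun$, $\AC$, $\APd$ are pairwise distinct and, reading Definition~\ref{def:pcp-game-structure-transitions}, each of them occurs in $\delta'$ only in connection with one kind of source state: $\APun = A_0$ labels moves out of $\VPun$ ($G$-states of $P_1$), $\APd = A \setminus A_0$ labels the decision-phase moves out of action-states, and $\AC$ splits into proposals $(a,i)$ out of $G$-states of $C$ and gain profiles $\bar g'$ out of $v'_0$ or out of player-states. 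I would first record, for each action, these admissible sources together with the $G$-component of the resulting successor; this is a direct transcription of $\delta'$.

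For the first item (determinacy), the key observation is that the $G$-component of a successor is a function of the action and the $G$-component of its source, so $\Obs(v'_1) = \Obs(v'_2)$ together with $a'_1 = a'_2$ forces $\Obs(u'_1) = \Obs(u'_2)$. Concretely: if $a'_1 = a'_2 \in A_0$, both sources lie in $\VPun$ with common $G$-component $v \in V_0$ and both successors have $G$-component $\delta(v,a'_1)$; if $a'_1 = a'_2 \in A \setminus A_0$, both sources are action-states with $G$-component $v$ and both successors have $G$-component $\delta(v,a'_1)$, whether $P_2$ accepts or deviates; if $a'_1 = a'_2$ is a proposal $(a,i) \in \AC$, both sources are $G$-states of $C$ with $G$-component $v$ and the successors retain $v$; and if $a'_1 = a'_2$ is a gain profile, then either both sources equal $v'_0$ (whose observation is unique) with successors $(v_0, \bar g')$, or both are player-states with $G$-component $v$, and the successors again retain $v$. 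In every case the successors share their $G$-component, hence their observation.

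For the second item (injectivity on visible actions), I would use that the visible actions are exactly $\APun \cup \APd = A_0 \cup (A \setminus A_0)$, none of which is playable from $v'_0$; so the common source observation is some $v \in V$ and the common successor observation some $w \in V$. A mixed case with $a'_1 \in A_0$ and $a'_2 \in A \setminus A_0$ is impossible, since it would force $v \in V_0$ (source in $\VPun$) and simultaneously $v \in V \setminus V_0$ (source an action-state of $\VPd$), contradicting the disjointness of the $V_i$. In the two remaining cases both actions lie in the same $A_i$ and, reading off the successors, $\delta(v,a'_1) = w = \delta(v,a'_2)$; the \actionstability of $G$ (condition~2 of Definition~\ref{def:game_structure}) then yields $a'_1 = a'_2$.

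The argument is essentially bookkeeping once the reading of $\delta'$ is fixed; I do not expect any single hard step. The one point requiring care is to keep the case split exhaustive and, for each action, to check that observation-equivalent sources are forced to be of the same structural kind, so that the successor's $G$-component is genuinely a function of the source observation and the action. The crux is therefore the alignment between the one-step structure of $\delta'$ and the coarse observation $\Obs$, together with the reduction of injectivity to the \actionstability already assumed for $G$.
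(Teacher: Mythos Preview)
Your proposal is correct and follows essentially the same approach as the paper: a case analysis on the action type for the first item, and a reduction to the \actionstability of $G$ for the second. Your write-up is in fact slightly more explicit than the paper's (you spell out the exclusion of the mixed case $a'_1\in A_0$, $a'_2\in A\setminus A_0$ and the $v'_0$ case, which the paper handles more tersely), but the structure and key ideas are the same.
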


The \visibleactionstability is the counter-part of the \actionstability of a game structure (see Definition~\ref{def:game_structure}) in presence of imperfect information. Informally, this property states that the observation function $\Obs$ is able to distinguish between states reached via visibly different actions.

We now proceed to the proof of Theorem~\ref{thm:equivalence}. Towards proving this theorem, we need to describe how strategy profiles relate in both games. Specifically, we want to highlight how the game structure of the \pcp game enables to embed both the outcome of a strategy profile in $G$, the possible deviations from this profile, and the evaluation of whether they are profitable or not. We refer the reader to Figure~\ref{fig:prover-game-play} to remember the game structure of the \pcp game.

We start by describing a one-to-one correspondence between strategies of player~$0$ in $\mathcal G$ and observation-based strategies of $P_1$ in \pcpG. This correspondence uses the concept of simulated history defined in Definition~\ref{def:sim_play_in_G}.

\begin{defi}[Simulation of player~$0$]\label{defi:sim_of_p_0}
Let $\sigma_0$ be a strategy of player~$0$ in $\mathcal G$.
Let $\tau_{P_1}$ be the strategy of $P_1$ in \pcpG such that $\tau_{P_1}(h') = \sigma_0(\simu{h'})$ for all $h' \in \Histi{G'}{P_1}$. We call $\tau_{P_1}$ the \emph{simulation} of $\sigma_0$ and we say that $\sigma_0$ is \emph{simulated by} $\tau_{P_1}$. 
\end{defi}

\begin{lemm}[Player $0$ - $P_1$ strategy correspondence] \label{lem:correspondanceStrat}
Let $\sigma_0$ be a strategy of player~$0$ in $\mathcal G$. Then the strategy $\tau_{P_1}$ that simulates $\sigma_0$ is observation-based. Conversely, given an observation-based strategy $\tau_{P_1}$ of $P_1$ in \pcpG, there exists a strategy $\sigma_0$ of player~$0$ in $G$ such that $\tau_{P_1}$ is the simulation of $\sigma_0$.
\end{lemm}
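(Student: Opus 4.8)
The plan is to isolate a single structural fact about \pcpG and deduce both directions from it. The fact is that, restricted to $P_1$-histories, the observation and the simulated history carry \emph{exactly the same information}: for all $h'_1, h'_2 \in \Histi{G'}{P_1}$,
\[ \Obs(h'_1) = \Obs(h'_2) \iff \simu{h'_1} = \simu{h'_2}. \]
The intuition is that $\Obs$ reveals precisely the $G$-component of every state together with the actions of $P_1$ and $P_2$ (the actions of $C$ being mapped to $\sharp$), whereas $\simu{\cdot}$ records precisely the $G$-components of the $G$-states and the actions of $P_1$ and $P_2$. The data that $\Obs$ hides from $P_1$ — the proposals of $C$, the accept/deviate flag carried by the player-component, and all gain-components — is exactly the data that $\simu{\cdot}$ discards. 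So the two quantities are inter-derivable.

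First I would prove this claim by induction along the rigid phase structure of \pcpG (Figure~\ref{fig:prover-game-play}), fixed by Definitions~\ref{def:pcp-game-structure-states} and~\ref{def:pcp-game-structure-transitions}. For the direction $\Rightarrow$, I would show by induction on the position $\ell$ that $\Obs(h'_1)=\Obs(h'_2)$ forces the two states at position $\ell$ to be of the same kind ($G$-state owned by the same player, action-state, or player-state) and to share the same $G$-component. Owner agreement comes from the \pstability of $\Obs$ (Lemma~\ref{lem:playerStable}); the kind and $G$-component are then pinned down deterministically by $\delta'$, case by case: from a $G$-state with $G$-component in $V_0$ a visible $P_1$-action is read (hence equal in both histories, since visible actions are their own observation) and leads to a $G$-state; from a $G$-state with $G$-component outside $V_0$ a $\sharp$-action leads to an action-state with the same $G$-component; from an action-state a visible $P_2$-action is read and leads to a player-state; and from a player-state a $\sharp$-action leads back to a $G$-state with the same $G$-component. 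Thus the two histories have the same block decomposition and, collecting $G$-states' $G$-components and visible actions, the same simulated history. The direction $\Leftarrow$ is the same bookkeeping run backwards: from $\simu{h'_1}=\simu{h'_2}$ one reconstructs an identical observation sequence, since each step whose $G$-component $v$ lies in $V_0$ forces the observation block $v\,a_0\,v'$, while each step whose $G$-component lies outside $V_0$ forces the fixed pattern $v\,\sharp\,v\,x\,w\,\sharp\,w$ determined entirely by $v$, the $P_2$-action $x$, and $w=\delta(v,x)$; the hidden data never affects the observation.

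With the claim in hand, both assertions follow. For the first (every simulation is observation-based): take $h'_1 \in \Histi{G'}{P_1}$ and $h'_2$ with $\Obs(h'_1)=\Obs(h'_2)$; by \pstability $h'_2 \in \Histi{G'}{P_1}$ as well, and by the claim $\simu{h'_1}=\simu{h'_2}$, whence $\tau_{P_1}(h'_1)=\sigma_0(\simu{h'_1})=\sigma_0(\simu{h'_2})=\tau_{P_1}(h'_2)$. For the converse, given an observation-based $\tau_{P_1}$, I would set $\sigma_0(g) := \tau_{P_1}(h')$ for any $h' \in \Histi{G'}{P_1}$ with $\simu{h'}=g$. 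Such a witness exists for every $g \in \Histi{G}{0}$: build it by fixing any initial gain profile with $g_0=0$, letting $P_1$ replay the player-$0$ actions of $g$, letting $C$ propose the remaining actions of $g$, which $P_2$ always accepts (so the adjusting phase keeps the gain constant); this history ends in a $P_1$-state and simulates exactly $g$. Well-definedness follows from the $\Leftarrow$ direction of the claim: if $\simu{h'_1}=\simu{h'_2}$ then $\Obs(h'_1)=\Obs(h'_2)$, so $\tau_{P_1}(h'_1)=\tau_{P_1}(h'_2)$ because $\tau_{P_1}$ is observation-based. Finally $\tau_{P_1}$ is the simulation of this $\sigma_0$, since taking $h'$ itself as the witness for $g=\simu{h'}$ gives $\sigma_0(\simu{h'})=\tau_{P_1}(h')$ for every $h' \in \Histi{G'}{P_1}$.

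The main obstacle is the structural claim, and specifically making the induction of its $\Rightarrow$ direction airtight: one must argue that observationally equivalent histories stay synchronized phase by phase, so that their block decompositions coincide and the projections defining $\simu{\cdot}$ select corresponding positions. \Pstability (Lemma~\ref{lem:playerStable}) handles the owner, and the \visibleactionstability of $\Obs$ (Lemma~\ref{lem:actionStable}) is the clean tool ensuring that equally observed visible actions are genuinely equal and keep successors equally observed, which is what propagates the induction; the rest is the routine case analysis over $\delta'$. A minor final point is that $\sigma_0$ is a legal strategy, which holds because $\tau_{P_1}(h') \in \APun = A_0$ is an action available at the ($V_0$-)state ending $\simu{h'}$.
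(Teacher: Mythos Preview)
Your argument is correct and follows the same approach as the paper: the key structural fact you isolate, that $\Obs(h'_1)=\Obs(h'_2) \iff \simu{h'_1}=\simu{h'_2}$ on $P_1$-histories, is precisely the one-to-one correspondence between histories in $G$ and observed $P_1$-histories that the paper invokes in one line. You have simply spelled out the induction and the converse construction in detail where the paper leaves them implicit.
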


\begin{proof}
The proof follows from the fact that there is a one-to-one correspondence between histories in $G$ and observed histories for $P_1$ in $G'$:
Indeed, $P_1$ only observes the $G$-component of each state in $G'$, which consists of states of $G$, and for each transition in $G'$ from a state not belonging to $P_1$, this $G$-component is repeated during the interaction phase between $C$ and $P_2$. 
Furthermore, the actions available to $P_1$ at an observed history, which corresponds to a unique history in $G$, are exactly the same as to player~$0$ at this corresponding history in $G$.
\end{proof}

We now turn to the transfer of strategies of $C$ in \pcpG to strategy profiles in $\mathcal G$. As already stated in Remark~\ref{rema:several_plays_in_pcp_for_one_in_G}, there may exist several histories in $G'$ that simulate the same history in $G$, because $P_2$ is the player that determines the next state of the simulating play, regardless of the proposal of $C$.
However, given the simulated history in $G$ and a strategy of $C$, we can recover the actual corresponding history in \pcpG as stated in the next lemma. From this lemma, we will derive how to transfer strategies of $C$ in Definition~\ref{def:simulated_profile}.

\begin{lemm}[Simulating history compatible with a Challenger strategy]\label{lem:tau_C_histG->histG'}
Let $\tau_C$ be a strategy of $C$ in \pcpG and $h$ be a history in $\mathcal{G}$. Then there exists a unique history $h'$ compatible with $\tau_C$, ending in a $G$-state, and such that $h = \simu{h'}$.
\end{lemm}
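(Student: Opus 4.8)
The plan is to construct $h'$ explicitly by induction on the length of $h = v_0 a_0 v_1 a_1 \cdots v_k$, and to argue that at every step the extension is \emph{forced}, so that the very same induction yields existence and uniqueness at once. The guiding observation is that, by Definition~\ref{def:sim_play_in_G}, the simulated history $\simu{h'}$ records exactly the $G$-components of the $G$-states together with the actions of $P_1$ and $P_2$. Hence any $h'$ with $\simu{h'} = h$ must visit $G$-states whose $G$-components are precisely $v_0, v_1, \dots, v_k$ in this order, and the $P_1$/$P_2$ actions it records must be precisely $a_0, \dots, a_{k-1}$; the sequence of $G$-states is therefore pinned down by $h$, and only the interleaved $C$-moves and gain-components remain to be reconstructed.

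For the base case I would note that the only transitions out of the initial $C$-state $v'_0$ are the gain choices $\bar{g}$ with $g_0 = 0$, each leading to the $G$-state $(v_0, \bar{g})$; compatibility with $\tau_C$ forces $\bar{g} = \tau_C(v'_0)$, so $h' = v'_0\, \bar{g}\, (v_0, \bar{g})$ is the unique candidate for $h = v_0$. For the inductive step, suppose the unique history $h'_k$ simulating $v_0 a_0 \cdots v_k$ and ending in the $G$-state $(v_k, \bar{g}^{(k)})$ has been built, and consider $h_{k+1} = h_k\, a_k\, v_{k+1}$. I distinguish two cases according to the owner of $v_k$ in $G$. If $v_k \in V_0$, then $(v_k, \bar{g}^{(k)})$ belongs to $P_1$, the recorded action must be $a_k$, and a single transition leads to the $G$-state $(v_{k+1}, \bar{g}^{(k)})$; this segment involves no move of $C$, so compatibility is automatic and uniqueness immediate. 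If $v_k \in V_i$ with $i \neq 0$, the passage to the next $G$-state follows the fixed pattern of Figure~\ref{fig:prover-game-play}: $C$ proposes an action-component out of $(v_k, \bar{g}^{(k)})$, $P_2$ reacts in the decision phase, and $C$ updates the gain in the adjusting phase out of the resulting player-state. Here the two moves of $C$ are forced by $\tau_C$, whereas the move of $P_2$ must record the action $a_k$; since $a_k \in A_i$ and $\delta(v_k, a_k) = v_{k+1}$, the \actionstability of $G$ (Condition~2 of Definition~\ref{def:game_structure}) guarantees that $a_k$ is the \emph{unique} $P_2$ action leading to a state with $G$-component $v_{k+1}$, so the whole segment, and hence $h'_{k+1}$, is uniquely determined.

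I expect the only delicate part to be the bookkeeping through the action- and player-states when $v_k \notin V_0$: one must check that the player-component recorded ($\varnothing$ when $P_2$ accepts, $i$ when $P_2$ deviates) is itself determined once $C$'s proposal is fixed by $\tau_C$, and that in either case $C$'s subsequent gain choice is again prescribed by $\tau_C$ (being forced to repeat $\bar{g}^{(k)}$ in the accepting case, and to be a valid gain with $g'_i \le g^{(k)}_i$ in the deviating case), so that $\bar{g}^{(k+1)}$ is well defined. The crux is thus disentangling which of the three moves in the segment belongs to $C$ (and is constrained by $\tau_C$) and which belongs to $P_2$ (and is constrained by the simulation requirement via action-unicity); everything else is a direct unfolding of Definition~\ref{def:pcp-game-structure-transitions}. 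The induction then produces a canonical $h'$ ending in a $G$-state with $\simu{h'} = h$, and simultaneously shows that any history meeting the three requirements coincides with it, establishing both existence and uniqueness.
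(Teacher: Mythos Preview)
Your proposal is correct and follows essentially the same approach as the paper: induction on the length of $h$, with the same base case and the same two-case split at the inductive step according to whether $v_k \in V_0$ or not. One minor remark: your appeal to \actionstability (Condition~2 of Definition~\ref{def:game_structure}) is superfluous, since by Definition~\ref{def:sim_play_in_G} the simulated history already records the $P_2$ action itself, so $P_2$'s move is directly forced to be $a_k$ rather than merely forced to reach $G$-component $v_{k+1}$; the paper's proof accordingly does not invoke it.
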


\begin{proof}
We proceed by induction on the length of the history $h$.
If $h = v_0$, then the unique corresponding history is $h' = v'_0 ~\bar g ~(v_0, \bar g)$ with $\bar g = \tau_C(v'_0)$. Assume now that Lemma~\ref{lem:tau_C_histG->histG'} is true for history $h$, and $h'$ is its unique corresponding history in \pcpG. Let us prove the lemma for history $hav$. We proceed as follows:
\begin{itemize}
    \item Suppose the last state $u$ of $h$ belongs to player~$0$. 
    Then the last state $u' = (u, \bar{g}) $ of $h'$ belongs to $P_1$. 
    Thus the only action possible for $P_1$ to continue simulating $hav$ is action $a$ followed by the $G$-state $(v,\bar g)$. The required history is therefore equal to $h' ~a ~(v,\bar g)$ that ends with a $G$-state. 
    
    \item Suppose the last state $u$ of $h$ belongs to player~$i$, with $i \neq 0$. 
    Then, by induction hypothesis, the last state of $h'$ is a $G$-state of the form $(u, \bar{g})$, and belongs to Challenger.
    The next action $(a', i)$ is thus determined by $\tau_C$, which in turn determines the next action-state $(u,a',\bar{g})$.
    \begin{itemize}
    \item If $a' = a$, we are in the case where $P_2$ accepts Challenger's proposal, thus plays $a'$, yielding the successor state $(v, \varnothing, \bar{g})$. This state is followed by the action $\bar{g}$ and the state $(v, \bar{g})$, by construction of \pcpG. The required history is then equal to $h'~(a, i)~(u, a, \bar{g})~a~ (v, \varnothing, \bar{g})~ \bar g ~(v, \bar{g})$. It ends with a $G$-state and simulates $hav$.
    
    \item If $a' \neq a$, we are in the case where $P_2$ refuses Challenger's proposal, thus plays $a$, yielding the successor state $(v, i, \bar{g})$, followed by the action $\bar{g}'$ determined by $\tau_C$ and then the state $(v, \bar{g}')$. This leads to the history $h'~(a', i)~(u,a', \bar{g})~a~(v, i, \bar{g})~ \bar{g}'~ (v, \bar{g}')$, which ends with a $G$-state and simulates $hav$.
    \end{itemize}
\end{itemize}
\end{proof}

\begin{defi}[Simulation of the players $i$, $i \neq 0$]\label{def:simulated_profile}
Let $\tau_C$ be a strategy of $C$. For every player~$i \neq 0$ in $G$, and every history $h \in \Histi{G}{i}$, let $h'$ be the unique history of \pcpG compatible with $\tau_C$, that ends in a $G$-state, and such that $h = \simu{h'}$. Then we define $\sigma_i(h) = a$ such that $\tau_C(h') = (a,i)$. We say that $\bar{\sigma}_{-0} = (\sigma_i)_{i \in \Pi\setminus\{0\}}$ is \emph{simulated by} $\tau_C$. 
\end{defi}

Let us finally propose a useful particular strategy for $P_2$, where he accepts every action proposal of $C$ and never deviates.

\begin{defi}[Accepting strategy of $P_2$]\label{def:accepting_strategy} 
Let $\tau_{\mathrm{acc}}$ be the strategy of $P_2$ such that for every history $h' \in \Histi{G'}{P_2}$, if its last state is $(v, a, \bar{g})$, then $\tau_{\mathrm{acc}} (h') = a$. We call $\tau_{\mathrm{acc}}$ the \emph{accepting strategy} of $P_2$. 
\end{defi}

We are now ready to prove the first direction of Theorem~\ref{thm:equivalence}. 

\begin{lemm}\label{lem:SPE-NCRSp->pcp}
Let $\sigma_0$ be a solution to the \NCRS problem of $\mathcal G$.
Then, in \pcpG, there exists a strategy $\tau_{P_1}$ of $P_1$, such that for every strategy $\tau_C$ of $C$, there exists a strategy $\tau_{P_2}$ of $P_2$, such that 
$\langle \tau_{P_1}, \tau_C, \tau_{P_2} \rangle_{v'_0} \in W_P$.
\end{lemm}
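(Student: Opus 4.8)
The plan is to take $\tau_{P_1}$ to be the \emph{simulation} of $\sigma_0$ (Definition~\ref{defi:sim_of_p_0}), which is observation-based by Lemma~\ref{lem:correspondanceStrat}. Fixing an arbitrary strategy $\tau_C$ of $C$, I would read off the partial profile $\bar{\sigma}_{-0} = (\sigma_i)_{i \neq 0}$ that $\tau_C$ simulates (Definition~\ref{def:simulated_profile}) and then build $\tau_{P_2}$ by distinguishing whether $\bar{\sigma} = (\sigma_0, \bar{\sigma}_{-0})$ is a $0$-fixed SPE of $\mathcal{G}$. The feature to exploit is that $\tau_{P_2}$ may depend on $\tau_C$, so $P_2$ can \emph{react} to the gain profiles that $C$ commits to.

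If $\bar{\sigma}$ is a $0$-fixed SPE, then, since $\sigma_0$ solves the \NCRS problem, its outcome $\langle \bar{\sigma} \rangle_{v_0}$ is winning for player~$0$. I would answer with the accepting strategy $\tau_{\mathrm{acc}}$ (Definition~\ref{def:accepting_strategy}). The resulting play $\rho = \langle \tau_{P_1}, \tau_C, \tau_{\mathrm{acc}} \rangle_{v'_0}$ then simulates $\langle \bar{\sigma} \rangle_{v_0}$, every player-state carries the empty player-component, and the gain-component stays equal to the initial profile $\bar{g}$ picked by $C$, which has $g_0 = 0$ by construction of $\delta'$. As $\gainSim{\rho}_0 = 1 \neq 0 = g_0$, we get $\bar{g} \neq \gainSim{\rho}$ at every player-state, so condition $(iP)$ holds and $\rho \in W_P$.

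If $\bar{\sigma}$ is \emph{not} a $0$-fixed SPE, there is a history $hv$ compatible with $\sigma_0$ and a player $i \neq 0$ with a profitable deviation $\sigma'_i$: the outcome of $(\sigma'_i, \bar{\sigma}_{-i})$ in $\mathcal{G}_{\upharpoonright hv}$ is winning for~$i$, while that of $\bar{\sigma}$ is losing for~$i$. Letting $v'$ be the first $i$-controlled state where $\sigma'_i$ and $\sigma_i$ disagree, I would have $\tau_{P_2}$ first steer the simulated play to $v'$ along this common prefix --- accepting $C$'s proposal when it matches and deviating otherwise, which costs only finitely many moves and is possible because $P_2$ controls every non-$0$ action while $P_1$ plays $\sigma_0$. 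At $v'$, $P_2$ inspects the gain-component $\bar{g}$ that $C$ has committed to and branches: if $\bar{g}_i = 0$ it deviates at $v'$ and afterwards realises $(\sigma'_i, \bar{\sigma}_{-i})$, yielding $\gainSim{\rho}_i = 1$; if $\bar{g}_i = 1$ it accepts and realises $\bar{\sigma}$ from $hv$, yielding $\gainSim{\rho}_i = 0$.

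The verification rests on the fact that past $v'$ the only player who may deviate is~$i$ (the others are driven by $\bar{\sigma}_{-i}$, reproduced by accepting $C$'s proposals, which are exactly $\sigma_j$ by the uniqueness in Lemma~\ref{lem:tau_C_histG->histG'}), and that the adjusting rule $g'_i \leq g_i$ freezes the $i$-th gain-component at the value chosen at $v'$. In the branch $\bar{g}_i = 1$ there is no deviation after $v'$, the gain stabilises with $g_i = 1 \neq 0 = \gainSim{\rho}_i$, and $(iP)$ applies; in the branch $\bar{g}_i = 0$ the frozen value $0$ satisfies $0 < 1 = \gainSim{\rho}_i$, so $(iiP)$ holds if $i$ deviates infinitely often and $(iP)$ holds once its deviations cease. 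I expect this Case~B to be the main obstacle: it requires checking that $P_2$ can reach the witnessing subgame, that accepting genuinely reproduces the other players' strategies, and above all that the \emph{adaptive} branch on $\bar{g}_i$ combined with the monotone gain adjustment always lands in $(iP)$ or $(iiP)$, whether player~$i$ moves finitely or infinitely often.
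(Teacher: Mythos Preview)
Your proposal is correct and follows essentially the same approach as the paper: take $\tau_{P_1}$ as the simulation of $\sigma_0$, read off $\bar{\sigma}_{-0}$ from $\tau_C$, use $\tau_{\mathrm{acc}}$ when $\bar{\sigma}$ is a $0$-fixed SPE, and otherwise steer $P_2$ to a witnessing subgame and branch on Challenger's prediction for player~$i$ to force $(iP)$ or $(iiP)$. The only cosmetic difference is that the paper branches at the history $h'$ corresponding to $h$ (where the NE fails) rather than at your first disagreement point $v'$, but since $P_2$ accepts between $h$ and $v'$ the gain is unchanged and the two branch points coincide.
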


\begin{proof}
Let $\tau_{P_1}$ be the observation-based strategy of $P_1$ that simulates $\sigma_0$, as defined in Definition~\ref{defi:sim_of_p_0}.
Let $\tau_C$ be a strategy of $C$.
Let $\bar{\sigma}_{-0}$ be the strategy profile obtained from $\tau_C$, as defined in Definition~\ref{def:simulated_profile}.
There are two possibilities, either $(\sigma_0, \bar{\sigma}_{-0})$ is a 0-fixed SPE in $\mathcal{G}$, or
it is not a 0-fixed SPE.
Let us show that in both cases, there exists $\tau_{P_2}$ such that $\langle \tau_{P_1}, \tau_C, \tau_{P_2} \rangle_{v'_0} \in W_P$.

\begin{enumerate}
    \item Suppose $(\sigma_0, \bar{\sigma}_{-0})$ is a 0-fixed SPE in $\mathcal{G}$. 
    Let $\tau_{P_2}$ be $\tau_{\mathrm{acc}}$, the accepting strategy of $P_2$.
    By definition of $\tau_{P_2}$, the simulated play of $\langle \tau_{P_1}, \tau_C, \tau_{P_2} \rangle_{v'_0}$ is $\langle \sigma_0, \bar{\sigma}_{-0}\rangle_{v_0}$.
    In that case, since $\sigma_0$ is a solution to the \NCRS problem of $\mathcal{G}$, we know that $\langle \sigma_0, \bar{\sigma}_{-0}\rangle_{v_0} \in W_0$.
    However, by construction of \pcpG and the fact that $\tau_{P_2}$ is the accepting strategy of $P_2$, we know that $C$ has predicted a loss for player~$0$ from the start, and has never had to adjust his prediction. Thus, we are in the case $(iP)$ of the winning condition for the Provers.
    
    \item Suppose now $(\sigma_0, \bar{\sigma}_{-0})$ is not a 0-fixed SPE in $\mathcal{G}$. 
    This means that it is not a 0-fixed NE in some subgame of $\mathcal G$.
    Let $h$ be a history compatible with $\sigma_0$ such that $(\sigma_0, \bar{\sigma}_{-0})_{\upharpoonright h}$ is not an NE.
    Let player~$i \neq 0$ be a player in $\mathcal{G}$ that has a profitable deviation $\sigma'_i$ in the subgame $\mathcal{G}_{\upharpoonright h}$ starting after $h$. 
    In \pcpG, by Lemma~\ref{lem:tau_C_histG->histG'}, there exists a unique history $h'$ ending in a $G$-state, compatible with $\tau_C$, such that $\simu{h'} = h$. Since $h$ is compatible with $\sigma_0$, the history $h'$ is also compatible with $\tau_{P_1}$ by Definition~\ref{defi:sim_of_p_0}.
    We distinguish two cases, depending on Challenger's prediction for player~$i$ at~$h'$.
    \begin{enumerate}
        \item Suppose $C$ predicts a loss for player~$i$ at $h'$. 
        In this case, the strategy $\tau_{P_2}$ is defined as the strategy that first makes $h'$ be a prefix of the play, and then simulates $\sigma'_i$, the profitable deviation of player~$i$ (thus accepting all propositions of $C$ for players other than player~$i$). 
        Formally, let $h'_1 \in \Histi{G'}{P_2}$ and let $h_1 = \simu{h'_1}$. If $h_1 \sqsubset h$, we define $\tau_{P_2}(h'_1) = a$ such that $\simu{h'_1a} \sqsubseteq h$. 
        If $h \sqsubseteq h_1$ and $h'_1$ ends with the action-state $(v,a,\bar g)$ with $v \in V_j$ for some $j \neq 0$, we define $\tau_{P_2}(h'_1) = a$ if $j \neq i$ and $\tau_{P_2}(h'_1) = \sigma'_i(\simu{h'_1})$ if $j = i$. Otherwise, $\tau_{P_2}(h'_1)$ is defined arbitrarily. Recall that $C$ predicts a loss for player~$i$ at $h'$. By construction of \pcpG, the only possible gain prediction that Challenger can propose when $P_2$ refuses his proposal is a loss for player~$i$, as it is not possible to strictly decrease a gain of~$0$.
        Since $P_2$ does not make any other player deviate, Challenger must thus remain predicting a loss for player~$i$ forever.
        However, since $\sigma'_i$ is a profitable deviation, this means that the outcome of the profile $( \sigma_0, \bar{\sigma}_{- \lbrace0,i \rbrace}, \sigma'_i )_{\upharpoonright h}$ is winning for player~$i$ in the subgame ${\mathcal G}_{\upharpoonright h}$.
        If $\sigma'_i$ has a finite number of deviation points from $\sigma_i$, then the winning condition $(iP)$ of Definition~\ref{cond_iP} is satisfied.
        If $\sigma'_i$ has an infinite number of deviation points from $\sigma_i$, then the winning condition $(iiP)$ of Definition~\ref{cond_iiP} is satisfied. 
        
        \item Suppose $C$ predicts a win for player~$i$ at $h'$.
        In that case, the strategy $\tau_{P_2}$ is defined as the strategy that first makes $h'$ be a prefix of the play, and then switches to $\tau_{\mathrm{acc}}$.
        We skip the formal definition of $\tau_{P_2}$.
        The resulting simulated play in the subgame $\mathcal{G}_{\upharpoonright h}$ is the outcome of $(\sigma_0, \bar{\sigma}_{-0})_{\upharpoonright h}$ which we know is losing for player~$i$ (otherwise there would not exist any profitable deviation from $h$). 
        As $C$ predicts a win for player~$i$, the  winning condition $(iP)$ of Definition~\ref{cond_iP} is satisfied.
    \end{enumerate}
\end{enumerate}
\end{proof}

Let us now turn to the other direction of Theorem~\ref{thm:equivalence}.

\begin{lemm}\label{lem:pcp->SPE-NCRSp}
Assume there is no solution to the \NCRS problem in $\mathcal G$.
Then, in \pcpG, for every strategy $\tau_{P_1}$ of $P_1$, there exists a strategy $\tau_C$ of $C$, such that for every strategy $\tau_{P_2}$ of $P_2$, we have $\langle \tau_{P_1}, \tau_C, \tau_{P_2} \rangle_{v'_0} \in W_C$.
\end{lemm}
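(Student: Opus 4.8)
The plan is to dualize Lemma~\ref{lem:SPE-NCRSp->pcp}: instead of turning a solution into a winning strategy for the Provers, I turn the \emph{absence} of a solution into a winning Challenger strategy. Since $P_1$ only plays observation-based strategies, Lemma~\ref{lem:correspondanceStrat} guarantees that a given $\tau_{P_1}$ is the simulation of a unique strategy $\sigma_0$ of player~$0$ in $\mathcal G$. As the \NCRS problem has no solution, Definition~\ref{def:SPE-NCRSp} provides a subgame-perfect response $\bar\sigma_{-0}=(\sigma_i)_{i\neq0}$ to $\sigma_0$ whose outcome $\langle(\sigma_0,\bar\sigma_{-0})\rangle_{v_0}$ is losing for player~$0$. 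For a history $h$ compatible with $\sigma_0$, write $\hat g_h\in\{0,1\}^{|\Pi|}$ for the gain profile of the outcome of $(\sigma_0,\bar\sigma_{-0})_{\upharpoonright h}$ in the subgame $\mathcal G_{\upharpoonright h}$. I define $\tau_C$ through the simulated history $h=\simu{h'}$: at a $G$-state owned by $C$ it proposes $(\sigma_i(h),i)$ as in Definition~\ref{def:simulated_profile}; at $v'_0$ it announces $\hat g_{v_0}$, which is legal because the chosen response is losing for player~$0$, so $\hat g_{v_0,0}=0$; and in the adjusting phase after a deviation reaching simulated history $h''$ it announces $\hat g_{h''}$.

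I would first verify the invariant that, along any play consistent with $\tau_C$, every $G$-state carries the gain $\hat g$ of its simulated history, while every action- or player-state carries the gain of the $G$-state preceding it. This is immediate at moves of $P_1$ and at acceptances, where the simulated history is extended one step along the SPE outcome so that $\hat g$ is unchanged, and it holds by construction after a deviation. The only thing to check for $\tau_C$ to be a valid strategy is the adjusting-phase requirement $g'_i\le g_i$, which here reads $\hat g_{h'',i}\le\hat g_{h,i}$ for a one-step deviation of player~$i$ from $h$ to $h''$; this is exactly the statement that such a one-step deviation (play the deviating action, then follow $\sigma_i$) is not profitable, which holds because $(\sigma_0,\bar\sigma_{-0})$ is a $0$-fixed NE in $\mathcal G_{\upharpoonright h}$.

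Fix any $\tau_{P_2}$ and set $\rho=\langle\tau_{P_1},\tau_C,\tau_{P_2}\rangle_{v'_0}$; I classify $\rho$ by the set $D$ of players that $P_2$ makes deviate infinitely often. If $|D|\ge2$, condition $(iiiC)$ holds immediately. If $D=\varnothing$, then after the last deviation, at some history $h^\dagger$, the simulated play follows the SPE outcome of $\mathcal G_{\upharpoonright h^\dagger}$; by the invariant all subsequent gain-components equal $\hat g_{h^\dagger}$, and $\hat g_{h^\dagger}=\gainSim{\rho}$ because the objective of $\mathcal G_{\upharpoonright h^\dagger}$ incorporates the prefix $h^\dagger$, so condition $(iC)$ holds. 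If $D=\{i\}$, let $n$ bound the indices after which $P_2$ only makes player~$i$ deviate. Consider any player-state after index~$n$; it is produced by a deviation of player~$i$ at some $G$-history $h_m$ (compatible with $\sigma_0$ since $P_1$ simulates $\sigma_0$), and from $h_m$ on only player~$i$ deviates, so the tail of $\simu{\rho}$ after $h_m$ is a unilateral deviation of player~$i$ from $(\sigma_0,\bar\sigma_{-0})_{\upharpoonright h_m}$. The $0$-fixed NE property in $\mathcal G_{\upharpoonright h_m}$ then yields $\gainSim{\rho}_i\le\hat g_{h_m,i}$; since the player-state carries the pre-deviation gain $\hat g_{h_m}$, its $i$-th component is $\ge\gainSim{\rho}_i$, which is exactly condition $(iiC)$. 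Hence $\rho\in W_C$ in all three cases.

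The step I expect to be the main obstacle is $(iiC)$: one must read off the \emph{right} gain component — the one recorded in the player-state, namely the SPE value of the subgame \emph{before} the deviation — and apply the subgame-perfect property at \emph{every} late subgame $\mathcal G_{\upharpoonright h_m}$, not only at the moment the infinite deviation begins, since there are infinitely many such player-states and each needs the bound $\hat g_{h_m,i}\ge\gainSim{\rho}_i$ on its own. The supporting bookkeeping — that acceptances and $P_1$-moves extend the simulated history along the SPE outcome and leave $\hat g$ unchanged, and that $P_2$'s infinitely many player-$i$ choices assemble into a single consistent deviating strategy $\sigma'_i$ of player~$i$ in $\mathcal G$ — is routine but must be in place for the NE applications to be legitimate.
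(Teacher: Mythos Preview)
Your proposal is correct and follows essentially the same approach as the paper: extract $\sigma_0$ from $\tau_{P_1}$, pick a $\sigma_0$-fixed SPE $(\sigma_0,\bar\sigma_{-0})$ with losing outcome for player~$0$, let $\tau_C$ propose $\bar\sigma_{-0}$ and announce the subgame-SPE gain $\hat g_h$ at each stage, and then classify $\rho$ by the set of players deviating infinitely often. The one organizational difference is in case $D=\{i\}$: the paper first argues that the $i$-th gain component eventually stabilizes (it can only decrease through $\{0,1\}$) and then applies the $0$-fixed NE property once at the stabilization point, whereas you apply the NE property separately at every late deviation history $h_m$; both yield $\bar g_{k,i}\ge\gainSim{\rho}_i$ for all late player-states with $i_k\neq\varnothing$, which is exactly $(iiC)$.
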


\begin{proof} 
Let $\tau_{P_1}$ be a strategy of $P_1$ in \pcpG.
Let $\sigma_0$ be the strategy of player~$0$ in $\mathcal G$ such that $\tau_{P_1}$ is its simulation, which exists and is unique by Lemma~\ref{lem:correspondanceStrat}.
By Theorem~\ref{th:SPE-existence}, there exists a $\sigma_0$-fixed SPE in $\mathcal{G}$. Among all those $\sigma_0$-fixed SPEs, one must have an outcome losing for player~$0$ as, by assumption, $\sigma_0$ is not a solution of the \NCRS problem in $\mathcal G$. Let $\bar{\sigma} = (\sigma_0, \bar{\sigma}_{-0})$ be such a $\sigma_0$-fixed SPE.
Let us now come back to \pcpG and define the following strategy $\tau_C$ for $C$:
\begin{enumerate}
    \item At history $v'_0$, we have $\tau_C(v'_0) = \bar{g}$, where $\bar{g}$ is the gain profile of $\bar{\sigma}$. Recall that $g_0 = 0$ by choice of the $0$-fixed SPE.
    \item At any history $h'$ that ends in a $G$-state, we define $\tau_C(h') = (a,i)$ such that $h = \simu{h'}$ ends in a state controlled by player~$i$ and $\sigma_i(h) = a$.
    \item At any history $h'$ that ends in a player-state, we define $\tau_C(h') = \bar{g}'$ such that $\bar g'$ is the gain profile of $\bar{\sigma}_{\upharpoonright \simu{h'}}$.
\end{enumerate}

Let us show that the strategy $\tau_C$ is well-defined with respect to the gains. First, at history $v'_0$, it respects the structural constraint of \pcpG that the predicted gain for player~$0$ must be $0$. Second, at any history $h'$ that ends in a player-state $(u,i,\bar{g})$ with $i \neq \varnothing$, the gain $\bar{g}' = \tau_C(h')$ must satisfy $g'_i \leq g_i$ by definition of the \pcp game. In this case, $h'$ is of the form 
$$h' = h'_1 ~(a,i)~(v,a,\bar{g})~ a'~ (u,i,\bar{g})$$
(Challenger proposes action $(a,i)$ and Prover~$2$ refuses this proposal by playing $a' \neq a$). Let $h = \simu{h'_1} = \simu{h'_1 ~(a,i)~ (v,a,\bar{g})}$, then $ha'u = \simu{h'}$. Moreover, by definition of $\tau_C$, $\bar{g}$ is the gain of  $\langle \bar{\sigma} \rangle_{\upharpoonright h}$ and $\bar{g}'$ is the gain of $\langle \bar{\sigma}\rangle_{\upharpoonright ha'u}$. Due to the deviation of $P_2$, and as $\bar{\sigma}$ is a $0$-fixed SPE, $g'_i \leq g_i$ as expected. 

Let now $\tau_{P_2}$ be a strategy of $P_2$.
Consider the play $\rho = \langle \tau_{P_1}, \tau_C, \tau_{P_2} \rangle_{v'_0}$.
If there are infinitely many deviations by two different players of $\mathcal G$ in the play $\rho$, then $\rho$ satisfies condition $(iiiC)$ of Definition~\ref{def:objectivesPCP} and is winning for Challenger, and $\rho \in W_C$.
Assume now that there is at most one player that $P_2$ makes deviate.
There are two cases: 
\begin{enumerate}
\item Suppose that $\tau_{P_2}$ prescribes a finite (possibly null) number of deviations along $\rho$, and then switches to the accepting strategy $\tau_{acc}$. Let $h'$ be the prefix of $\rho$ such that it ends with a player-state that is the last one with the player-component different from $\varnothing$ (or it ends at $(v_0,\bar{g})$ if there is no deviation at all). Then in the subgame of $\mathcal G$ starting after $\simu{h'}$, Challenger will always predict a gain profile equal to $\langle \bar{\sigma} \rangle_{\upharpoonright \simu{h'}}$ and Prover~$2$ will always accept the proposal. As this is also the gain profile of $\simu{\rho}$, $\rho$ satisfies condition $(iC)$ and is winning for Challenger.

\item Suppose that $\tau_{P_2}$ prescribes an infinite number of deviations for the same player~$i$ along $\rho$. Let $\rho = h'\rho_1$ be such that $h'$ ends in a player-state $(u,i,\bar{g})$  and that every player-state in $\rho_1$ has a player-component either equal to $\varnothing$ or $i$ and the gain-component $g_i$ for player~$i$ is stable along $\rho_1$. Such a suffix $\rho_1$ of $\rho$ exists as there are only two different gain values, only player~$i$ deviates in $\rho_1$, and there is no opportunity for $g_i$ to increase. 
If the stable gain-component $g_i$ of $\rho_1$ is $1$, we clearly have $g_i = 1 \geq \gainSim{\rho}_i$. Thus condition $(iiC)$ is satisfied, and Challenger wins. If this gain-component is $0$, recall that $\bar{\sigma}$ is a $0$-fixed SPE, thus it is an NE in particular in the subgame $\mathcal{G}_{\upharpoonright_{\simu{h'}}}$.
The outcome of ${\bar \sigma}_{\upharpoonright \simu{h'}}$ is losing for player~$i$:
By definition of $\tau_C$ (item $3.$), we also know that Challenger predicts a gain of $0$ for a player at $(u,i,\bar{g})$ only if the outcome of $\bar{\sigma}_{\upharpoonright \simu{h'}}$ is losing for this player. 
Any deviation from this outcome is thus also losing for player~$i$, including $\simu{\rho}$. 
Thus, condition $(iiC)$ is again satisfied, and Challenger wins. 

\end{enumerate}
\end{proof}

\subsection{The \texorpdfstring{\pcp} Game as a Rabin Game}\label{subsec:pr-g-parity}

From now on, we \emph{focus on parity games}. Recall that all $\omega$-regular objectives can be encoded as a parity objective. The particular case of reachability games is reported to the end of the paper (in Section~\ref{sec:reach_case}) and treated with a more fine-tuned approach in a way to get the complexity result announced in Theorem~\ref{thm:mainReach}. Given a parity game $\mathcal G$, we show that its corresponding \pcp game can be seen as a three-player game with the objective $W_P$ for the two Provers translated into a \emph{Rabin} objective. The approach is to use a deterministic automaton $\ObsAut$ that \emph{observes} the states of \pcpG. Then the \emph{synchronized product} of the game structure of \pcpG with this observer automaton $\ObsAut$ is equipped with a Rabin objective translating $W_P$ and thus leads to the announced three-player Rabin game.

We recall the notion of Rabin objective and its dual notion of Streett objective for a game structure $G'$ like the \pcpG \cite{DBLP:conf/dagstuhl/2001automata}.

\begin{defi}[Rabin/Streett objectives] \label{def:rabin-objective}
Let $(E_j,F_j)_{j \in J}$ be a set of pairs $(E_j,F_j)$ with  $E_j, F_j \subseteq V'$.
\begin{itemize}
\item The \emph{Rabin} objective $\Rabin((E_j,F_j)_j)$ is the set of plays $\rho \in \Plays{G'}$ 
such that there exists $j \in J$ with $\Inf(\rho) \cap E_j = \varnothing$ and $\Inf(\rho) \cap F_j \neq \varnothing$.
\item The \emph{Streett} objective $\Streett((E_j,F_j)_j)$ is the set of plays $\rho \in \Plays{G'}$ 
such that for all $j \in J$,  $\Inf(\rho) \cap E_j \neq \varnothing$ or $\Inf(\rho) \cap F_j = \varnothing$.
\end{itemize}
The \emph{size} of a Rabin/Streett objective is the number $|J|$ of its pairs.
\end{defi}

To determine who among the Provers and Challenger wins a play, one needs to know the following: 
first, whether there are more than one player in the original game that is made to deviate by $P_2$,
second, whether deviations are indeed profitable,
and finally, whether Challenger predicts gains accurately.
This can be done by keeping track in the observer automaton of the last deviation point enforced by $P_2$ and the associated deviating player, and the current status of alternation of deviating players. 
To then be able to decide whether Challenger's gain predictions are correct, one additionally needs to extract the gain of the corresponding simulated play in the original game.
In order to monitor the simulated play, one needs to keep track of which priorities are seen along the play in the parity game~$\mathcal G$. 

In the next proposition, we denote by \pcpOG the synchronized product of \pcpG with the observer automaton $\ObsAut$, and its game structure by $G' \times \ObsAut$.

\begin{prop} \label{prop:PCPParityRabin}
Let $\mathcal{G} = (G,(W_i)_{i \in \Pi})$ be a parity game such that each $W_i$ is a parity objective using the priority function $\alpha_i$. Let $|V|$, $|A|$, and $|\Pi|$ be the size of the game structure $G$, and $|\alpha_i|$ be the size of each $W_i$.
Let $G'$ be the game structure of its \pcp game (according to Section~\ref{subsec:DefPCPgame}).
Then, there exists an observer automaton $\mathcal O$ such that the \pcp game of $\mathcal{G}$ is equivalent to a three-player game \pcpOG $= (G'\times \ObsAut,\Rabin((E_j,F_j)_{j \in J}) ,\Obs)$ with a Rabin objective $\Rabin((E_j,F_j)_{j \in J})$ for the Provers. The game structure $G' \times \ObsAut$ of \pcpOG has a size
\begin{itemize}
\item $|V'|$ linear in $|V|$ and $|A|$, and exponential in $|\Pi|$, 
\item $|A'|$ linear in $|A|$ and exponential in $|\Pi|$,
\end{itemize}
and its Rabin objective has a size $|J|$ linear in $|\Pi|$ and each $|\alpha_i|$, $i \in \Pi$.
\end{prop}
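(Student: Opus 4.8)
The plan is to translate the Provers' objective $W_P$ of Definition~\ref{def:objectivesPCP} into a Rabin condition by recognizing that $W_P$ is an $\omega$-regular property combining two independent ingredients: the \emph{deviation pattern} produced by $P_2$ (which player deviates, and whether the deviating player keeps changing) and the \emph{parity outcome} of the simulated play. First I would let $\ObsAut$ be a small deterministic automaton that reads the successive states of $G'$ and records the deviation structure, namely the last deviating player and whether it has just changed; its initial state is irrelevant since it is synchronized with $v'_0$. Crucially, $\ObsAut$ has size polynomial in $|\Pi|$ (and for parity all the quantities the Rabin condition needs are in fact already exposed in the product states), so the synchronized product $G'\times\ObsAut$ keeps the bounds of Lemma~\ref{lem:sizeGamePCP}. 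The observation function extends to the product by observing the $G'$-component and ignoring the $\ObsAut$-component, and determinism of $\ObsAut$ makes every play of $G'$ correspond to a unique play of $G'\times\ObsAut$; hence the two games have the same winner and the equivalence reduces to showing that $\rho\in W_P$ iff its lift satisfies the Rabin condition.

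Before defining the pairs I would record two reductions. First, the simulated gain is readable from the product: since each state of $\simu{\rho}$ contributes only a bounded number of consecutive $G$-components to $\rho$, the $\alpha_p$-priorities occurring infinitely often along the $G$-components of the product play are exactly those occurring infinitely often along $\simu{\rho}$, so $\gainSim{\rho}_p=1$ iff the least such priority is even. Second, I would simplify $W_P=(iP)\vee(iiP)$ using the observation that if, in case $(iiP)$, player~$i$ deviates only finitely often, then eventually no deviation occurs and $\bar g_i=0\neq 1=\gainSim{\rho}_i$, so $\bar g\neq\gainSim{\rho}$ and $(iP)$ already holds. Consequently $W_P$ equals $\big[\Diamond\Box(\text{no deviation})\wedge \bar g\neq\gainSim{\rho}\big]\vee\bigvee_{i\neq 0}\big[\Diamond\Box(\text{only } i \text{ deviates})\wedge \bar g_i<\gainSim{\rho}_i\big]$, where ``only $i$ deviates (eventually)'' means no player-state with component in $\Pi\setminus\{0,i\}$ occurs infinitely often. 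This is the key simplification: it removes the need to additionally demand that player~$i$ deviate infinitely often, so each disjunct keeps a single ``infinitely often'' requirement and no generalized-Büchi determinization is ever needed.

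Next I would turn each disjunct into Rabin pairs, using that ``the least $\alpha_p$-priority seen infinitely often equals $k$'' is captured by the single pair whose $F$-set is $\{\text{priority}_p=k\}$ and whose $E$-set contains $\{\text{priority}_p<k\}$. For the first disjunct, for every player $p$ and every priority $k$ of $\alpha_p$ I take a pair with $F=\{\text{priority}_p=k\}$ and $E=\{\text{deviation}\}\cup\{\text{priority}_p<k\}\cup\{\text{wrong prediction for }p\}$, where the wrong-prediction set is $\{\bar g_p=0\}$ when $k$ is odd and $\{\bar g_p=1\}$ when $k$ is even; this forces eventually no deviation (so $\bar g$ stabilizes), least $\alpha_p$-priority $k$, and a parity opposite to the stable $\bar g_p$, i.e. a mismatch $\bar g_p\neq\gainSim{\rho}_p$. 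For the second disjunct, for every $i\neq 0$ and every \emph{even} priority $k$ of $\alpha_i$ I take a pair with $F=\{\text{priority}_i=k\}$ and $E=\{\text{player-state with component in }\Pi\setminus\{0,i\}\}\cup\{\text{priority}_i<k\}\cup\{\bar g_i=1\}$, forcing eventually only $i$ deviates, $\bar g_i=0$ eventually (it can only decrease at each deviation of~$i$, hence stabilizes), and $\gainSim{\rho}_i=1$, i.e. $\bar g_i<\gainSim{\rho}_i$. Every set here is definable on product states from the $G$-component (priorities), the gain-component, the player-component, and the tracked deviation data.

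Finally I would verify $\rho\in W_P\Leftrightarrow \Inf(\text{lift of }\rho)$ satisfies $\Rabin((E_j,F_j)_j)$ by checking both inclusions disjunct by disjunct, invoking the two reductions above; the monotone stabilization of $\bar g_i$ under deviations of~$i$, and of $\bar g$ under absence of deviations, is exactly what makes the $\Diamond\Box$ requirements coincide with the eventual gain values. For the sizes, $\ObsAut$ is polynomial in $|\Pi|$, so by Lemma~\ref{lem:sizeGamePCP} the product has $|V'|$ linear in $|V|$ and $|A|$ and exponential in $|\Pi|$, and $|A'|$ linear in $|A|$ and exponential in $|\Pi|$; the number of pairs is at most $\sum_p(|\alpha_p|+1)$ for the first disjunct plus $\sum_i|\alpha_i|$ for the second, hence $|J|$ is linear in $|\Pi|$ and in each $|\alpha_i|$. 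The main obstacle is getting the equivalence exactly right — in particular the subsumption argument that lets one drop ``player~$i$ deviates infinitely often'' and the faithful reading of the simulated parity outcome from the infinitely-often priorities of the $G$-components — since these are precisely what keep the Rabin condition free of a costly generalized-Büchi blow-up and thus preserve the announced bound.
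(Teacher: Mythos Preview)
Your proposal is correct and follows essentially the same route as the paper: a small deterministic observer tracking the last deviating player (and whether it just changed), synchronized with $G'$, together with a Rabin condition whose pairs are indexed by a player $p$ and a priority of $\alpha_p$, combining the standard parity-to-Rabin encoding of the simulated gain with a co-Büchi constraint on the deviation pattern. The one point you make more explicit than the paper is the subsumption argument (when the $(iiP)$ deviator acts only finitely often the branch collapses into $(iP)$), which is exactly what lets each pair carry a single $\Box\Diamond$ and avoids any generalized-Büchi blow-up; the paper relies on this implicitly, and your placement of the gain constraint in the $E$-set and your direct use of the player-component instead of the $f$-flag are equivalent minor encoding variants.
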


\begin{proof}
Let us first briefly recall the conditions defining the objective $W_C$ of Challenger and the opposed objective $W_P$ of the Provers in \pcpG (see Definition \ref{def:objectivesPCP}). 
A play $\rho$ is winning for $C$ if one of the following cases occurs:
\begin{itemize}
    \item[$(iC)$] eventually, no player deviates ($P_2$ always accepts the action proposals of $C$) and the gain predicted by $C$ is correct (it is equal to $\gainSim{\rho}$),
    \item[$(iiC)$] one unique player~$i$ infinitely often deviates, and the gain $\bar g$ predicted by $C$ shows that this deviation is not profitable for this player (the component $g_i$ of $\bar g$ is greater than or equal to $\gainSim{\rho}_i$),
    \item[$(iiiC)$] at least two different players infinitely often deviate.
\end{itemize}
A play $\rho$ is winning for $P_1$ and $P_2$ if $\rho \not\in W_C$, that is,
\begin{itemize}
    \item[$(iP)$] eventually, no player deviates and the gain predicted by $C$ is not correct,
    \item[$(iiP)$] one unique player~$i$ infinitely often deviates, and the gain $\bar g$ predicted by $C$ shows that this deviation is profitable for this player.
\end{itemize}

Before proving the proposition, let us also recall that any parity objective is a special case of Rabin objective. For instance, in the given game $\mathcal G$, the parity objective $\alpha_i$ of player~$i$ using the priorities of $\{0,\dots,d_i\}$ can be translated into a Rabin objective $\Rabin((E_p,F_p)_{p \text{ even}})$ with the following pairs $(E_p,F_p)$ for each even priority $p \in \{0,\dots,d_i\}$:
$$E_p = \{v \in V \mid \alpha_i(v) < p\} \text{ and } F_p = \{v \in V \mid \alpha_i(v) = p\}.$$
Indeed, with $F_p$, we impose to see the even priority $p$ infinitely often, and with $E_p$, we impose to eventually see no smaller priority. Notice that when $p=0$, we have $E_p = \varnothing$.

We are going to construct a deterministic automaton $\ObsAut$ that observes the states of \pcpG in a way to detect the last player who deviates, and whether two players infinitely often deviate.
That is, the states of $\mathcal O$ are of the form $(j,d,f)$ where $j$ is the last player seen in a player-state, $d$ is the last deviating player, and $f$ indicates whether the last two deviating players are different or not.
This deterministic observer is defined as follows (Figure~\ref{fig:prover-game-play} should again be helpful):
\begin{itemize}
    \item The states of $\ObsAut$ are tuples $(j,d,f) \in (\Pi \setminus \{0\} \cup \{\varnothing\}) \times (\Pi \setminus \{0\}) \times \{0,1\}$. The component $j$ is the last player seen in a player-state (including $j = \varnothing$), $d$ is the last deviating player (with $d \neq \varnothing$), and $f$ oscillates between $0$ and $1$ ($f=1$ meaning that the last two deviating players are different). 
    \item The initial state of $\ObsAut$ is the tuple $(0,0,0)$. (As this state will be synchronized with the initial state $v'_0$ of $G'$, we can arbitrarily choose the values of the tuple).
    \item The transitions $(j,d,f),v',(j',d',f'))$ of $\ObsAut$ labeled with states $v' \in V'$ of $G'$ are such that:
    \begin{itemize}
        \item if $v'$ is not a player-state, then $(j',d',f') = (j,d,f)$.
        \item if $v' = (v,i,\bar{g})$ is a player-state with $i = \varnothing$, then $j' = \varnothing$ and $(d',f') = (d,f)$,
        \item if $v' = (v,i,\bar{g})$ is a player-state with $i \neq \varnothing$, then $j' = d' = i$ and $f' = 1 \Leftrightarrow d' \neq d$,
    \end{itemize}
\end{itemize}

We then construct the synchronized product $G'\times \ObsAut$ of the game structure $G'$ of \pcpG with the observer $\mathcal O$ and define its observation $\Obs$ in the following way. The resulting game structure is composed of the states $v'$ of Definition~\ref{def:pcp-game-structure-states} extended with the observed information $(j,d,f)$. Moreover, the observation function of the extended states is naturally defined such that the observation of $(v',j,d,f)$ is equal to $\Obs(v')$, while we keep the same observation function for the actions. The resulting function is still denoted by $\Obs$. It remains to describe the Rabin objective of the synchronized product. This objective uses the following pairs that we explain below:
\begin{itemize}
\item for each player $i \in \Pi$ and each even priority $p \in \{0,\dots , d_i\}$, a pair $(E_{i,p},F_{i,p})$ such that 
$$E_{i,p} = \{(v',j,d,f) \mid j \neq \varnothing, \alpha_i(v) < p \text{ where $v$ is the $G$-component of $v'$}\},$$ 
$$F_{i,p} = \{(v',j,d,f) \mid g_i = 0, \alpha_i(v) = p  \text{ where $v,\bar{g}$ are the $G$- and gain-components of $v'$}\},$$
\item another set of similar pairs $(E_{i,p},F_{i,p})$ where $p$ is now odd, and $g_i = 1$ in the definition of $F_{i,p}$,
\item for each player $j \in \Pi \setminus \{0\}$ and each even priority $p \in \{0,\dots , d_j\}$, a pair $(E'_{j,p},F'_{j,p})$ such that 
$$E'_{j,p} = \{(v',j,d,f) \mid f = 1, \alpha_j(v) < p \text{ where $v$ is the $G$-component of $v'$}  \},$$ 
$$F'_{j,p} = \{(v',j,d,f) \mid g_j = 0, \alpha_j(v) = p  \text{ where $v,\bar{g}$ are the $G$- and gain-components of $v'$}\}.$$  
\end{itemize}
Let us explain these pairs. The first two sets of pairs replace the pair $(E_0,F_0)$ of the previous proof, and correspond to condition $(iP)$ of the Provers. Indeed, the pair $(E_{i,p},F_{i,p})$ in the first set of pairs is such that eventually, no player deviates, and the gain $\bar g$ predicted by $C$ is not correct because $g_i = 0$ and the lowest priority seen infinitely often with respect to $\alpha_i$ is $p$ which is even. Similarly, the pair $(E_{i,p},F_{i,p)})$ in the second set of pairs is such that $g_i = 1$ and the lowest priority seen infinitely often with respect to $\alpha_i$ is $p$ which is odd. The last set of pairs corresponds to condition $(iiP)$ of the Provers. Indeed, the pair $(E'_{j,p},F'_{j,p})$ in the last set of pairs is such that $j \in \Pi \setminus \{0\}$ is the unique infinitely deviating player and his deviation is profitable as $g_j = 0$ and $p$ is the lowest (even) priority seen infinitely often with respect to $\alpha_j$.

In this way, given a parity game $\mathcal{G}$, its \pcp game is equivalent to \pcpOG with a Rabin objective for the Provers having a number of pairs that is linear in $|\Pi|$ and each $|\alpha_i|$, $i \in \Pi$. Moreover, the set of states $|V'|$ of \pcpOG is linear in $|V|$ and $|A|$, and exponential in $|\Pi|$, 
and its set of actions $|A'|$ is linear in $|A|$ and exponential in $|\Pi|$. 
\end{proof}

In the following section, we unfold the process of solving the game \pcpOG, which we call simply \pcp game in the sequel (to maintain readability).

\section{Solving the \texorpdfstring{\pcp} Game} \label{sec:SolvingPCPGame}

Thanks to Theorem~\ref{thm:equivalence}, we have an equivalence between the existence of a solution of the \NCRS problem for a game and the fact that the Provers are able to win the associated \pcp game.
More precisely, the situation where $P_1$ has \emph{a} strategy, such that for \emph{every} strategy of Challenger, $P_2$ has \emph{a} strategy to make the Provers win.
The remaining question is how to determine whether this is the case? 
In other words, how can one \emph{solve}, in this particular sense, this three-player game with imperfect information?
In this more technical section, we answer this question in two steps:
First, we get rid of the three-player setting, by eliminating one Prover (see Section~\ref{subsec:2-pl-pr-g-ii}).
Second, we get rid of the imperfect information and work with a parity objective instead of a Rabin one (see Section~\ref{subsec:eliminating}).
This allows us to obtain an equivalent \emph{two-player zero-sum game parity game with perfect information}, which is effectively solvable.

\subsection{From two Provers to one Prover} \label{subsec:2-pl-pr-g-ii}
In this section, we show how to transform a three-player Rabin game with imperfect information into an equivalent\footnote{in a sense that is given in Theorem~\ref{thm:2Provers->1Prover} below.} 
two-player zero-sum Rabin game with imperfect information. 
The technique used here is inspired from~\cite{DBLP:conf/icalp/Chatterjee014}.
However, their setting is more general than ours (as they allow more than one player to have imperfect information, as long as it respects some hierarchical constraint on the level of information).
Thus, to obtain finer complexity measures, we adapt their technique to our setting and describe extensively our construction.
Note that in this section, the original game on which one wishes to solve the \NCRS problem is irrelevant, as the construction can be studied independently and works not only for a \pcp game, but for any three-player Rabin game with similar information distribution.
Thus, for readability, we forget about the underlying original game and set out from a generic game that has the same properties as the ones obtained in Section~\ref{subsec:DefPCPgame}.
Notation-wise, this enables us to name by $\mathcal{G}$ the \pcp-like games and to get rid of the $'$ when referring to them.

From the previous Section~\ref{subsec:pr-g-parity}, we consider a three-player game $\mathcal{G} = (G,\Rabin((E_j,F_j)_{j \in J}),\Obs)$ with 
\begin{itemize}
    \item a game structure $G = (V, A, \lbrace P_1, C, P_2 \rbrace, \delta, v_0)$, 
    \item $V = V_{P_1} \cup V_C \cup V_{P_2}$ and $A = A_{P_1} \cup A_C \cup A_{P_2}$ where $V_{P_1}, V_C, V_{P_2}$ are pairwise distinct sets as well as $A_{P_1}, A_C, A_{P_2}$,
    \item a Rabin objective $\Rabin((E_j,F_j)_{j \in J})$ for players~$P_1$ and~$P_2$, the opposite Streett objective $\Streett((E_j,F_j)_{j \in J})$ for player~$C$,
    \item an observation function $\Obs$ for player~$P_1$ such that 
    \begin{itemize}
        \item all actions $a \in A_{P_1} \cup A_{P_2}$ are visible,
        \item $\Obs(a) = \sharp$ for all $a \in A_C$,
        \item $\Obs$ is player- and \visibleactionstable.\footnote{An observation function that is \pstable is defined in Definition~\ref{def:pstable} and that is \visibleactionstable in Lemma~\ref{lem:actionStable}.}
    \end{itemize}
\end{itemize}
In this section, we often call this game a three-player game.

Let us explain how to go from two Provers to one Prover, while assuming the same roles. The main idea is to use imperfect information to ensure merging the two Provers does not grant too much knowledge to the new single Prover. 
Indeed, if the new Prover had perfect information, he could not simulate the first Prover truthfully.
Thus we let the new Prover have the same level of information as Prover~$1$ in $\mathcal G$.
However, in order to let the new Prover have as much actions available as Prover~$2$ in $\mathcal G$ and stay observation-based, we modify the action set to include all \emph{functions} from states of $P_2$ to actions of $P_2$.
We define the corresponding \pc game, denoted \pcG, as follows.

\begin{defi}[\pc game] \label{def:OneProver}
Let $\mathcal{G} = (G,\Rabin((E_j,F_j)_{j \in J}), \Obs)$ be a three-player game. 
The \pc game associated with $\mathcal G$ is a two-player game \pcG $= (G',\Rabin((E_j,F_j)_{j \in J}) ,\Obs')$ defined as follows.
Its game structure is $G' = (V', A', \lbrace P, C \rbrace, \delta', v'_0)$, where:
\begin{itemize}
    \item $P$ and $C$ are the two players,
    \item $V' = V$ with $V' = V'_P \cup V'_C$ such that $V'_P = V_{P_1} \cup V_{P_2}$ and $V'_C = V_C$,
    \item $v'_0 = v_0$ is the initial state,
    \item $A' = A'_P \cup A'_C$ with $A'_C = A_C$ and $A'_P = A_{P_1} \cup \ARondePd$ where  
$$\ARondePd = \{ f: V_{P_2} \rightarrow A_{P_2} \mid f \mbox{ partial function} \}.$$ 
    \item the transition function $\delta'$ is defined as follows\footnote{Note that this game departs slightly from Definition~\ref{def:game_structure}, as $G'$ is not action-unique: two different partial functions can send an element of the domain to the same image.}:
    \begin{itemize}
        \item $\delta'(v',a') = \delta(v',a')$ for every $(v',a') \in V_{P_1} \times A_{P_1} \,\cup\, V_C \times A_C$,
        \item $\delta'(v',a') = \delta(v',a'(v'))$ for every $(v',a') \in V_{P_2} \times \ARondePd$.
    \end{itemize}
\end{itemize}

The objective $\Rabin((E_j,F_j)_{j \in J})$ for player~$P$ uses the same Rabin pairs as in $\mathcal G$. Player~$C$ has a Streett objective that is the opposite of the Rabin objective.

Player~$C$ has perfect information and the observation function $\Obs'$ of player~$P$ is defined as follows:
    \begin{itemize}
        \item $\Obs' = \Obs$ on $V'= V$ and on $A_{P_1} \cup A_C$,
        \item $\Obs'(a') = a'$ for every $a' \in \ARondePd$.
     \end{itemize}
\end{defi}

Let us make some comments. First, the actions $a'$ of $\ARondePd \subset A'_P$ are partial functions $f: V_{P_2} \rightarrow A_{P_2}$ such that $a'(v') \in A_{P_2}$ for every $v'$ in the domain of $f$. Second, $\Obs'$ is such that all actions $a' \in A_{P_1} \cup \ARondePd$ are visible. In particular, player~$P$ is able to distinguish between the actions of $A_{P_1}$ and $\ARondePd$ as those sets of actions are disjoint. 

We describe below the size of \pcG, given the size of $\mathcal{G}$.

\begin{lemm} \label{lem:sizePCgame} 
Let $|V|, |A|$ be the size of the game structure of $\mathcal G$ and $|J|$ be the size of its Rabin objective. Then the associated \pc game  
\begin{itemize}
    \item has a game structure with size $|V'|$ linear in $|V|$, and $|A'|$ exponential in $|V|$ and $\log(|A|)$, and
    \item it keeps the size $|J|$ of $\mathcal G$ for its Rabin objective.  
\end{itemize}
\end{lemm}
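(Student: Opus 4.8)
The plan is to read all three size bounds directly off Definition~\ref{def:OneProver}, since the \pc game keeps the state set and the Rabin pairs of $\mathcal{G}$ verbatim and only enlarges the Prover's action set. So I would not need any combinatorial cleverness beyond a single counting argument; this is why the statement is phrased as a bookkeeping consequence of the construction.

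First I would dispatch the two easy components. For the states, the construction sets $V' = V$ (the only change being that the Prover now owns $V'_P = V_{P_1} \cup V_{P_2}$ and Challenger keeps $V'_C = V_C$), so $|V'| = |V|$, trivially linear in $|V|$. For the objective, the Rabin pairs $(E_j,F_j)_{j \in J}$ of \pcG are by definition exactly those of $\mathcal{G}$; this is well-typed precisely because $V' = V$, so each $E_j,F_j \subseteq V$ is also a subset of $V'$. Hence the index set $J$ is untouched and the size of the Rabin objective stays $|J|$.

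The only real work is counting actions. Here $A' = A'_P \cup A'_C$ with $A'_C = A_C$ and $A'_P = A_{P_1} \cup \ARondePd$, so the contributions of $A_{P_1}$ and $A_C$ are each at most $|A|$, and the blow-up comes entirely from $\ARondePd$, the set of \emph{partial} functions $f : V_{P_2} \rightarrow A_{P_2}$. For each state of $V_{P_2}$ such an $f$ either leaves it undefined or assigns it one of the $|A_{P_2}|$ actions, giving
$$|\ARondePd| = (|A_{P_2}| + 1)^{|V_{P_2}|} \le (|A| + 1)^{|V|}.$$
Thus $|A'| \le 2|A| + (|A|+1)^{|V|}$, whence $\log|A'| = O(|V| \cdot \log|A|)$, i.e.\ $|A'|$ is exponential in $|V|$ and $\log|A|$, as claimed. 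The single point requiring care — and essentially the only thing that can go wrong — is getting this count right: one must include the ``undefined'' option (the $+1$ in the base) since the functions are partial, and one must keep the base at $|A_{P_2}| \le |A|$ rather than inflating it, so that the exponent of the bound is the product $|V| \cdot \log|A|$ and not $|V| \cdot |A|$.
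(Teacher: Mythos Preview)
Your proof is correct and follows the same approach as the paper, which simply observes that the size of $\ARondePd$ is in $O(|A|^{|V|}) = O(2^{|V|\cdot \log|A|})$ and declares the rest immediate from Definition~\ref{def:OneProver}. Your version is more carefully spelled out, in particular the $+1$ in the base $(|A_{P_2}|+1)^{|V_{P_2}|}$ to account for partiality, but asymptotically this is the same bound and the argument is identical.
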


\begin{proof}
The proof is immediate from the definition of \pcG~if we notice that the size of $\ARondePd$ is in $O(|A|^{|V|})) = O(2^{|V|\cdot \log(|A|)})$.
\end{proof}

Before presenting the equivalence of $\mathcal G$ and \pcG in Theorem~\ref{thm:2Provers->1Prover} below, we need to study the player-stability of the observation function $\Obs'$ in $G'$. Indeed, this property is required to speak about observation-based strategies for player $P$ (see Section~\ref{subset:Obs}). The next lemma relates the observability of histories between the two games.

\begin{lemm}[Observed correspondence of histories] 
\label{lem:corresp}
 \hfill
\begin{enumerate}
    \item Let $h' = v'_0a'_0v'_1a'_1 \ldots v'_k \in \Hist{G'}$ be a history in $G'$. Then, to $h'$ corresponds a unique history $h = v_0a_0v_1a_1 \ldots v_k \in \Hist{G}$, denoted by $h = \corresp{h'}$, such that 
    \begin{itemize}
        \item $v_\ell = v'_\ell$ for all $v_\ell$,
        \item $a_\ell = a'_\ell$ for all $a'_\ell \in A_{P_1} \cup A_C$,
        \item $a_\ell = a'_\ell(v'_\ell)$ for all $a'_\ell \in \ARondePd$.
    \end{itemize}
    The same property holds for plays $\rho' \in \Plays{G'}$.
    \item Let $h', g' \in \Hist{G'}$ be two histories in $G'$ such that $\Obs'(h') = \Obs'(g')$. Then for their corresponding histories $h = \corresp{h'}$ and $g = \corresp{g'}$ in $\Hist{G}$, we have that $\Obs(h) = \Obs(g)$.
\end{enumerate}
\end{lemm}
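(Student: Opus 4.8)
For the first part, the plan is to define $\corresp{h'}$ componentwise by the three prescribed rules and then check that the resulting sequence is a genuine history of $G$. Writing $h' = v'_0 a'_0 v'_1 \ldots v'_k$, I set $v_\ell = v'_\ell$ for every state, $a_\ell = a'_\ell$ when $a'_\ell \in A_{P_1} \cup A_C$, and $a_\ell = a'_\ell(v'_\ell)$ when $a'_\ell \in \ARondePd$. Since $h'$ is a history, $\delta'(v'_\ell, a'_\ell)$ is defined for each $\ell$; in particular, when $a'_\ell \in \ARondePd$ the partial function $a'_\ell$ is defined at $v'_\ell$, so $a_\ell$ is well defined. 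To see that $\delta(v_\ell, a_\ell) = v_{\ell+1}$ I simply unfold the definition of $\delta'$ from Definition~\ref{def:OneProver}: if $v'_\ell \in V_{P_1} \cup V_C$ then $\delta(v_\ell, a_\ell) = \delta(v'_\ell, a'_\ell) = \delta'(v'_\ell, a'_\ell) = v'_{\ell+1}$, and if $v'_\ell \in V_{P_2}$ then $\delta(v_\ell, a_\ell) = \delta(v'_\ell, a'_\ell(v'_\ell)) = \delta'(v'_\ell, a'_\ell) = v'_{\ell+1}$. Uniqueness is immediate, since each $v_\ell$ and each $a_\ell$ is forced by the rules. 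Applying the same construction to each prefix extends the correspondence to plays.

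For the second part, I would first note that $\Obs'(h') = \Obs'(g')$ forces $h'$ and $g'$ to have equal length and, writing $g' = u'_0 b'_0 u'_1 \ldots u'_k$, yields $\Obs'(v'_\ell) = \Obs'(u'_\ell)$ for every state and $\Obs'(a'_\ell) = \Obs'(b'_\ell)$ for every action. Since $\Obs' = \Obs$ on states and $v_\ell = v'_\ell$, $u_\ell = u'_\ell$, the state observations of $h = \corresp{h'}$ and $g = \corresp{g'}$ already agree. For the actions I would split on the type of $a'_\ell$, using that the three observation values $\sharp$, the elements of $A_{P_1}$, and the elements of $\ARondePd$ form pairwise disjoint subsets of the action-observation alphabet, so $\Obs'(a'_\ell) = \Obs'(b'_\ell)$ forces $a'_\ell$ and $b'_\ell$ to be of the same kind. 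If both lie in $A_C$, then $a_\ell = a'_\ell$, $b_\ell = b'_\ell$ and $\Obs(a_\ell) = \Obs(b_\ell) = \sharp$; if both lie in $A_{P_1}$, visibility gives $a'_\ell = b'_\ell$, hence $a_\ell = b_\ell$.

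The delicate case, which I expect to be the main obstacle, is when $a'_\ell, b'_\ell \in \ARondePd$. Visibility under $\Obs'$ gives $a'_\ell = b'_\ell$ as partial functions, but the induced actions $a_\ell = a'_\ell(v'_\ell)$ and $b_\ell = a'_\ell(u'_\ell)$ are evaluations at the possibly distinct states $v'_\ell \neq u'_\ell$, so they need not coincide a priori. To close this gap I would invoke the \visibleactionstability of $\Obs$ in $G$ (see Lemma~\ref{lem:actionStable}): the actions $a_\ell, b_\ell \in A_{P_2}$ are visible, their source states satisfy $\Obs(v_\ell) = \Obs(u_\ell)$, and their successors $v_{\ell+1} = \delta(v_\ell, a_\ell)$ and $u_{\ell+1} = \delta(u_\ell, b_\ell)$ satisfy $\Obs(v_{\ell+1}) = \Obs(u_{\ell+1})$ — this last equality coming precisely from $\Obs'(v'_{\ell+1}) = \Obs'(u'_{\ell+1})$ together with $v_{\ell+1} = v'_{\ell+1}$ and $u_{\ell+1} = u'_{\ell+1}$. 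The second clause of \visibleactionstability then yields $a_\ell = b_\ell$, whence $\Obs(a_\ell) = \Obs(b_\ell)$. Collecting the state and action equalities over all $\ell$ gives $\Obs(h) = \Obs(g)$, as required.
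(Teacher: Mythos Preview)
Your proof is correct and follows essentially the same route as the paper's. In particular, for part~2 you handle the delicate $\ARondePd$ case exactly as the paper does: by appealing to the second clause of \visibleactionstability (Lemma~\ref{lem:actionStable}) on the visible actions $a_\ell, b_\ell \in A_{P_2}$, using that both source and successor state observations agree; your part~1 is just a more explicit unfolding of what the paper dismisses as ``a direct consequence of the definition''.
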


\begin{proof}
The first statement is a direct consequence of the definition of the \pc game. Let us prove the second one. Let $h' = v'_0a'_0v'_1a'_1 \ldots v'_k$ and $g' = u'_0b'_0u'_1b'_1 \ldots u'_k$ be two histories in $G'$ such that $\Obs'(h') = \Obs'(g')$. Let $h = \corresp{h'} = v_0a_0v_1a_1 \ldots v_k$ and $g = \corresp{g'} = u_0b_0u_1b_1 \ldots u_k$ be the corresponding histories in $G$. By definition of $\Obs'$, we have that $\Obs(v_\ell) = \Obs(u_\ell)$ for all $\ell \in \{0,\ldots,k\}$. From his observation of the actions through $\Obs'$, player~$P$ knows that each pair $a'_\ell, b'_\ell$ is composed of actions from the same subset of actions: either $A_C$ or $A_{P_1}$ or $\ARondePd$. 
Moreover, by definition of $\Obs'$, we get that $\Obs(a_\ell) = \Obs(b_\ell)$ for all pairs $a_\ell, b_\ell$ of actions both in either $A_{P_1}$ or in $A_C$. It remains to consider each pair $a'_\ell, b'_\ell \in \ARondePd$ and the corresponding pair $a_\ell = a'_\ell(v'_\ell), b_\ell = b'_\ell(u'_\ell)$ with $a_\ell, b_\ell \in A_{P_2}$. We have to prove that $\Obs(a_\ell) = \Obs(b_\ell)$, that is, $a_\ell = b_\ell$. As $\Obs(v_\ell) = \Obs(u_\ell), \Obs(v_{\ell+1}) = \Obs(u_{\ell+1})$, and $a_\ell, b_\ell$ are visible actions, we have that $a_\ell = b_\ell$ because $\Obs$ is \visibleactionstable in $G$ (see the second part of Lemma~\ref{lem:actionStable}). It follows that $\Obs(h) = \Obs(g)$.
\end{proof}

From the previous lemma and Lemma~\ref{lem:playerStable}, we derive the \pstability of $\Obs'$.

\begin{corol} \label{cor:playerStable}
The observation function $\Obs'$ is \pstable for player~$P$.
\end{corol}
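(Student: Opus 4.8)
The plan is to reduce \pstability of $\Obs'$ in $G'$ directly to the already-established \pstability of $\Obs$ in $G$ (Lemma~\ref{lem:playerStable}), transferred across the two games through the history correspondence $\corresp{\cdot}$ of Lemma~\ref{lem:corresp}. The key structural observation is that $G'$ and $G$ share the same state set ($V' = V$), and the only change at the level of ownership is that the states of $P_1$ and $P_2$ are merged into the single player $P$: formally $V'_P = V_{P_1} \cup V_{P_2}$ and $V'_C = V_C$.

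First I would fix two histories $h' = v'_0 a'_0 v'_1 a'_1 \ldots v'_k$ and $g' = u'_0 b'_0 u'_1 b'_1 \ldots u'_k$ in $G'$ with $\Obs'(h') = \Obs'(g')$, and pass to their corresponding histories $h = \corresp{h'} = v_0 a_0 \ldots v_k$ and $g = \corresp{g'} = u_0 b_0 \ldots u_k$ in $G$. By Lemma~\ref{lem:corresp}(2), these satisfy $\Obs(h) = \Obs(g)$, and by Lemma~\ref{lem:corresp}(1) the states coincide componentwise, i.e. $v_\ell = v'_\ell$ and $u_\ell = u'_\ell$ for every index $\ell \in \{0,\ldots,k\}$.

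Then, since $\Obs$ is \pstable for player~$P_1$ in $G$ (Lemma~\ref{lem:playerStable}), the states $v_\ell$ and $u_\ell$ are controlled by the same player among $\{P_1, C, P_2\}$, for every $\ell$. I would conclude by a short case distinction on this common owner: if $v_\ell, u_\ell \in V_{P_1}$, or both are in $V_{P_2}$, then both lie in $V'_P$; if both are in $V_C$, then both lie in $V'_C$. In each case $v'_\ell = v_\ell$ and $u'_\ell = u_\ell$ are controlled by the same player among $\{P, C\}$ in $G'$, which is exactly the \pstability condition of Definition~\ref{def:pstable} for $\Obs'$.

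The argument is essentially a transfer argument, so I do not expect a real obstacle; the only point requiring care is that merging $P_1$ and $P_2$ into $P$ \emph{coarsens} the ownership partition but never splits a single player's states across $P$ and $C$, so ``same owner in $G$'' always implies ``same owner in $G'$''. All the genuine content — relating observation equality in $G'$ to observation equality in $G$, which ultimately rests on the \visibleactionstability of $\Obs$ — has already been discharged in Lemma~\ref{lem:corresp}, so this corollary is a direct consequence of the two preceding results.
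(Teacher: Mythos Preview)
Your proof is correct and follows exactly the approach the paper intends: the paper's own justification is the single sentence ``From the previous lemma and Lemma~\ref{lem:playerStable}, we derive the \pstability of $\Obs'$'', and you have simply spelled out the details---pass to the corresponding histories in $G$ via Lemma~\ref{lem:corresp}, apply the \pstability of $\Obs$ there, and observe that merging $P_1,P_2$ into $P$ only coarsens ownership.
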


The observation function $\Obs'$ satisfies a stronger property\footnote{Note that the \stronglyPstability of $\Obs'$ implies its \pstability. 
} that we call \emph{\stronglyPstability}. This property will be useful in Section~\ref{subsec:eliminating}. It is proved in Appendix~\ref{app:StrongPStable}.

\begin{lemm}[\StronglyPstability of $\Obs'$] \label{lem:strongBis} 
The observation function $\Obs'$ is \emph{\StronlgyPstable} for player~$P$, that is, for every two histories $h' = h'_1v'$ and $g' = g'_1u'$ such that $\Obs'(h'_1) = \Obs'(g'_1)$, then 
the states $v',u'$ are controlled by the same player and $\Obs'(v') = \Obs'(u')$.
\end{lemm}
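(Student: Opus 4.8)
The plan is to reduce the statement to a case analysis on the last action, transferring everything to the original three-player game $\mathcal G$ through the correspondence $\varphi$ of Lemma~\ref{lem:corresp} and then invoking the \visibleactionstability of $\Obs$ (Lemma~\ref{lem:actionStable}). First I would write $h' = h'_1v'$ with $h'_1 = \tilde h\,\alpha'$, where $\tilde h$ is obtained from $h'_1$ by deleting its last action $\alpha'$ and thus ends in the penultimate state $w'$, so that $v' = \delta'(w',\alpha')$; symmetrically $g'_1 = \tilde g\,\beta'$ with penultimate state $x'$ and $u' = \delta'(x',\beta')$. The hypothesis $\Obs'(h'_1) = \Obs'(g'_1)$ then splits into $\Obs'(\tilde h) = \Obs'(\tilde g)$ together with $\Obs'(\alpha') = \Obs'(\beta')$. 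From the first part and the \pstability of $\Obs'$ (Corollary~\ref{cor:playerStable}) I get that $w'$ and $x'$ are controlled by the same player. Observe that once I manage to prove $\Obs'(v') = \Obs'(u')$, the full histories $h'$ and $g'$ will satisfy $\Obs'(h') = \Obs'(g')$, and a second application of Corollary~\ref{cor:playerStable} immediately yields that $v'$ and $u'$ are controlled by the same player. So the whole difficulty is concentrated in establishing $\Obs'(v') = \Obs'(u')$.

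To do so I would use $\Obs'(\alpha') = \Obs'(\beta')$ to pin down the common type of the two actions: since the sets $A_C$, $A_{P_1}$ and $\ARondePd$ are pairwise disjoint, since $A_C$ is exactly the set of actions observed as $\sharp$, and since $A_{P_1}\cup\ARondePd$ are all visible with $\Obs'$ acting as the identity on them, the actions $\alpha'$ and $\beta'$ lie in the same one of these three sets and moreover coincide when that set is $A_{P_1}$ or $\ARondePd$. This yields three cases. If $\alpha' = \beta' \in A_{P_1}$, then $w',x' \in V_{P_1}$; passing to $\varphi(h')$ and $\varphi(g')$ through Lemma~\ref{lem:corresp}, the realized actions in $G$ are literally equal, and the first item of Lemma~\ref{lem:actionStable} applied in $G$ (equal actions from states with equal observation reach states with equal observation) gives $\Obs(v') = \Obs(u')$, hence $\Obs'(v') = \Obs'(u')$. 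If $\alpha',\beta' \in A_C$, then $w',x' \in V_C$ and both actions are invisible; here I would rely on the regular shape of the plays recalled in Figure~\ref{fig:prover-game-play}, namely that a Challenger move never alters the first (observable) component of the state, so that the observation of $v'$ (resp.\ $u'$) is already determined by the observation of $w'$ (resp.\ $x'$), and these agree since $\Obs'(w') = \Obs'(x')$.

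The main obstacle is the third case, $\alpha' = \beta' = f \in \ARondePd$, where $w',x' \in V_{P_2}$ are two action-states with the same observation but possibly distinct hidden components, and $f$ is a \emph{single} committed function that may a priori realize two different actions $f(w'), f(x') \in A_{P_2}$. Because the converse direction of \visibleactionstability produces equality of actions only \emph{after} one knows their successors are equally observed, the first item of Lemma~\ref{lem:actionStable} cannot be applied directly here, and this is precisely where the real work lies. My plan is to transfer to $G$ through $\varphi$ and to exploit the regularity of the \pcp plays once more: a Prover-$2$ move from an action-state produces a state whose observable component is the $\delta$-successor determined by the realized action, so the claim reduces to showing that the two realized actions send the common observation of $w'$ and $x'$ to the same successor. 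I expect this to be the delicate step of the lemma, requiring one to argue, from the equality of the observed prefixes $\tilde h$ and $\tilde g$ and the fixed shape of the decision phase, that $f$ cannot observationally separate the two indistinguishable action-states; concretely I would combine the second item of Lemma~\ref{lem:actionStable} with the \actionstability of the underlying game to conclude $\Obs(v') = \Obs(u')$, after which the same-owner conclusion follows as explained above.
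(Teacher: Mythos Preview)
Your overall plan—case analysis on the type of the last action, then reduce to either the first item of \visibleactionstability or to the fact that a Challenger move preserves the $G$-component—is exactly the paper's approach, and your treatment of the $A_{P_1}$ and $A_C$ cases is correct and matches the paper's use of Lemma~\ref{lem:remark}.

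The genuine gap is in your third case $\alpha'=\beta'=f\in\ARondePd$. You correctly isolate the difficulty: the single committed function $f$ may realize two \emph{different} $P_2$-actions $f(w')\neq f(x')$ on two action-states $w'=(v,a,\bar g)$ and $x'=(v,b,\bar\gamma)$ that share the observation $v$ but differ in their hidden components. Your proposed fix—``combine the second item of Lemma~\ref{lem:actionStable} with the \actionstability of the underlying game''—is circular: both tools run in the direction \emph{equal successor observations $\Rightarrow$ equal actions}, whereas what you need here is the opposite, namely to obtain equal successor observations without yet knowing that $f(w')=f(x')$. Nothing in the hypothesis $\Obs'(\tilde h)=\Obs'(\tilde g)$ rules out an $f$ sending $(v,a,\bar g)\mapsto c_1$ and $(v,b,\bar\gamma)\mapsto c_2$ with $c_1\neq c_2$, which by action-unicity of the original game gives $\delta(v,c_1)\neq\delta(v,c_2)$ and hence $\Obs'(v')\neq\Obs'(u')$. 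So this case cannot be closed by the lemmas you invoke. For comparison, the paper does not argue through Lemma~\ref{lem:actionStable} here either: it asserts directly that $\Obs(v'_k)=\Obs(v'_{k-1})=\Obs(u'_{k-1})=\Obs(u'_k)$ ``by definition of $\delta'$'', which would require a Prover-$2$ move to preserve the $G$-component—something Lemma~\ref{lem:remark} guarantees only after a \emph{Challenger} move, not after a $P_2$ move. In other words, the paper's one-line handling of this bullet relies on precisely the step you flagged as delicate, and your caution is warranted.
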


Both games $G$ and \pcG are \pstable (by hypothesis for the first one and by Corollary~\ref{cor:playerStable} for the second one). We can therefore study the observation-based strategies of $P_1$ in $\mathcal G$ and of $P$ in \pcG. The next theorem states how the game \pcG is equivalent to $\mathcal G$.

\begin{theo}[Equivalence theorem between $\mathcal G$ and \pcG] \label{thm:2Provers->1Prover}
    Let  \pcG be the \pc game associated with a three-player game $\mathcal G = (G,\Rabin((E_j,F_j)_{j\in J}),\Obs)$. Then, in $\mathcal G$, there exists an observation-based strategy $\sigma_{P_1}$ of $P_1$ such that for all strategies $\sigma_C$ of $C$, there exists a strategy $\sigma_{P_2}$ of $P_2$ such that $\langle \sigma_{P_1}, \sigma_C, \sigma_{P_2}\rangle_{v_0} \in \Rabin((E_j,F_j)_{j\in J})$ if, and only if, in \pcG, there exists an observation-based strategy $\sigma'_P$ of $P$ such that for all strategies $\sigma'_C$ of $C$ such that $\langle \sigma'_P, \sigma'_C\rangle_{v'_0} \in \Rabin((E_j,F_j)_{j\in J})$.
\end{theo}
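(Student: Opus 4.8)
The plan is to prove the two implications separately, transferring strategies across the two games by means of the history correspondence $h = \corresp{h'}$ of Lemma~\ref{lem:corresp} together with the fact that $\Obs'$ is \pstable (Corollary~\ref{cor:playerStable}). The only genuinely non-trivial external ingredient is \emph{memoryless determinacy of Rabin games}: in any two-player zero-sum perfect-information game with a Rabin objective (even on an infinite arena) the Rabin player has a memoryless winning strategy on her winning region. I will apply this to $P_2$, who carries the Rabin objective. Throughout I use that $V_C = V'_C$, $A_C = A'_C$, and that $\corresp{\cdot}$ leaves $C$- and $P_1$-actions untouched while preserving observations.

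Backward direction ($\Leftarrow$). Assume $\sigma'_P$ is an observation-based winning strategy of $P$ in \pcG. Define $\sigma_{P_1}$ in $\mathcal G$ by $\sigma_{P_1}(h) = \sigma'_P(h')$ for any $h'$ with $\corresp{h'} = h$ ending in $V_{P_1}$; this is well defined and observation-based since $\sigma'_P$ is observation-based, $\corresp{\cdot}$ preserves observations (Lemma~\ref{lem:corresp}(2)), and $P_1$-actions are preserved. Now fix any $C$-strategy $\sigma_C$ in $\mathcal G$ and let $\sigma'_C(h') = \sigma_C(\corresp{h'})$ be the induced $C$-strategy in \pcG. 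Let $\rho' = \langle \sigma'_P, \sigma'_C\rangle_{v'_0}$ (winning by hypothesis) and set $\rho = \corresp{\rho'}$. Reading off the $P_2$-actions realized along $\rho$ — namely $f(v')$ whenever $\sigma'_P$ plays the partial function $f \in \ARondePd$ at the current state $v'$ — defines a $P_2$-strategy $\sigma_{P_2}$ (arbitrary off $\rho$). By construction $\rho = \langle \sigma_{P_1}, \sigma_C, \sigma_{P_2}\rangle_{v_0}$, and since $\rho$ and $\rho'$ visit exactly the same sequence of states while the Rabin condition depends only on the states, $\rho \in \Rabin((E_j,F_j)_{j\in J})$. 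This $\sigma_{P_2}$ is the required response, so the right-hand property implies the left-hand one.

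Forward direction ($\Rightarrow$). Assume $\sigma_{P_1}$ is observation-based and witnesses the left-hand property. Fix $\sigma_{P_1}$ in $G$, representing it as a transducer whose memory is the observed history $o = \Obs(h)$ (legitimate precisely because $\sigma_{P_1}$ is observation-based). This turns $\mathcal G$ into a perfect-information arena with states $(v,o)$, in which $P_1$-moves are already resolved and only $C$ and $P_2$ still choose, carrying the same Rabin condition read on the $v$-component. The hypothesis "$\forall \sigma_C\,\exists \sigma_{P_2}$ winning" says exactly that $C$ has no winning Streett strategy from $(v_0,\Obs(v_0))$; by memoryless determinacy of Rabin games, $P_2$ then has a memoryless winning strategy $\sigma_{P_2}$ on this residual arena, i.e. one depending only on the pair $(v,o)$. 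I now define the merged strategy $\sigma'_P$: at a $P_1$-state it plays $\sigma_{P_1}$; at a $P_2$-state reached after observed history $o$ it plays the partial function $f_o \in \ARondePd$ with $f_o(v') = \sigma_{P_2}(v',o)$. This $f_o$ is well defined precisely because $\sigma_{P_2}$ depends only on $(v',o)$ with $o$ fixed, and $\sigma'_P$ is observation-based, its prescription depending only on $o$ (by Corollary~\ref{cor:playerStable} the owner of the last state is itself determined by $o$).

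It remains to verify that $\sigma'_P$ wins against every $\sigma'_C$. Given $\sigma'_C$, put $\rho' = \langle \sigma'_P, \sigma'_C\rangle_{v'_0}$ and $\rho = \corresp{\rho'}$. At $P_1$-states $\rho$ agrees with $\sigma_{P_1}$, and at a $P_2$-state $v'$ reached after observation $o$ the action of $\rho$ is $f_o(v') = \sigma_{P_2}(v',o)$; hence the residual-arena play underlying $\rho$ is consistent with both $\sigma_{P_1}$ and the memoryless $\sigma_{P_2}$. Since $\sigma_{P_2}$ wins everywhere on the residual arena, $\rho \in \Rabin((E_j,F_j)_{j\in J})$, and as $\rho'$ visits the same states, $\rho' \in \Rabin((E_j,F_j)_{j\in J})$. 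I expect this forward direction to be the main obstacle: the clash between $P_2$'s perfect information and $P$'s imperfect, observation-based play is exactly what forces the encoding of $P_2$ by functions, and the decisive point is that choosing $\sigma_{P_2}$ \emph{memoryless} in the residual arena makes it depend only on $(v',o)$ — precisely the amount of information that an observation-based function-action $f_o$ can carry.
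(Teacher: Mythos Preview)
Your forward direction is correct and matches the paper's proof: both build the residual arena $G[\sigma_{P_1}]$ with states $(v,\Obs(h))$, invoke memoryless Rabin determinacy for $P_2$, and package $\sigma_{P_2}(\cdot,o)$ into the function-action $f_o$. The backward direction, however, has a real gap. You set $\sigma_{P_1}(h) = \sigma'_P(h')$ ``for any $h'$ with $\corresp{h'} = h$'' and justify well-definedness via Lemma~\ref{lem:corresp}(2). But that lemma gives $\Obs'(h')=\Obs'(g') \Rightarrow \Obs(\corresp{h'})=\Obs(\corresp{g'})$, the \emph{opposite} implication. Two lifts $h',g'$ of the same $h$ differ precisely in their $\ARondePd$-actions, which are \emph{visible} under $\Obs'$; hence $\Obs'(h')\neq\Obs'(g')$ in general and $\sigma'_P$ can return different values on them. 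Even repairing this with a canonical lift (e.g.\ constant partial functions $f_a$) does not rescue the argument: along $\rho'=\langle\sigma'_P,\sigma'_C\rangle_{v'_0}$ the actions at $P_2$-states are whatever functions $\sigma'_P$ actually plays, so the prefix $h'_\rho$ of $\rho'$ need not have the same $\Obs'$ as your canonical lift of $h_\rho$, and the key identity $\sigma_{P_1}(h_\rho)=\sigma'_P(h'_\rho)$ --- required for $\rho=\langle\sigma_{P_1},\sigma_C,\sigma_{P_2}\rangle_{v_0}$ --- is no longer available.

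The paper resolves this by constructing an auxiliary transition system $H[\sigma'_P]$ whose states are the $\Obs'$-classes of $\sigma'_P$-compatible histories: reading the $A$-action sequence of $h$ through $H[\sigma'_P]$ lands (when a path exists) in a unique state $\Obs'(h')$, and $\sigma_{P_1}(h)$ is defined as the unique outgoing $P_1$-action there. Well-definedness and observation-basedness are argued directly on $H[\sigma'_P]$, and the outcome $\rho$ is then built prefix by prefix together with a $\sigma'_P$-compatible $\rho'$, so that compatibility of $\rho$ with $\sigma_{P_1}$ holds by construction rather than by a lift argument.
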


Let us prove the first direction of Theorem~\ref{thm:2Provers->1Prover}.
\begin{lemm}
In $\mathcal G$, let $\sigma_{P_1}$ be an observation-based strategy of $P_1$ such that for all strategies $\sigma_C$ of $C$, there exists a strategy $\sigma_{P_2}$ of $P_2$ such that $\langle \sigma_{P_1}, \sigma_C, \sigma_{P_2}\rangle_{v_0} \in \Rabin((E_j,F_j)_{j\in J})$. Then in \pcG, there exists an observation-based $\sigma'_P$ of $P$ such that for all strategies $\sigma'_C$ of $C$, we have $\langle \sigma'_P, \sigma'_C\rangle_{v'_0} \in \Rabin((E_j,F_j)_{j\in J})$.
\end{lemm}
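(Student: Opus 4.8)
The plan is to first collapse the $\forall\sigma_C\,\exists\sigma_{P_2}$ shape of the hypothesis into a single uniform response for $P_2$, and only afterwards repackage the coalition $\{P_1,P_2\}$ as the single imperfect-information player $P$. So I would fix the given observation-based $\sigma_{P_1}$ and consider the residual two-player zero-sum game obtained by hard-wiring $\sigma_{P_1}$ into $\mathcal G$: it is played between $C$ (seeking $\Streett((E_j,F_j)_{j\in J})$) and $P_2$ (seeking $\Rabin((E_j,F_j)_{j\in J})$), both with perfect information. The hypothesis says precisely that $C$ has no winning strategy there: a winning $\sigma_C$ would, by the hypothesis instantiated at that very $\sigma_C$, admit a $\sigma_{P_2}$ producing a $\Rabin$-outcome, a contradiction. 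By determinacy of $\omega$-regular games $P_2$ therefore has a winning strategy, and since $P_2$ is the $\Rabin$ player, by the classical memoryless determinacy of Rabin games I may take a winning $\sigma_{P_2}^{\ast}$ that is memoryless on the residual arena. Modelling $\sigma_{P_1}$'s memory by the observation history it has witnessed, this makes $\sigma_{P_2}^{\ast}(h)$ depend only on the last state $\Last(h)$ and on the observation $\Obs(h)$.

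Next I would build the observation-based strategy $\sigma'_P$ of $P$ in \pcG. On a history $h'$ ending in a $P_1$-state I set $\sigma'_P(h') = \sigma_{P_1}(\corresp{h'})$, using the correspondence of Lemma~\ref{lem:corresp}; this is observation-based because $\Obs'(h')=\Obs'(g')$ forces $\Obs(\corresp{h'})=\Obs(\corresp{g'})$ by the second part of Lemma~\ref{lem:corresp}, and $\sigma_{P_1}$ is observation-based. On a history $h'$ ending in a $P_2$-state I let $P$ play the function $f\in\ARondePd$ defined, for every $v'\in V_{P_2}$, by letting $f(v')$ be the action of $\sigma_{P_2}^{\ast}$ at the residual position whose current state is $v'$ and whose memory is the one determined by $\Obs'(h')$. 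Since this value depends only on $v'$ and on $\Obs'(h')$, the prescribed function $f$ depends only on $\Obs'(h')$, so $\sigma'_P$ is observation-based; here I crucially use that the alphabet $\ARondePd$ consists of full functions $V_{P_2}\to A_{P_2}$, which lets $P$ commit at once to a response for every possible hidden current state.

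To verify $\sigma'_P$ is winning, I would fix any $\sigma'_C$ of $C$ in \pcG, set $\rho'=\langle\sigma'_P,\sigma'_C\rangle_{v'_0}$ and let $\rho=\corresp{\rho'}$ be its image in $\mathcal G$ (Lemma~\ref{lem:corresp} maps plays to plays and leaves the state sequence unchanged). By construction $\rho$ is compatible with $\sigma_{P_1}$ at every $P_1$-state and, because $f(v')$ reproduces exactly the memoryless action of $\sigma_{P_2}^{\ast}$, compatible with $\sigma_{P_2}^{\ast}$ at every $P_2$-state; reading off the $C$-moves of $\rho$ yields a strategy $\sigma_C$ of $C$ in $\mathcal G$ for which $\rho=\langle\sigma_{P_1},\sigma_C,\sigma_{P_2}^{\ast}\rangle_{v_0}$. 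As $\sigma_{P_2}^{\ast}$ wins against every $\sigma_C$, we get $\rho\in\Rabin((E_j,F_j)_{j\in J})$; and since $\rho'$ and $\rho$ visit the same states, $\Inf(\rho')=\Inf(\rho)$, whence $\rho'\in\Rabin((E_j,F_j)_{j\in J})$ as required.

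The main obstacle is exactly the well-definedness and observation-basedness of the function $f$: a priori a perfect-information $\sigma_{P_2}^{\ast}$ depends on the full history, whereas $P$ may condition $f$ only on $\Obs'(h')$, and several genuinely different histories share that observation and may even reach the same current state through different hidden $C$-moves. This is why the two preliminary reductions are needed — the determinacy swap to obtain one uniform $\sigma_{P_2}^{\ast}$, and memoryless determinacy of the $\Rabin$ player to make $\sigma_{P_2}^{\ast}$ depend only on the current state together with $P_1$'s memory (equivalently, his observation). One must also check that passing to the quotient residual arena, whose positions pair a state with $P_1$'s observation-memory, does not secretly weaken $C$: since the $\Rabin$ condition depends only on the state sequence, and that sequence factors through this quotient, the residual game on the quotient is equivalent for both players, so ``$C$ does not win'' transfers and $P_2$'s quotient-memoryless strategy lifts back to a winner against every full-information $C$. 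Finally, Corollary~\ref{cor:playerStable} and Lemma~\ref{lem:strongBis} ensure that $\Obs'$ is \pstable (indeed strongly so), so that speaking of an observation-based $\sigma'_P$ is legitimate in the first place.
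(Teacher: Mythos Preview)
Your proposal is correct and follows essentially the same approach as the paper: the paper also hard-wires $\sigma_{P_1}$ into an explicit residual arena $G[\sigma_{P_1}]$ whose states are pairs $(v,\Obs(h))$, invokes memoryless determinacy for the Rabin player $P_2$ there (citing Zielonka for the infinite-arena case), and then packages the resulting $\tau_{P_2}\colon V_{P_2}\times\Obs(\Hist{G})\to A_{P_2}$ into the function-valued action $a'\in\ARondePd$ exactly as you do. One small slip: $\ARondePd$ is defined in the paper as the set of \emph{partial} functions $V_{P_2}\to A_{P_2}$, not total ones, which is harmless for your argument since you only need $f$ to be defined at the reachable pairs $(w,\Obs(h))$.
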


\begin{proof}
Let us consider an observation-based strategy $\sigma_{P_1}$ for player~$P_1$. From $\sigma_{P_1}$ and $G$, we define the infinite game structure $G[\sigma_{P_1}]$ enriched with observed histories, with action set $A$, in the following way:
\begin{itemize}
    \item its states are of the form $(v,\Obs(h))$ with $h \in \Hist{G}$ and $v$ the last state of $h$,
    \item its initial state is $(v_0, \Obs(v_0))$,
    \item for its transition function $\Delta$, we have $\Delta((v,\Obs(h)), a) = (u,\Obs(hau))$ such that $u = \delta(v,a)$, and whenever $v \in V_{P_1}$, $\Delta((v,\Obs(h)), a)$ is only defined for $a = \sigma_{P_1}(h)$.
\end{itemize}
Note that in $G[\sigma_{P_1}]$, the second components of the states derive from the first components seen along the plays. From the Rabin objective of $\mathcal G$, we define the Rabin objective $\Rabin((\bar E_j,\bar F_j)_{j \in J}$ of $G[\sigma_{P_1}]$  such that $\bar E_j = \{(v,\Obs(h)) \mid v \in E_j\}$ and $\bar F_j = \{(v,\Obs(h)) \mid v \in F_j\}$ for all $j \in J$.
Hence we have a two-player zero-sum game $(G[\sigma_{P_1}],\Rabin((\bar E_j,\bar F_j)_{j \in J})$, with players $C$ and $P_2$, such that strategies of those players in $G$ can be viewed as strategies in $G[\sigma_{P_1}]$.

Suppose that for all strategies $\sigma_C$ of $C$, there exists a strategy $\sigma_{P_2}$ of $P_2$ such that the play $\langle \sigma_{P_1}, \sigma_C, \sigma_{P_2}\rangle_{v_0}$ belongs to $\Rabin((E_j,F_j)_{j \in J}$ in $G$. With $\sigma_C$ and $\sigma_{P_2}$ seen as strategies in $G[\sigma_{P_1}]$, it holds that the play $\langle \sigma_C, \sigma_{P_2}\rangle_{(v_0,\Obs(v_0))} \in \Rabin((\bar E_j, \bar F_j)_{j \in J})$ in $G[\sigma_{P_1}]$. By determinacy of perfect-information (infinite) Rabin  games \cite{DBLP:journals/tcs/Zielonka98}, in $G[\sigma_{P_1}]$, there exists a strategy $\tau_{P_2}$ of~$P_2$ that is memoryless\footnote{In case of Rabin objective, we can suppose that winning strategies are memoryless.}, such that for all strategies $\tau_{C}$ of~$C$, we have $\langle \tau_C, \tau_{P_2} \rangle_{(v_0,\Obs(v_0))} \in \Rabin((\bar E_j, \bar F_j)_{j \in J})$ in $G[\sigma_{P_1}]$. The strategy $\tau_{P_2}$ is a partial\footnote{It is a partial function as it is only defined for pairs $(v,\Obs(h))$ such that $h$ ends with $v$.} function $\tau_{P_2} : V_{P_2} \times \Obs(\Hist{G}) \rightarrow A_{P_2}$. 

From strategies $\sigma_{P_1}$ and $\tau_{P_2}$, we are going to define a strategy $\sigma'_{P}$ for player~$P$ in $G'$ as follows. For every $h' \in \Histi{G'}{P}$ ending with state $v' \in V'_P$, let $h = \corresp{h'}$ be the history as defined in Lemma~\ref{lem:corresp}: 
\begin{itemize}
    \item if $v' \in V_{P_1}$, then $\sigma'_P(h') = \sigma_{P_1}(h)$,
    \item if $v' \in V_{P_2}$, then $\sigma'_P(h') = a' \in \ARondePd$ such that $a'(w) = \tau_{P_2}(w,\Obs(h))$ for all $w \in V_{P_2}$ such that $(w,\Obs(h))$ belongs to the domain of $\tau_{P_2}$.\footnote{Note that $(v',\Obs(h))$ belongs to the domain of $\tau_{P_2}$.}
\end{itemize}

Let us prove that $\sigma'_P$ is observation-based. Let $h',g' \in \Histi{G'}{P}$ be such that $\Obs'(h') = \Obs'(g')$. By Lemma~\ref{lem:corresp} with $h = \corresp{h'}, g = \corresp{g'} \in \Hist{G}$, we have that $\Obs(h) = \Obs(g)$. Let $v$ (resp. $u$) be the last state of $h$ (resp. $g$). As $\Obs$ is \pstable by hypothesis on $G$, if $v \in V_{P_1}$ (resp. $v \in V_{P_2}$), then $u \in V_{P_1}$ (resp. $u \in V_{P_2}$). Therefore
\begin{itemize}
    \item if $v,u \in V_{P_1}$, then $\sigma_{P_1}(h) = \sigma_{P_1}(g)$ as $\sigma_{P_1}$ is observation-based, and then $\sigma'_P(h') = \sigma'_P(g')$ by definition of $\sigma'_P$,
    \item if $v,u \in V_{P_2}$, then by definition of $\sigma'_P$, we have $\sigma'_P(h') = a'$ such that $a'(w) = \tau_{P_2}(w,\Obs(h))$ for all $(w,\Obs(h))$ in the domain of $\tau_{P_2}$. As $\Obs(h) = \Obs(g)$, it follows that $\sigma'_P(g') = a'$ with the same partial function $a'$.
\end{itemize}

It remains to prove that, with the defined strategy $\sigma'_P$, every play $\rho'$ compatible with $\sigma'_P$ belongs to $\Rabin((E_j,F_j)_{j\in J})$ in $G'$. Fix such a play $\rho' = v'_0a'_0v'_1a'_1  \ldots$. By Lemma~\ref{lem:corresp}, consider the play $\rho = \corresp{\rho'} = v_0a_0v_1a_1  \ldots$ (compatible with $\sigma_{P_1}$) in $G$ and the associated play $\bar \rho = (v_0,\Obs(v_0)) ~ a_0 ~ (v_1,\Obs(v_0a_0v_1)) ~a_1 ~ \ldots$ in $G[\sigma_{P_1}]$. By definition of $\sigma'_P$, the play $\bar \rho$ is compatible with $\tau_{P_2}$. Hence by definition of $\tau_{P_2}$, $\bar \rho$ belongs to $\Rabin((\bar E_j,\bar F_j)_{j\in J})$ in $G[\sigma_{P_1}]$ and thus $\rho$ belongs to $\Rabin((E_j,F_j)_{j\in J})$ in $G$. Consequently, we have $\rho' \in \Rabin((E_j,F_j)_{j\in J})$ in $G'$ as its sequence of states is the same as for $\rho$.
\end{proof}

Let us now prove the second direction of Theorem~\ref{thm:2Provers->1Prover}.

\begin{lemm}
In \pcG, let $\sigma'_P$ be an observation-based strategy of $P$ such that for all strategies $\sigma'_C$ of $C$, we have $\langle \sigma'_P, \sigma'_C\rangle_{v'_0} \in \Rabin((E_j,F_j)_{j\in J})$. Then in $\mathcal G$, there exists an observation-based strategy $\sigma_{P_1}$ of $P_1$ such that for all strategies $\sigma_C$ of $C$, there exists a strategy $\sigma_{P_2}$ of $P_2$ such that $\langle \sigma_{P_1}, \sigma_C, \sigma_{P_2}\rangle_{v_0} \in \Rabin((E_j,F_j)_{j\in J})$.
\end{lemm}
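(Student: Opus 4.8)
The plan is to \emph{decompose} the single Prover strategy $\sigma'_P$ into a strategy $\sigma_{P_1}$ for $P_1$ (which must be observation-based and must not depend on $\sigma_C$) together with, for each Challenger strategy $\sigma_C$, a strategy $\sigma_{P_2}$ for $P_2$. The key device is a reconstruction map $\Phi$ sending a history $h = v_0a_0 \ldots v_k$ of $G$ to a history $\Phi(h)$ of $G'$ with $\corresp{\Phi(h)} = h$, defined by induction on the length of $h$: starting from $\Phi(v_0) = v'_0$, when the current prefix $h_{<}$ is extended by a $P_1$-state we append the action of $A_{P_1}$ read in $h$; by a $C$-state we append the action of $A_C$ read in $h$; and by a $P_2$-state $v$ we append the \emph{function} $f = \sigma'_P(\Phi(h_{<})) \in \ARondePd$ prescribed by $\sigma'_P$, which is precisely the $G'$-action whose value $f(v)$ drives the corresponding transition of $G'$. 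Thus $\Phi(h)$ is the unique extension of the $G$-history into a $G'$-history in which all of $P$'s moves follow $\sigma'_P$, and $\corresp{\Phi(h)} = h$ by construction.

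Second, I would set $\sigma_{P_1}(h) = \sigma'_P(\Phi(h))$ for every $h \in \Histi{G}{P_1}$ and check that it is observation-based. This is the crux and the main obstacle: in $G'$ player $P$ observes \emph{more} than $P_1$ does in $G$, since his own functions of $\ARondePd$ are visible, whereas in $G$ player $P_1$ only sees the single $P_2$-action actually played. I would prove by induction that $\Obs'(\Phi(h))$ depends only on $\Obs(h)$, so that $\sigma'_P$, being observation-based, returns the same action. The point is that the surplus information carried by $\Phi(h)$ is harmless: the functions appended at $P_2$-states are produced \emph{deterministically} by $\sigma'_P$ from the already reconstructed observation, hence coincide (and are visible, so observed equally) whenever the two observed histories agree; the Challenger actions appended at $C$-states may genuinely differ but are all observed as $\sharp$ in both games; and successor states have matching observations because $\Obs$ is \visibleactionstable (Lemma~\ref{lem:actionStable}), and at $C$-steps because $\Obs(h)$ already fixes them. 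Since neither $\Phi$ nor $\sigma'_P$ mentions $\sigma_C$, the strategy $\sigma_{P_1}$ is well-defined independently of Challenger.

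Third, fix an arbitrary Challenger strategy $\sigma_C$ in $\mathcal G$. I would transport it to $G'$ by $\sigma'_C(h') = \sigma_C(\corresp{h'})$ (legitimate since $V'_C = V_C$ and $A'_C = A_C$ by Definition~\ref{def:OneProver}), and define $\sigma_{P_2}(h) = \sigma'_P(\Phi(h))(v)$ for every $P_2$-history $h$ ending in a state $v$. With these choices the outcome $\rho = \langle \sigma_{P_1}, \sigma_C, \sigma_{P_2}\rangle_{v_0}$ in $G$ lifts, step by step, to the play $\Phi(\rho)$ in $G'$: at $P_1$-states $\sigma_{P_1}$ plays $\sigma'_P(\Phi(\cdot))$, at $P_2$-states $\sigma_{P_2}$ plays $f(v)$ for $f = \sigma'_P(\Phi(\cdot))$, and at $C$-states $\sigma'_C$ plays the same action as $\sigma_C$ because $\corresp{\Phi(h)} = h$. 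Hence $\Phi(\rho)$ is exactly the play $\rho' = \langle \sigma'_P, \sigma'_C\rangle_{v'_0}$, and $\corresp{\rho'} = \rho$.

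Finally, I would invoke the hypothesis on $\sigma'_P$: since it wins against every Challenger strategy, $\rho' \in \Rabin((E_j,F_j)_{j\in J})$. Because $\corresp{\rho'} = \rho$ and the map $\corresp{\cdot}$ alters only actions, leaving the sequence of states unchanged (Lemma~\ref{lem:corresp}), the plays $\rho$ and $\rho'$ visit the same states infinitely often; as the Rabin objective depends solely on this set of states, we conclude $\rho \in \Rabin((E_j,F_j)_{j\in J})$ in $G$, which is exactly the required statement.
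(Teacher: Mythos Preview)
Your approach and the paper's are essentially the same: both reconstruct, from a $G$-history, a $G'$-history in which the $P_2$-actions are replaced by the functions $\sigma'_P$ prescribes, use this to define $\sigma_{P_1}$, and then, for each $\sigma_C$, choose $\sigma_{P_2}$ so that the resulting $G$-play lifts to a $\sigma'_P$-compatible $G'$-play. The paper packages the reconstruction as a graph $H[\sigma'_P]$ on observed histories rather than your map $\Phi$, but this is presentation.

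There is one genuine slip: $\Phi(h)$ is \emph{not} always a valid $G'$-history with $\corresp{\Phi(h)} = h$. At a $P_2$-state $v$ you insert $f = \sigma'_P(\Phi(h_{<}))$, whose $G'$-successor is $\delta(v, f(v))$; this coincides with the next $G$-state $\delta(v,a)$ only if $f(v) = a$, and nothing guarantees that when $P_2$ in $h$ plays an action $\sigma'_P$ would not have chosen. In that case $\Phi(h)\notin\Hist{G'}$ and $\sigma'_P(\Phi(h))$ is undefined, so your inductive definitions of $\Phi$ and of $\sigma_{P_1}$ break as written. The paper confronts exactly this point: it follows the reconstruction only ``if it exists'' and defines $\sigma_{P_1}$ arbitrarily (yet $\Obs$-consistently) on the remaining histories; equivalently, regard the observation-based $\sigma'_P$ as a function of observation sequences and extend it freely. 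With either fix everything else you wrote stands: the induction showing $\Obs'(\Phi(h))$ depends only on $\Obs(h)$ is then sound, and along the actual outcome $\rho$ your choice $\sigma_{P_2}(h) = f(v)$ forces $f(v) = a$ at every $P_2$-step, so there $\Phi$ \emph{does} produce valid histories, $\corresp{\Phi(\cdot)}$ returns the prefix of $\rho$, and your identification $\Phi(\rho) = \langle \sigma'_P,\sigma'_C\rangle_{v'_0}$ together with the Rabin transfer via identical state sequences are correct.
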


\begin{proof}
Let us consider an observation-based strategy $\sigma'_{P}$ for player~$P$ in \pcG. From $\sigma'_{P}$ and $G'$, we define the infinite game structure $H[\sigma'_{P}]$, with action set $A$, in the following way:
\begin{itemize}
    \item its states are of the form $\Obs'(h')$ with $h' \in \Hist{G'}$,
    \item its initial state is $\Obs'(v'_0)$,
    \item for its transition function $\Delta$, let $h' \in \Hist{G'}$ with $v'$ being its last state, 
    \begin{enumerate}
        \item if $v' \in V_{C}$, for all $a' \in A_C$, we define $\Delta(\Obs'(h'), a') = \Obs'(h'a'u')$ where $u' = \delta'(v',a')$,
        \item if $v' \in V_{P_1}$, we proceed similarly but with the action $a' = \sigma'_P(h') \in A_{P_1}$ only,
        \item if $v' \in V_{P_2}$, we again consider the only action $a' = \sigma'_P(h') \in \ARondePd$. In this case, let $u' = \delta'(v',a') = \delta(v',a'(v'))$. We then define $\Delta(\Obs'(h'), a'(v')) = \Obs'(h'a'u')$ for the action $a'(v') \in A_2$.\footnote{Recall that $H[\sigma'_{P}]$ has its action set equal to $A$ (and not $A'$).}
    \end{enumerate}
\end{itemize}
Notice that in Case~3 above, there may exist several outgoing transitions from state $\Obs'(h')$, depending on the image of the partial function $a' = \sigma'_P(h')$ and on the histories $g'$ such that $\Obs'(h') = \Obs'(g')$. 

Let us make two comments. First, let us explain why the transition function $\Delta$ is well-defined: 
\begin{itemize}
    \item Recall that if $\Obs'(h') = \Obs'(g')$, then $a' = \sigma'_P(h') = \sigma'_P(g')$ as $\sigma'_P$ is observation-based. That is, for $h'$ and $g'$, an identical action $a'$ is proposed in Cases~2 and~3 above.
    \item Assume by contradiction that there exist two histories $h',g' \in \Hist{G'}$ with $v'_1$ (resp. $v'_2$) being the last state of $h'$ (resp. $g'$), and an action $a \in A$ such that $\Obs'(h') = \Obs'(g')$ but $\Delta(\Obs'(h'),a) \neq \Delta(\Obs'(g'),a)$. In Cases~1 and~2 above, we have $a = a' \in A_C \cup A_{P_1}$, and in Case~3, for the action $a' = \sigma'_P(h') \in \ARondePd$, we have $a = a'(v'_1) = a'(v'_2)$. By the way $\delta'$ is defined from $\delta$, and as $\Obs$ is action-stable in $\mathcal G$ (first part of Lemma~\ref{lem:actionStable}), it follows that $\Obs'(h'a'u'_1) = \Obs'(g'a'u'_2)$ such that $u'_1 = \delta'(v'_1,a')$ and $u'_2 = \delta'(v'_2,a')$. This is in contradiction with $\Delta(\Obs'(h'),a) \neq \Delta(\Obs'(g'),a)$.
\end{itemize}

The following second comment derives from the definition of $H[\sigma'_P]$. If $h = v_0a_0v_1a_1 \ldots v_k \in \Hist{G}$ is a history in $G$ such that the sequence of actions $a_0a_1 \ldots a_{k-1}$ read from the initial state $\Obs'(v'_0)$ of $H[\sigma'_{P}]$ leads to a path in $H[\sigma'_{P}]$, then the last state of this path is $\Obs'(h')$ for some history $h'$ compatible with $\sigma'_P$ such that $\Obs(h) = \Obs(\corresp{h'})$. 

We are now going to define a strategy $\sigma_{P_1}$ for player~$P_1$ in $\mathcal G$ as follows. Let $h = v_0a_0v_1a_1 \ldots v_k \in \Histi{G}{P_1}$ ending with $v_k \in V_{P_1}$. By reading the sequence of actions $a_0a_1 \ldots a_{k-1}$ from the initial state of $H[\sigma'_{P}]$, one can follow a path in $H[\sigma'_{P}]$, if it exists. 
\begin{itemize}
    \item We begin with all histories $h$ such that such a path exists. Let $\Obs'(h')$ be the last state of this path (for some history $h'$ compatible with $\sigma'_P$). We define $\sigma_{P_1}(h) = a$ where $a$ is the label of the unique transition outgoing $\Obs'(h')$ in $H[\sigma'_P]$. Recall that $a = \sigma'_P(h')$ (see Case 2). We extend this definition to all histories $g$ such that $\Obs(g) = \Obs(h)$ by defining $\sigma_{P_1}(g) = a$. 
    \item For the remaining histories $h$ for which $\sigma_{P_1}(h)$ has not been defined yet, we define $\sigma_{P_1}(h)$ arbitrarily such that for all $g$ with $\Obs(g) = \Obs(h)$, then $\sigma_{P_1}(g) = \sigma_{P_1}(h)$.
\end{itemize}
By its definition, the strategy $\sigma_{P_1}$ is observation-based. Notice that it is well-defined. Indeed, we are going to show that if $h,g \in \Histi{G}{P_1}$ are two histories such that $\Obs(h) = \Obs(g)$ and there exists for each of them a path in $H[\sigma'_P]$ ending in states $\Obs'(h')$ and $\Obs'(g')$ respectively, then these states are equal. Assume by contradiction that $\Obs'(h') \neq \Obs'(g')$, and let $h' = v'_0a'_0v'_1a'_1 \ldots v'_k$, $g' = u'_0b'_0u'_1b'_1 \ldots u'_k$. As $\Obs(\corresp{h'}) = \Obs(h)$ and $\Obs(\corresp{g'}) = \Obs(g)$, we get that $\Obs(\corresp{h'}) = \Obs(\corresp{g'})$. Hence, knowing that $\Obs'(h') \neq \Obs'(g')$, by definition of $\Obs'$, there exists a smallest index $\ell < k$ such that $\Obs'(v'_0a'_0 \ldots v'_\ell) = \Obs'(u'_0b'_0 \ldots u'_\ell)$, $v'_\ell, u'_\ell \in V_{P_2}$, and $a'_{\ell} \neq b'_{\ell}$. 
By definition of $H[\sigma'_{P}]$, we have that $\sigma'_P(v'_0a'_0 \ldots v'_\ell) = a'_{\ell}$ and $\sigma'_P(u'_0b'_0 \ldots u'_\ell) = b'_{\ell}$. Moreover, as $\sigma'_P$ is observation-based, we get that  $a'_{\ell} = b'_{\ell}$, which is a contradiction. Hence, the strategy $\sigma_{P_1}$ is well-defined.

Now, suppose that in \pcG, for all strategies $\sigma'_C$ of $C$, the play $\langle \sigma'_P, \sigma'_C\rangle_{v'_0}$  belongs to $\Rabin((E_j,F_j)_{j \in J})$. Let $\sigma_C$ be a strategy of player~$C$ in $\mathcal G$. We have to prove that there exists a strategy $\sigma_{P_2}$ of player~$P_2$ such that $\rho = \langle \sigma_{P_1}, \sigma_C, \sigma_{P_2}\rangle_{v_0}$  belongs to $\Rabin((E_j,F_j)_{j \in J})$ in \pcpG.  We are going to define the play $\rho$ and some related play $\rho'$ in $G'$, prefix by prefix\footnote{Hence we partially define $\sigma_{P_2}$ to obtain $\rho$ (and not for all histories in $\Histi{G}{P_2}$).}, such that: 
\begin{itemize}
    \item the sequence of actions of $\rho$ read from the initial state of $H[\sigma'_{P}]$ leads to an infinite path in $H[\sigma'_{P}]$,
    \item the play $\rho'$ is compatible with $\sigma'_P$ and satisfies $\corresp{\rho'} = \rho$. 
\end{itemize} 
We begin with the prefix $h = v_0$ (resp. $h' = v_0$) for $\rho$ (resp. for $\rho'$) and the initial vertex $\Obs'(v_0)$ of $H[\sigma'_P]$. By induction, let $h \in \Hist{G}$ be a history compatible with $\sigma_{P_1}, \sigma_{C}$ and with $\sigma_{P_2}$ as already defined, and let $h' \in \Hist{G'}$ be the related history compatible with $\sigma'_P$ such that $\corresp{h'} = h$ ($h$ and $h'$ are the currently defined prefixes of $\rho$ and $\rho'$). By induction, let also $\Obs'(h')$ be the last state of the path in $H[\sigma'_{P}]$ induced by the reading of the actions of $h$ in $H[\sigma'_{P}]$. Let us explain how to extend $h, h'$ into two longer prefixes $h_1, h'_1$ satisfying the same properties as $h, h'$. Let $v'$ be the last state of $h'$ (that is also the last state of $h$). 
\begin{itemize}
    \item If $v' \in V_{P_1} \cup V_C$, then we define $h_1 = hau$ compatible with either $\sigma_{P_1}$ or $\sigma_C$, and we also consider $h'_1 = h'au$ such that $\corresp{h'au} = hau$. By definition of $H[\sigma'_{P}]$ and $\sigma_{P_1}$, 
    we have the transition $\Delta(\Obs'(h'),a) = \Obs'(h'au)$ in $H[\sigma'_{P}]$ and $h'au$ is compatible with $\sigma'_P$. 
    \item If $v' \in V_{P_2}$, let $a' = \sigma'_P(h')$ and $a = a'(v')$. Then, we define $\sigma_{P_2}(h) = a$. In this way we get the history $h_1 = hau$ such that $u = \delta(v',a)$ with a corresponding path in $H[\sigma'_{P}]$ ending with the state $\Obs'(h'a'u')$ with $u' = \delta'(v',a') = u$. We thus define $h'_1 = h'a'u'$. Notice that $\corresp{h'a'u'} = hau$.
\end{itemize}
 In this way, we finally get the play $\rho = \langle \sigma_{P_1}, \sigma_C, \sigma_{P_2}\rangle_{v_0}$ and the related play $\rho'$ such that $\corresp{\rho'} = \rho$ and $\rho'$ is compatible with $\sigma'_P$. As for all strategies $\sigma'_C$ of $C$ in \pcG, we have $\langle \sigma'_P, \sigma'_C\rangle_{v'_0} \in \Rabin((E_j,F_j)_{j \in J})$, in particular it follows that $\rho'$ belongs to $\Rabin((E_j,F_j)_{j \in J})$. This is also the case for $\rho$ as its sequence of states is the same as for $\rho'$.
\end{proof}

\subsection{Eliminating Imperfect Information} \label{subsec:eliminating}

In the previous section, we showed how to go from a three-player game with imperfect information to a two-player game with imperfect information. 
To solve this two-player game, we will get rid of the imperfect information to apply standard game-theoretic techniques.
In the literature \cite{lpar/ChatterjeeD10,DBLP:journals/lmcs/RaskinCDH07,DBLP:journals/jcss/Reif84}, this is usually done in two steps : first, make the objective \emph{visible}, that is, such that any two similarly observed plays agree on the winning condition, second, apply the \emph{subset construction} to recall the set of possible visited states, and letting them be observed. In this work, we proceed in only one step, in Definition~\ref{def:last_game}, by simultaneously modifying the game structure to both entail the subset construction on the states of the $\pc$ game, and the product with an automaton\footnote{We suppose that the reader is familiar with the concept of automaton with an $\omega$-regular acceptance condition like Streett or parity (see e.g. \cite{DBLP:conf/dagstuhl/2001automata}).} that monitors the winning condition along the plays (see Definition~\ref{def:parityAut}).

From the previous Section~\ref{subsec:2-pl-pr-g-ii}, we consider a two-player zero-sum game $\mathcal{G} = (G,\Rabin((E_j,F_j)_{j \in J}),\Obs)$\footnote{Again in this section, we get rid of the $'$ notation to ease readability.} with 
\begin{itemize}
    \item a game structure $G = (V, A, \lbrace P, C\rbrace, \delta, v_0)$, 
    \item $V = V_{P} \cup V_C$ and $A = A_{P} \cup A_C$ where $V_{P}, V_C$ are pairwise distinct sets as well as $A_{P}, A_C$,
    \item a Rabin objective $\Rabin((E_j,F_j)_{j \in J})$ for player~$P$, the opposite objective for player~$C$,
    \item an observation function $\Obs$ for player~$P$ such that 
    \begin{itemize}
        \item all actions $a \in A_P$ are visible,
        \item for all $a \in A_C$, $\Obs(a) = \sharp$,
        \item $\Obs$ is \StronlgyPstable (see Lemma~\ref{lem:strongBis}), and thus \pstable (see Corollary~\ref{cor:playerStable}).
    \end{itemize}
\end{itemize}
This game is fixed in this section.

We begin by defining a deterministic parity automaton that monitors the Rabin objective of the game $\mathcal G$ (see Definition~\ref{def:DetParityAut} below). This is made possible by complementing the following non-deterministic Streett automaton.

\begin{defi} \label{def:parityAut}
From the game $\mathcal G$, we define a \emph{non-deterministic} Streett automaton ${\automaton}' = (Q',q'_0,\alphabet,E',\Streett((E_j,F_j)_{j\in J})$ such that 
\begin{itemize}
    \item its set of states $Q'$ is equal to $V$ and its initial state $q'_0$ to $v_0$,
    \item its alphabet is equal to $\alphabet = \{\Obs(va) \mid v \in V, a \in A$, and $\delta(v,a)$ is defined$\}$
    \item its set $E'$ of edges is equal to $\{(v,o\bar{o},v') \mid o = \Obs(v), \bar{o} = \Obs(a)$ for some $a \in A$, and $v' = \delta(v,a)\}$
    \item its Streett acceptance condition $\Streett((E_j,F_j)_{j\in J})$ is the opposite of $\Rabin((E_j,F_j)_{j\in J})$. 
\end{itemize}
\end{defi}

Notice that each element $o \bar{o}$ of the alphabet $\alphabet$ is such that $\bar{o}$ is the observation of either a visible action of $A_P$ or an invisible action of $A_C$ (in which case $\bar{o} = \sharp)$. As the alphabet $\Sigma$ is defined from observations, ${\automaton}'$ is non-deterministic. 

\begin{lemm} \label{lem:AutParity}
A word $w \in \alphabet^\omega$ is accepted by ${\automaton}'$ if, and only if, there exists a play $\rho$ in $G$ such that $\Obs(\rho) = w$ and $\rho \not\in \Rabin((E_j,F_j)_{j\in J})$. 
\end{lemm}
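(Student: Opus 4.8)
The plan is to prove both implications by exhibiting a tight correspondence between accepting runs of ${\automaton}'$ on $w$ and plays $\rho$ of $G$ with $\Obs(\rho) = w$ that lie outside $\Rabin((E_j,F_j)_{j\in J})$. The first observation to record is that the observation of any play factors canonically into letters of $\alphabet$: for $\rho = v_0 a_0 v_1 a_1 \cdots$ we have $\Obs(\rho) = \Obs(v_0)\Obs(a_0)\,\Obs(v_1)\Obs(a_1)\cdots$, and since each $\delta(v_k,a_k)$ is defined, each block $\Obs(v_k)\Obs(a_k) = \Obs(v_k a_k)$ is by definition a letter of $\alphabet$. Because state-observations and action-observations live in disjoint codomains, this factorization is unique, so reading $w = c_0 c_1 c_2 \cdots$ letter by letter matches reading $\Obs(\rho)$ one state--action pair at a time. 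The heart of the argument is then that, by the definition of $E'$, a sequence $q'_0 q'_1 q'_2 \cdots$ of states of $Q' = V$ is a run of ${\automaton}'$ on $w$ if and only if it is the vertex sequence of some play with observation $w$.

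For the ($\Leftarrow$) direction, I would take a play $\rho = v_0 a_0 v_1 \cdots$ with $\Obs(\rho) = w$ and $\rho \notin \Rabin((E_j,F_j)_{j\in J})$ and check that its vertex sequence $v_0 v_1 v_2 \cdots$ is a run of ${\automaton}'$: indeed $v_0 = q'_0$, and for each $k$ the triple $(v_k, \Obs(v_k)\Obs(a_k), v_{k+1})$ is an edge of $E'$ because $v_{k+1} = \delta(v_k,a_k)$ and the state-observation part of the letter is $\Obs(v_k)$. Since the set of states visited infinitely often by this run coincides with $\Inf(\rho)$, and $\Streett((E_j,F_j)_{j\in J})$ is the exact complement of $\Rabin((E_j,F_j)_{j\in J})$, the hypothesis $\rho \notin \Rabin(\cdots)$ says precisely that the run satisfies the Streett condition; hence the run is accepting and $w$ is accepted.

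For the ($\Rightarrow$) direction, I would start from an accepting run $q'_0 q'_1 q'_2 \cdots$ on $w = c_0 c_1 \cdots$, writing each letter $c_k = o_k \bar o_k$. By the definition of $E'$, each edge $(q'_k, o_k \bar o_k, q'_{k+1})$ guarantees $o_k = \Obs(q'_k)$ together with the existence of an action $a_k \in A$ satisfying $\Obs(a_k) = \bar o_k$ and $\delta(q'_k, a_k) = q'_{k+1}$; I fix one such $a_k$ for each $k$ (several may exist, as the game need not be action-unique, but the choice is immaterial). Then $\rho = q'_0 a_0 q'_1 a_1 \cdots$ is a genuine play of $G$ with $\Obs(\rho) = w$, and as $\Inf(\rho)$ equals the set of states visited infinitely often by the run, acceptance of the run (the Streett condition) translates back into $\rho \notin \Rabin((E_j,F_j)_{j\in J})$.

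I expect the argument to be essentially bookkeeping, with the only delicate point being that the acceptance conditions transfer cleanly. This works precisely because $\Rabin$ and $\Streett$ are defined purely on states, so the many-to-one reconstruction of actions in the ($\Rightarrow$) direction never affects membership in $\Rabin((E_j,F_j)_{j\in J})$, and the state-observation component $o_k$ of each letter---redundant inside a run but needed so that $\alphabet$ carries the full observation---is automatically consistent. Conceptually, the real content is that the non-determinism of ${\automaton}'$, forced by imperfect information (distinct states or actions may share an observation), is exactly what allows a single word $w$ to stand for the entire bundle of plays consistent with the observation $w$; thus ``$w$ is accepted'' means ``some play consistent with $w$ is losing for $P$.''
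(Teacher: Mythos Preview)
Your proof is correct and follows essentially the same approach as the paper's: both rely on the observation that runs of ${\automaton}'$ on $w$ are exactly the vertex sequences of plays in $G$ whose observation is $w$, and that the Streett acceptance condition is the literal complement of the Rabin objective on those state sequences. The paper's proof is terser (it spells out only the $(\Rightarrow)$ direction and declares the other ``similar''), whereas you carry out both directions explicitly and add some conceptual commentary, but the underlying argument is the same.
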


\begin{proof}
We prove only one implication; the other one is proved similarly. Let $w = o_0 \bar{o}_0 o_1 \bar{o}_1 \ldots \in \alphabet^\omega$ be a word accepted by $\automaton'$. Then there exists a run $v_0 \xrightarrow{o_0 \bar{o}_0} v_1 \xrightarrow{o_1 \bar{o}_1} \ldots $ in $\automaton'$ that satisfies its Streett acceptance condition. By definition of $\automaton'$, there exists a play $\rho = v_0a_0v_1a_1 \ldots$ in $G$ using the same states as the run of $\mathcal{A}'$ labeled by $w$ and such that $\Obs(\rho) = w$. By definition of the Streett acceptance condition of $\automaton'$, we get that $\rho \not\in \Rabin((E_j,F_j)_{j\in J})$.
\end{proof}

It is known that any non-deterministic Streett automaton can be transformed into an equivalent deterministic parity automaton~\cite{DBLP:journals/lmcs/Piterman07}. The latter automaton is then easily complemented by replacing any of its priorities $d$ by $d+1$. These two constructions lead to the following automaton.

\begin{defi}[Monitoring automaton]\label{def:DetParityAut}
From the automaton ${\automaton}'$, we construct a \emph{deterministic} parity automaton ${\automaton} = (Q,q_0,\alphabet,E,\Parity(\beta))$ such that the language $\languageOf(\automaton)$ accepted by $\automaton$ is the opposite of the language accepted by $\automaton'$, i.e.,  $\languageOf(\automaton) = \alphabet^\omega \setminus \languageOf(\automaton')$.
\end{defi}

By Lemma~\ref{lem:AutParity}, we get the next corollary.

\begin{corol} \label{cor:automaton}
A word $w \in \alphabet^\omega$ is accepted by $\automaton$ if, and only if, for every play $\rho$ in $G$ such that $\Obs(\rho) = w$, we have $\rho \in \Rabin((E_j,F_j)_{j\in J})$.
\end{corol}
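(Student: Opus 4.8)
The plan is to obtain this corollary directly from Lemma~\ref{lem:AutParity} by complementation, exploiting the fact that $\automaton$ was constructed precisely so that $\languageOf(\automaton) = \alphabet^\omega \setminus \languageOf(\automaton')$ (Definition~\ref{def:DetParityAut}). First I would unfold acceptance of $w$ by $\automaton$: a word $w \in \alphabet^\omega$ is accepted by $\automaton$ if and only if $w \in \languageOf(\automaton)$, which by Definition~\ref{def:DetParityAut} holds if and only if $w \notin \languageOf(\automaton')$, that is, if and only if $w$ is \emph{not} accepted by $\automaton'$. No property of parity conditions is needed here beyond the language identity defining $\automaton$; the determinism of $\automaton$ is also irrelevant to the statement, since everything is phrased at the level of accepted languages.

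Next I would invoke the contrapositive of Lemma~\ref{lem:AutParity}. That lemma characterizes acceptance by $\automaton'$ as the existence of a play $\rho \in \Plays{G}$ with $\Obs(\rho) = w$ and $\rho \notin \Rabin((E_j,F_j)_{j\in J})$. Negating both sides of this biconditional, $w$ is not accepted by $\automaton'$ exactly when there is \emph{no} such play, i.e.\ when for every play $\rho \in \Plays{G}$ it is not the case that simultaneously $\Obs(\rho) = w$ and $\rho \notin \Rabin((E_j,F_j)_{j\in J})$. Since the lemma is an ``if and only if'', this single negation step already delivers both directions of the corollary at once.

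The only point requiring attention is pushing the negation through the existential quantifier. The clause $\neg\bigl(\Obs(\rho) = w \wedge \rho \notin \Rabin((E_j,F_j)_{j\in J})\bigr)$ is equivalent to the implication $\Obs(\rho) = w \Rightarrow \rho \in \Rabin((E_j,F_j)_{j\in J})$, so the universally quantified statement reads: for every play $\rho \in \Plays{G}$ with $\Obs(\rho) = w$, we have $\rho \in \Rabin((E_j,F_j)_{j\in J})$. Chaining these equivalences reproduces exactly the claimed statement. I do not expect any genuine obstacle: the corollary is a purely logical consequence of Lemma~\ref{lem:AutParity} together with the complementation property of $\automaton$, and the whole argument is a short sequence of biconditional rewritings, the only subtlety being the correct handling of the quantifier alternation under negation.
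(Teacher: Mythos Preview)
Your proposal is correct and matches the paper's approach exactly: the paper simply states that the corollary follows from Lemma~\ref{lem:AutParity}, and your argument spells out precisely the intended reasoning, namely complementation via Definition~\ref{def:DetParityAut} followed by the contrapositive of Lemma~\ref{lem:AutParity}.
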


In the next lemma, we evaluate the size of the automaton $\automaton$ depending on the components of $\mathcal G$.

\begin{lemm} \label{lem:sizeParityAut}
Let $|V|$ and $|A|$ be the size of the game structure $G$ of $\mathcal G$, and $|J|$ be the size of its Rabin objective. Then the deterministic parity automaton $\automaton$ of Definition~\ref{def:DetParityAut} has 
\begin{itemize}
\item a size $|Q|$ exponential in $|V|$ and $|J|$, and $|\alphabet|$ linear in $|V|$ and $|A|$, 
\item a parity acceptance condition of size $|\beta|$ linear in $|V|$ and $|J|$. 
\end{itemize}
\end{lemm}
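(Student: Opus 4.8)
The plan is to track the three size parameters through the two construction steps that define $\automaton$: the passage from $\mathcal G$ to the non-deterministic Streett automaton $\automaton'$ (Definition~\ref{def:parityAut}), and then the determinization-and-complementation producing $\automaton$ (Definition~\ref{def:DetParityAut}).

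First I would read off the size of $\automaton'$ directly from its definition. Its state set is $Q' = V$, so it has exactly $|V|$ states. Its alphabet $\alphabet = \{\Obs(va) \mid v \in V,\ a \in A,\ \delta(v,a) \text{ defined}\}$ consists of observations of transitions, hence has cardinality at most $|V| \cdot |A|$; since applying $\Obs$ can only merge letters, $|\alphabet|$ is linear in $|V|$ and $|A|$. Finally, its Streett condition reuses the $|J|$ pairs of $\mathcal G$, so its index is $|J|$. Crucially, neither determinization nor the subsequent complementation alters the input alphabet, so this already establishes the announced bound on $|\alphabet|$ for $\automaton$.

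Second, I would invoke the determinization of non-deterministic Streett automata into deterministic parity automata~\cite{DBLP:journals/lmcs/Piterman07}. Applied to an automaton with $|V|$ states and Streett index $|J|$, this yields a deterministic parity automaton whose number of states is $2^{O(|V|\cdot|J|\cdot\log(|V|\cdot|J|))}$---that is, exponential in $|V|$ and $|J|$---and whose number of priorities is $O(|V|\cdot|J|)$, i.e. linear in $|V|$ and $|J|$. Complementation is then performed by incrementing every priority by one, as recalled before Definition~\ref{def:DetParityAut}; this leaves the number of states untouched and adds at most one priority, so it preserves both orders of growth. Combining with the alphabet bound from the first step gives all three claimed bounds.

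The argument is essentially bookkeeping, and the single point demanding care is citing the correct complexity of the Streett-to-parity determinization: the state blow-up and the priority count both depend on the acceptance index $|J|$ as well as on the state count $|V|$, so one must use the bound for Streett inputs (rather than the sharper Büchi bound) in order to justify precisely the ``exponential in $|V|$ and $|J|$'' and ``linear in $|V|$ and $|J|$'' phrasing. I would double-check that the complementation shift does not push the priority count out of the linear regime, which it does not since it contributes only an additive constant.
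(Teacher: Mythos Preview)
Your proposal is correct and follows essentially the same approach as the paper's proof: both compute the size of $\automaton'$ directly from Definition~\ref{def:parityAut} (namely $|V|$ states, alphabet of size $O(|V|\cdot|A|)$, and $|J|$ Streett pairs), then invoke Piterman's determinization bounds~\cite{DBLP:journals/lmcs/Piterman07} to obtain $2^{O(|V||J|\log(|V||J|))}$ states and $O(|V||J|)$ priorities, with complementation preserving these orders. Your added remarks about the alphabet being preserved through determinization and about the $+1$ priority shift being negligible are correct elaborations that the paper leaves implicit.
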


\begin{proof}
Consider the non-deterministic Streett of Definition~\ref{def:parityAut}. It has $|V|$ states, an alphabet size in $O(|V| |A|)$, and $|J|$ pairs for its Streett acceptance condition. Determinizing such an automaton and then complementing it into a deterministic parity automaton leads to an automaton with the same alphabet $\Sigma$, a number of states in $2^{O(|V||J| \log(|V||J|))}$, and a number of priorities in $O(|V|  |J|)$~\cite{DBLP:journals/lmcs/Piterman07}.
\end{proof}

We are now ready to define our end-point object, a two-player game with perfect information.
As mentioned at the beginning of this section~\ref{subsec:eliminating},
we proceed in one step to perform both the subset construction from the \pc game, and the product with the monitoring automaton from Definition~\ref{def:parityAut}. 
Note that the construction is \emph{deterministic} in both components, in the sense that, for any history $h$ in $\mathcal{G}$, there exists a unique corresponding history in the new game, and for any pair of action/state $a,v$ such that $hav$ is a valid history in $\mathcal{G}$, there is only one available action and successor state in the new game.

\begin{defi}[Two-player parity game with perfect information]\label{def:last_game}
From the game $\mathcal G$ and the parity automaton $\automaton$, we define a two-player zero-sum parity game $\mathcal{G}' = (G',\Parity(\beta'))$ with a game structure $G' = (V', A', \lbrace P, C\rbrace, \delta', v'_0)$ and a parity objective $\Parity(\beta')$ defined as follows:
\begin{itemize}
    \item $P$ and $C$ are the same two players,
    \item $V'$ is the set\footnote{For the sequel, we need to limit $V'$ to the states accessible from the initial state $v'_0$ (see Lemma~\ref{lem:coherent}).} of states $\{(U,q) \mid U \subseteq V, U \neq \varnothing, q \in Q\}$ such that $V' = V'_P \cup V'_C$ with $V'_P = \{(U,q) \in V' \mid U \subseteq V_P \}$ and $V'_C = \{(U,q) \in V' \mid U \subseteq V_C \}$, 
    \item $v'_0 = (\{v_0\},q_0)$ is the initial state,
    \item $A' = \alphabet$ is the set of actions, such that $A' = A'_P \cup A'_C$ with $A'_P = \{o \bar{o} \in \alphabet \mid \bar{o} \in A_P\}$ and $A'_C = \{o \bar{o} \in \alphabet \mid \bar{o} = \sharp \}$
    \item for the partial transition function $\delta'$, we have $\delta'((U,q),o \bar{o}) = (U',q')$ if
    \begin{itemize}
        \item $U' = \{ v' \in V \mid o = \Obs(v)$ for some $v \in U$, $\bar{o} = \Obs(a)$ for some $a \in A$, and $\delta(v,a) = v'\}$, and $U' \neq \varnothing$,
        \item $(q,o \bar{o},q') \in E$, 
    \end{itemize}
    \item the parity objective $\Parity(\beta')$ for player~$P$ uses the priority function $\beta'$ defined by $\beta'(U,q) = \beta(q)$ for all $(U,q) \in V'$.
\end{itemize}
\end{defi}

The new game $\mathcal{G}'$ has perfect information. Notice that $\delta'$ is a (partial) function since $\automaton$ is deterministic, and that it is deadlock-free. By definition, the sets $A'_P$ and $A'_C$ form a partition of $A'$. This is also the case  that $V'_P$ and $V'_C$ form a partition of $V'$, as shown in the next lemma. In addition, this lemma states that the states of $U$ when $(U,q) \in V'_P$ are all observed similarly by Prover in $\mathcal{G}$, thus making it safe to be perfectly observed by Prover in $\mathcal{G}'$.

\begin{lemm}[State-visibility in $\mathcal{G}'$] \label{lem:coherent}
For each state $(U,q)$ of $G'$ for which there exists a history in $\Hist{G'}$ whose last state is $(U,q)$, we have that either $U \subseteq V_P$ or $U \subseteq V_C$. Moreover, $\Obs(v) = \Obs(u)$ for all $v,u \in U$. 
\end{lemm}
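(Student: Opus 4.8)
The plan is to proceed by induction on the length of a history $\pi \in \Hist{G'}$ whose last state is $(U,q)$, carrying the following invariant: there exists a word $w \in \alphabet^*$ such that every $v \in U$ is the last state of some history $h \in \Hist{G}$ with $\Obs(h) = w$. Once this invariant is available, both conclusions of the lemma are immediate consequences of the \stronglyPstability of $\Obs$ (Lemma~\ref{lem:strongBis}): given $v,u \in U$, choose histories $h = h_1v$ and $g = g_1u$ in $G$ with $\Obs(h)=\Obs(g)=w$; removing the last states shows $\Obs(h_1) = \Obs(g_1)$, so Lemma~\ref{lem:strongBis} yields that $v$ and $u$ are controlled by the same player and that $\Obs(v)=\Obs(u)$. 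In particular all states of $U$ lie in a single $V_P$ or $V_C$, and all share one observation.

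For the base case, $\pi = v'_0 = (\{v_0\},q_0)$, so $U = \{v_0\}$ and the invariant holds with $w = \Obs(v_0)$, witnessed by the trivial history $v_0$. For the inductive step, assume $(U,q)$ satisfies the invariant with word $w$ and consider a transition $\delta'((U,q),o\bar{o}) = (U',q')$. By the definition of $\delta'$ in Definition~\ref{def:last_game}, each $v' \in U'$ has the form $v' = \delta(v,a)$ for some $v \in U$ with $\Obs(v)=o$ and some $a \in A$ with $\Obs(a)=\bar{o}$. Using the invariant, pick a history $h$ ending in $v$ with $\Obs(h)=w$; then $h' := h\,a\,v'$ is a history of $G$ ending in $v'$ whose strict prefix $h\,a$ has observation $w\bar{o}$, a value that does not depend on the particular $v' \in U'$.

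The crux is then that, for any two elements $v'_1, v'_2 \in U'$, the strict prefixes of their witnessing histories both have observation $w\bar{o}$, hence equal observation. Applying Lemma~\ref{lem:strongBis} to these two prefixes shows that $v'_1$ and $v'_2$ are controlled by the same player and satisfy $\Obs(v'_1) = \Obs(v'_2)$; write $o'$ for this common observation. Consequently every witnessing history $h' = h\,a\,v'$ has $\Obs(h') = w\,\bar{o}\,o'$, which is a single word $w' := w\bar{o}o'$ independent of $v'$. This re-establishes the invariant for $(U',q')$ and closes the induction.

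The only genuine subtlety, and the step I expect to be the main obstacle, is the apparent circularity between defining a common word $w'$ for $U'$ and proving that all states of $U'$ share an observation: the word can only be fixed once the common successor observation $o'$ is known. This is resolved precisely as above, by folding the state-visibility conclusion into the inductive invariant and rederiving it at each level from Lemma~\ref{lem:strongBis} applied to prefixes that $\delta'$ forces to agree. Note that mere \pstability would not suffice: it constrains positions only along two \emph{globally} indistinguishable histories, whereas here the successor observation must be determined by the observation of the strict prefix alone (regardless of the action actually taken). It is exactly this strengthening, guaranteed by Lemma~\ref{lem:strongBis}, that makes the single-step propagation go through.
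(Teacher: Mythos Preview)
Your proof is correct and follows essentially the same approach as the paper's: both arguments unwind the subset construction to produce, for any two $v,u \in U$, histories $h=h_1v$ and $g=g_1u$ in $G$ with $\Obs(h_1)=\Obs(g_1)$, and then invoke the \stronglyPstability of $\Obs$ (Lemma~\ref{lem:strongBis}) to conclude. The paper does this in one stroke---reading the action sequence $o_0\bar o_0\ldots o_{k-1}\bar o_{k-1}$ off the $G'$-history and asserting that suitable $G$-histories exist ``by definition of $G'$''---whereas you make that existence explicit via an induction carrying the common-observation invariant; your remark on why mere \pstability would not suffice is a nice clarification the paper leaves implicit.
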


\begin{proof}
Let $h' = v'_0o_0\bar{o}_0v'_1o_1\bar{o}_1 \ldots v'_k \in \Hist{G'}$ be a history such that $v'_k = (U,q)$. Let $v, u$ be two states in $U$. Then by definition of $G'$, from $h'$, one can construct two histories $h = h_1v$ and $g = g_1u$ in $\Hist{G}$ such that $\Obs(h_1) = \Obs(g_1) = o_0\bar{o}_0o_1\bar{o}_1 \ldots o_{k-1}\bar{o}_{k-1}$. As $\Obs$ is \StronlgyPstable (Lemma~\ref{lem:strongBis}), it follows that $v$ and $u$ are controlled by the same player and $\Obs(v) = \Obs(u)$. 
\end{proof}

Notice that the previous lemma motivates the restriction of $V'$ to states accessible from the initial state in Definition~\ref{def:last_game}. 

In the next lemma, we evaluate the size of the new game $\mathcal{G}'$.

\begin{lemm} \label{lem:sizeLastGame}
Let $|V|$ and $|A|$ be the size of the game structure $G$ of $\mathcal{G}$, and $|J|$ be the size of its Rabin objective. Then the  game $\mathcal{G}'$ has 
\begin{itemize}
\item a game structure $G'$ with size $|V'|$ exponential in $|V|$ and $|J|$, and $|A'|$ linear in $|V|$ and $|A|$, 
\item a parity objective $\Parity(\beta')$ with size $|\beta'|$ linear in $|V|$ and $|J|$. 
\end{itemize}
\end{lemm}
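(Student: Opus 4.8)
The plan is to read each size bound straight off Definition~\ref{def:last_game} and feed in the estimates for the monitoring automaton $\automaton$ established in Lemma~\ref{lem:sizeParityAut}. Recall from that lemma that $\automaton$ has a number of states $|Q|$ in $2^{O(|V||J|\log(|V||J|))}$, an alphabet $\Sigma$ of size $O(|V||A|)$, and a parity acceptance condition using $|\beta|$ priorities with $|\beta|$ in $O(|V||J|)$. Since $\mathcal{G}'$ is built purely by combining the subset construction on $G$ with the synchronized product with $\automaton$, each of its three size parameters is a simple function of these quantities.

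First I would bound the state set. By Definition~\ref{def:last_game}, every state of $G'$ is a pair $(U,q)$ with $\varnothing \neq U \subseteq V$ and $q \in Q$, so $|V'| \leq (2^{|V|}-1)\cdot |Q|$. Substituting the bound on $|Q|$ yields $|V'| \leq 2^{|V|}\cdot 2^{O(|V||J|\log(|V||J|))} = 2^{O(|V||J|\log(|V||J|))}$, the additive $|V|$ in the exponent being absorbed by the dominant term. Hence $|V'|$ is exponential in $|V|$ and $|J|$, as claimed.

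Next I would handle the action set and the priority function, both of which are immediate. Definition~\ref{def:last_game} sets $A' = \Sigma$, the alphabet of $\automaton$, whose size is $O(|V||A|)$ by Lemma~\ref{lem:sizeParityAut}; thus $|A'|$ is linear in $|V|$ and $|A|$. For the objective, the priority function is $\beta'(U,q) = \beta(q)$, so the set of priorities used by $\beta'$ is contained in that used by $\beta$, giving $|\beta'| = |\beta| \in O(|V||J|)$, that is, linear in $|V|$ and $|J|$.

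There is no genuine difficulty here: the only point needing a word of care is the bound on $|V'|$, where the subset-construction blow-up $2^{|V|}$ is multiplied by $|Q|$. One must simply observe that this extra factor does not change the asymptotic order of the exponent, because $|Q|$ already carries an exponential factor of the same $|V||J|\log(|V||J|)$ order. Everything else follows by direct inspection of Definition~\ref{def:last_game} together with Lemma~\ref{lem:sizeParityAut}.
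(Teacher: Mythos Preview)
Your proposal is correct and follows exactly the approach of the paper, which simply states that the lemma follows from the definition of $\mathcal{G}'$ and from Lemma~\ref{lem:sizeParityAut}. You have merely spelled out the straightforward bookkeeping that the paper leaves implicit.
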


\begin{proof}
The proof of this lemma follows from the definition of $\mathcal{G}'$ and from Lemma~\ref{lem:sizeParityAut}. 
\end{proof}

In the next lemma, we further develop the relationship between histories (resp. plays) in the two games $\mathcal G$ and $\mathcal{G}'$.

\begin{lemm}
\label{lem:correspbis} 
\begin{enumerate} 
    \item Let $h = v_0a_0v_1a_1 \ldots v_k \in \Hist{G}$ be a history in $G$. Then it corresponds to $h$ a unique history $h' = v'_0o_0\bar{o}_0v'_1o_1\bar{o}_1 \ldots v'_k \in \Hist{G'}$, denoted by $h' = \correspbis{h}$, such that 
    \begin{itemize}
        \item $\Obs(h) = o_0\bar{o}_0 \ldots o_{k-1}\bar{o}_{k-1}\Obs(v_k)$, 
        \item $v'_k$ is equal to a unique state $(U,q)$ such that $v_k \in U$ and $q$ is the state of $\automaton$ reached by reading $o_0\bar{o}_0 \ldots o_{k-1}\bar{o}_{k-1}$ in $\automaton$ from $q_0$. 
    \end{itemize}
    \item Let $h, g \in \Hist{G}$ be two histories in $G$. If $\Obs(h) = \Obs(g)$, then $\correspbis{h} = \correspbis{g}$. Conversely, if $\correspbis{h} = \correspbis{g}$ and the last states of $h,g$ both belong to $V_P$, then $\Obs(h) = \Obs(g)$.
    \item Given a play $\rho = v_0a_0v_1a_1 \ldots \in \Plays{G}$, there exists a unique play $\rho' = v'_0o_0\bar{o}_0v'_1o_1\bar{o}_1 \ldots \in \Plays{G'}$, denoted by $\correspbis{\rho}$, such that $\Obs(\rho) = o_0\bar{o}_0o_1\bar{o}_1 \ldots$. 
    Moreover, $\correspbis{\rho}$ belongs to $\Parity(\beta')$ if, and only if,  $\Obs(\rho)$ belongs to $\languageOf(\automaton)$. \label{item:visibility}
\end{enumerate}
\end{lemm}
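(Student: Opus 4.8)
The plan is to construct $\correspbis{\cdot}$ explicitly using the deterministic transition structure of $\mathcal{G}'$ and then read off the three claimed properties, treating Part~1 and the existence part of Part~3 by a single induction on length. For Part~1 I would induct on the length of $h = v_0a_0v_1a_1\ldots v_k$. The base case $h = v_0$ forces $\correspbis{h} = v'_0 = (\{v_0\},q_0)$, which clearly satisfies both bullet points. For the inductive step, given $\correspbis{v_0\ldots v_k}$ ending in some $(U,q)$ with $v_k \in U$ and $q$ the state reached in $\automaton$ by reading the observed actions so far, I set $o_k = \Obs(v_k)$, $\bar{o}_k = \Obs(a_k)$ and append the letter $o_k\bar{o}_k \in \alphabet$. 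Since $v_k \in U$ with $\Obs(v_k) = o_k$ and $v_{k+1} = \delta(v_k,a_k)$ with $\Obs(a_k) = \bar{o}_k$, the $\delta'$-successor set $U'$ contains $v_{k+1}$ (hence is nonempty) and the automaton component $q'$ is the unique $\automaton$-successor of $q$ under $o_k\bar{o}_k$ because $\automaton$ is deterministic. Uniqueness of $\correspbis{h}$ then follows because $\delta'$ is a partial function (Definition~\ref{def:last_game}), so the sequence $\Obs(h)$ drives the construction step by step; and $\Obs(h) = o_0\bar{o}_0\ldots o_{k-1}\bar{o}_{k-1}\Obs(v_k)$ is immediate from $o_\ell = \Obs(v_\ell)$, $\bar{o}_\ell = \Obs(a_\ell)$. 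Passing this induction to the limit over prefixes yields the unique $\correspbis{\rho}$ for plays.

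For the forward direction of Part~2, the key observation from Part~1 is that $\correspbis{h}$ is entirely determined by the word $\Obs(h)$: applying $\delta'$ to the common letters $o_\ell\bar{o}_\ell$ starting from $v'_0$ produces the same history, so $\Obs(h) = \Obs(g)$ gives $\correspbis{h} = \correspbis{g}$. For the converse, I would assume $\correspbis{h} = \correspbis{g}$ with last original states $v_k,u_k \in V_P$. Equality of the two histories yields $o_\ell\bar{o}_\ell = o'_\ell\bar{o}'_\ell$ for every $\ell < k$, that is $\Obs(v_\ell) = \Obs(u_\ell)$ and $\Obs(a_\ell) = \Obs(b_\ell)$, together with a common last state $(U,q)$ satisfying $v_k,u_k \in U$ by Part~1. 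The only remaining point is $\Obs(v_k) = \Obs(u_k)$, and this is exactly where Lemma~\ref{lem:coherent} (state-visibility) enters, since it guarantees $\Obs(v) = \Obs(u)$ for all $v,u \in U$. Concatenating the observations then gives $\Obs(h) = \Obs(g)$.

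For the parity equivalence in Part~3, I would track the second component of $\correspbis{\rho} = (U_0,q_0)(U_1,q_1)\cdots$. By construction $(q_\ell, o_\ell\bar{o}_\ell, q_{\ell+1}) \in E$, so $q_0q_1\cdots$ is precisely the unique run of the deterministic automaton $\automaton$ on $\Obs(\rho) = o_0\bar{o}_0o_1\bar{o}_1\cdots$. Because $\beta'(U_\ell,q_\ell) = \beta(q_\ell)$, the set of priorities occurring infinitely often along $\correspbis{\rho}$ coincides with the set occurring infinitely often along that run; as the parity condition depends only on the least such priority, $\correspbis{\rho} \in \Parity(\beta')$ holds if and only if the run is accepting, i.e.\ $\Obs(\rho) \in \languageOf(\automaton)$.

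The hard part will be the converse direction of Part~2: it is the only step that cannot be closed by the determinism of the synchronized subset/automaton product, and it genuinely requires the structural fact that all original states collapsed into a single subset $U$ share one observation (Lemma~\ref{lem:coherent}, which itself rests on the \stronglyPstability of $\Obs$). Everything else amounts to careful bookkeeping: matching the indices of states, actions, and automaton transitions across the product construction of Definition~\ref{def:last_game}.
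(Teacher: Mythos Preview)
Your proposal is correct and follows essentially the same approach as the paper: both read the observation word of $h$ through the deterministic transition function $\delta'$ of $G'$ to obtain $\correspbis{h}$, invoke Lemma~\ref{lem:coherent} for the converse of Part~2, and use $\beta'(U,q)=\beta(q)$ together with the determinism of $\automaton$ for Part~3. Your writeup is simply more explicit (a spelled-out induction for Part~1 and an explicit comparison of priority multisets for Part~3), but there is no substantive difference in the argument.
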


\begin{proof}
We begin with the first statement. Let $h = v_0a_0v_1a_1 \ldots v_k \in \Hist{G}$. Let us read the sequence of actions $o_0\bar{o}_0 \ldots o_{k-1}\bar{o}_{k-1} = \Obs(v_0a_0 \ldots v_{k-1}a_{k-1})$ from the initial state $v'_0$ in $G'$. By definition of $G'$, we thus obtain a unique history $h' = v'_0o_0\bar{o}_0v'_1o_1\bar{o}_1 \ldots v'_k \in \Hist{G'}$ whose last state $v'_k$ is of the form $(U,q)$ with $v_k \in U$ and $q$ being the state of $\automaton$ reached by reading $o_0\bar{o}_0 \ldots o_{k-1}\bar{o}_{k-1}$ in $\automaton$ from $q_0$.

Let us prove the second statement. Let $h = h_1v, g = g_1u \in \Hist{G}$. Suppose that $\Obs(g) = \Obs(h)$. From $g$ and $h$, the same sequence of actions $o_0\bar{o}_0 \ldots o_{k-1}\bar{o}_{k-1} = \Obs(h_1) = \Obs(g_1)$ is thus considered as in the first statement, showing that $\correspbis{h} = \correspbis{g}$. Conversely, suppose that $\correspbis{h} = \correspbis{g}$ and the last states $v,u$ of $h,g$ both belong to $V_P$. From $\correspbis{h} = \correspbis{g}$ and the first statement, we get $\Obs(h_1) = \Obs(g_1)$.  Moreover, from $v,u \in V_P$, by Lemma~\ref{lem:coherent}, we get that $\Obs(v) = \Obs(u)$. Therefore $\Obs(h) = \Obs(g)$.

Let $\rho = v_0a_0v_1a_1 \ldots \in \Plays{G}$. Let us read the sequence of actions $o_0\bar{o}_0 o_1 \bar{o}_1 \ldots = \Obs(\rho)$ from the initial state $v'_0$ in $G'$. By definition of $G'$, we thus obtain a unique play $\rho' = v'_0o_0\bar{o}_0v'_1o_1\bar{o}_1 \ldots \in \Plays{G'}$. Moreover, by definition of $\beta'$, the play $\rho'$ belongs to $\Parity(\beta')$ if, and only if, the word $\Obs(\rho)$ belongs to $\languageOf(\automaton)$.
\end{proof}

The next theorem states the equivalence between winning strategies of Prover in both games $\mathcal G$ and~$\mathcal{G}'$. 

\begin{theo}[Equivalence theorem between $\mathcal G$ and $\mathcal{G}'$]\label{thm:PerfectInfo}
In $\mathcal{G}$, there exists an observation-based strategy $\sigma_{P}$ of $P$ such that for all strategies $\sigma_C$ of $C$, we have  $\langle \sigma_P, \sigma_C\rangle_{v_0} \in \Rabin((E_j,F_j)_{j \in J})$ if, and only if, in $\mathcal{G}'$, there exists a strategy $\sigma'_{P}$ for $P$ such that for all strategies $\sigma'_C$ of $C$, we have $\langle \sigma'_P, \sigma'_C\rangle_{v'_0} \in \Parity(\beta')$.
\end{theo}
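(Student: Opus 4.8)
The plan is to prove the two implications separately, in each case transporting a strategy across the history correspondence $\correspbis{\cdot}$ of Lemma~\ref{lem:correspbis}, and then to close the argument with Corollary~\ref{cor:automaton} together with item~\ref{item:visibility} of Lemma~\ref{lem:correspbis}. These two results are the engine of the whole proof: the subset component $U$ carried by the states $(U,q)$ of $\mathcal{G}'$ collects \emph{all} states of $G$ reachable under a fixed observation, while $\automaton$ accepts exactly those observations \emph{all} of whose $G$-plays lie in $\Rabin((E_j,F_j)_{j \in J})$. Hence a single perfect-information play of $\mathcal{G}'$ encodes a universal statement over the imperfect-information plays of $\mathcal{G}$, which is precisely what the observation-based winning condition of $P$ requires.

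For the direction from $\mathcal{G}$ to $\mathcal{G}'$, I would take an observation-based winning strategy $\sigma_P$ and define $\sigma'_P$ on a $P$-history $h'$ of $G'$ ending in a state $(U,q)$ as follows: pick any history $h$ of $G$ with $\correspbis{h} = h'$ (so that $\Obs(h)$ is determined by $h'$), and set $\sigma'_P(h') = o\bar{o}$, where $o = \Obs(v)$ is the common observation of the states of $U$ (Lemma~\ref{lem:coherent}) and $\bar{o} = \sigma_P(h)$; this is well defined precisely because $\sigma_P$ is observation-based. Given any $\sigma'_C$, I read the action sequence $w \in \alphabet^\omega$ off the outcome $\rho' = \langle \sigma'_P, \sigma'_C\rangle_{v'_0}$. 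The key verification is that \emph{every} play $\rho$ of $G$ with $\Obs(\rho) = w$ is compatible with $\sigma_P$: whenever $\rho$ visits a $P$-state, player-stability forces the corresponding position of $\rho'$ to be $P$-owned as well, and since $P$-actions are visible, the action taken by $\rho$ there is the unique action observed as $\bar{o}$, namely $\sigma_P$ of the relevant observed history. As $\sigma_P$ is winning, all such $\rho$ lie in $\Rabin((E_j,F_j)_{j \in J})$, so $w \in \languageOf(\automaton)$ by Corollary~\ref{cor:automaton}; by item~\ref{item:visibility} of Lemma~\ref{lem:correspbis} this gives $\rho' = \correspbis{\rho} \in \Parity(\beta')$.

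For the converse, I would take a winning $\sigma'_P$ in the perfect-information game and define $\sigma_P(h) = \bar{o}$, where $\correspbis{h} = h'$ ends in a $P$-state and $\sigma'_P(h') = o\bar{o}$. This $\sigma_P$ is observation-based, since $\Obs(h) = \Obs(g)$ implies $\correspbis{h} = \correspbis{g}$ by Lemma~\ref{lem:correspbis}. For an arbitrary $\sigma_C$ with outcome $\rho = \langle \sigma_P, \sigma_C\rangle_{v_0}$, I pass to $\rho' = \correspbis{\rho}$ and observe that every $P$-move of $\rho'$ is exactly the move prescribed by $\sigma'_P$, so $\rho'$ is compatible with $\sigma'_P$ and therefore $\rho' \in \Parity(\beta')$. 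Item~\ref{item:visibility} of Lemma~\ref{lem:correspbis} then yields $\Obs(\rho) \in \languageOf(\automaton)$, and Corollary~\ref{cor:automaton} gives $\rho \in \Rabin((E_j,F_j)_{j \in J})$, as desired.

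The step I expect to be the main obstacle is the claim, in the first direction, that every play of $G$ observed like $\rho'$ is compatible with $\sigma_P$. Making this precise requires arguing simultaneously that the subset $U$ tracked by $\mathcal{G}'$ never under- nor over-counts the $G$-states consistent with a given observation and with the fixed behaviour of $P$, using both player-stability (to align who owns each position uniformly across all these plays) and the visibility of $P$-actions (to pin down $P$'s move uniquely). Once this alignment is secured, the universal quantifier built into Corollary~\ref{cor:automaton} matches the ``for all $\sigma_C$'' quantifier of the winning condition, and the remaining bookkeeping flows routinely through $\correspbis{\cdot}$.
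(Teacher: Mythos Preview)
Your proposal is correct and follows essentially the same route as the paper; the converse direction in particular matches the paper's proof almost verbatim.

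There is one small wrinkle in the forward direction. When you write ``by item~\ref{item:visibility} of Lemma~\ref{lem:correspbis} this gives $\rho' = \correspbis{\rho} \in \Parity(\beta')$'', you are silently assuming that at least one play $\rho$ of $G$ with $\correspbis{\rho} = \rho'$ exists. That is not automatic: the action word $w$ read off $\rho'$ is, a priori, only an element of $\alphabet^\omega$, and item~\ref{item:visibility} is stated starting from a play of $G$. The paper closes this gap by building the finitely branching tree of all $G$-histories $h$ with $\correspbis{h}$ a prefix of $\rho'$ (non-empty at every level because the subset components $U$ along $\rho'$ are non-empty by Definition~\ref{def:last_game}), and invoking K\"onig's lemma to extract an infinite branch $\rho$. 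Alternatively you can sidestep this entirely: since $\beta'(U,q) = \beta(q)$ and the $q$-components of $\rho'$ are precisely the run of the deterministic automaton $\automaton$ on $w$, the equivalence $\rho' \in \Parity(\beta') \Leftrightarrow w \in \languageOf(\automaton)$ holds directly, without any particular $\rho$. Either fix is one line. Your identification of the compatibility claim (every $G$-play observed as $w$ is compatible with $\sigma_P$) as the substantive step is accurate, and your sketch of why it holds via player-stability and visibility of $P$-actions is exactly what the paper uses, only stated more tersely there.
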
  

\begin{proof}
We begin with the first direction of Theorem~\ref{thm:PerfectInfo}. Let us show how to define a strategy $\sigma'_P$ for player~$P$ in $\mathcal{G}'$ from the given observation-based strategy $\sigma_P$ in $\mathcal{G}$. Let $h' \in \Histi{G'}{P}$ ending with a state $v' = (U,q) \in V'_P$. Recall that $U \subseteq V_P$. Take some $v \in U$. By definition of $G'$, from $h'$ and $v$, we can construct a history $h \in \Hist{G}$ such that $\correspbis{h} = h'$ and whose last state is $v$. Notice that such a history $h$ is not unique since we could take another state $u \neq v$ in $U$. However, by Lemma~\ref{lem:correspbis} (second statement) and as $U \subseteq V_P$, for all $g$ such that $\correspbis{g} = \correspbis{h} = h'$, we have $\Obs(g) = \Obs(h)$. Let $a = \sigma_P(h)$, we then define $\sigma'_P(h') = o \bar{o}$ such that $o = \Obs(v)$ and $\bar{o} = \Obs(a) = a$. This strategy $\sigma'_P$ is well-defined since $\sigma_P$ is observation-based.

By hypothesis on $\sigma_P$, we have $\langle \sigma_P, \sigma_C\rangle_{v_0} \in \Rabin((E_j,F_j)_{j \in J})$ for all strategies $\sigma_C$ of $C$. Let us now prove that in $\mathcal{G}'$, for all strategies $\sigma'_C$ of $C$, the play $\langle \sigma'_P, \sigma'_C\rangle_{v'_0}$ belongs to $\Parity(\beta')$. Consider such a play $\rho'$ compatible with $\sigma'_P$. We consider the infinite tree rooted in $v_0$ and whose branches are all histories $h \in \Hist{G}$ such that $\correspbis{h}$ is prefix of $\rho'$. All those histories are compatible with $\sigma_P$ by definition of $\sigma'_P$. By K\"onig's lemma, there exists an infinite branch corresponding to a play $\rho \in \Plays{G}$ that is compatible with $\sigma_P$. Hence $\rho \in \Rabin((E_j,F_j)_{j \in J})$ and $\correspbis{\rho} = \rho'$. Notice that for all $\xi$ such that $\Obs(\rho) = \Obs(\xi)$, $\xi$ also belongs to $\Rabin((E_j,F_j)_{j \in J})$, as $\sigma_P$ is observation-based. By Corollary~\ref{cor:automaton}, it follows that the word $w = \Obs(\rho)$ is accepted by $\automaton$. Hence, by Lemma~\ref{lem:correspbis} (third statement), $\rho' \in \Parity(\beta')$.

We now turn to the second direction of Theorem~\ref{thm:PerfectInfo}.
Let us show how to define a observation-based strategy $\sigma_P$ for player~$P$ in $\mathcal G$ from the given strategy $\sigma'_P$ in $\mathcal{G}'$. Let $h \in \Histi{G}{P}$ ending with a state $v \in V_P$. By Lemma~\ref{lem:correspbis} (first statement), consider the corresponding history $\correspbis{h}$ ending with a state $(U,q)$ such that $v \in U$. Hence, we have $(U,q) \in V'_P$. Let $o \bar{o} = \sigma'_P(\correspbis{h})$. Recall (see Definition~\ref{def:parityAut}) that $\bar{o} = a$ for some visible action $a \in A_P$. We then define $\sigma_P(h) = a$. This strategy $\sigma_P$ is observation-based. Indeed, let $g \in \Hist{G}$ be such that $\Obs(g) = \Obs(h)$. Then, by Lemma~\ref{lem:correspbis} (second statement), we get $\correspbis{g} = \correspbis{h}$ which implies that  
$\sigma_P(g) = \bar o = a = \sigma_P(h)$.

By hypothesis on $\sigma'_P$, we have $\langle \sigma'_P, \sigma'_C\rangle_{v'_0} \in \Parity(\beta')$ for all strategies $\sigma'_C$ of $C$. Let us prove that in $\mathcal G$, for all strategies $\sigma_C$ of $C$, the play $\langle \sigma_P, \sigma_C\rangle_{v_0}$ belongs to $\Rabin((E_j,F_j)_{j \in J})$. Consider such a play $\rho$ compatible with $\sigma_P$. Then by definition of $\sigma_P$, the play $\correspbis{\rho}$ is compatible with $\sigma'_P$ and therefore satisfies $\Parity(\beta')$. Hence, by Lemma~\ref{lem:correspbis} (third statement), $\Obs(\rho)$ is accepted by $\automaton$. Therefore, by Corollary~\ref{cor:automaton}, $\rho \in \Rabin((E_j,F_j)_{j \in J})$. 
\end{proof}

\section{Complexity Analysis} \label{sec:complexity}

In this section, we finally prove the main theorem of our paper stating the complexity of solving the \NCRS problem for parity games (see Theorem~\ref{thm:main}). We also establish the complexity of this problem for reachability games when the number of players and the number of actions is fixed (see Theorem~\ref{thm:mainReach}).

\subsection{Complexity Analysis for Parity Games} 

The proof of the upper bound of Theorem~\ref{thm:main}, first part, aggregates the intermediary results of Proposition~\ref{prop:PCPParityRabin}, and Lemmas~\ref{lem:sizePCgame} and~\ref{lem:sizeLastGame}, on the sizes of the objects we successively constructed, and articulates them with the complexity of solving two-player zero-sum parity games~\cite{DBLP:journals/siamcomp/CaludeJKLS22}. For the PSpace lower bound, the QBF-reduction of~\cite{DBLP:conf/icalp/ConduracheFGR16} (Theorem 7) applies here; all NE responses to $\sigma_0$ in that proof are also SPEs.
Concerning the second part of Theorem~\ref{thm:main} (devoted to the case where the number of players is fixed), the complexity upper bound is obtained as a direct corollary of the first part of Theorem~\ref{thm:main}, and the complexity lower bounds are proved in Appendix~\ref{app:lowerBounds}.

\begin{proof}[Proof of Theorem~\ref{thm:main}, first part, upper bound.]
Figure~\ref{fig_structure} details the different steps for solving the \NCRS problem for reachability and parity games. It finally reduces to solve a two-player zero-sum parity game with perfect information. Solving such a game is in time $n^{O(\log(d))}$ where $n$ is its number of vertices and $d$ its number of priorities~\cite{DBLP:journals/siamcomp/CaludeJKLS22}. For each step of Figure~\ref{fig_structure} where one game is transformed into another game, we stated the time complexity of constructing the new game from the given game (see Proposition~\ref{prop:PCPParityRabin}, and Lemmas~\ref{lem:sizePCgame} and \ref{lem:sizeLastGame}). To prove Theorem~\ref{thm:main}, first part, it thus remains to compute the total complexity from the complexity results of the intermediate steps. Let $\mathcal{G}_0 = (G,(W_i)_{i \in \Pi})$ be a multi-player 
parity game, with a game structure with size $|V|, |A|$, and $|\Pi|$. As the $W_i$'s are parity objectives, we denote by $\alpha_i$ their priority function (with size $|\alpha_i|$). We describe hereafter the complexity of the current step in terms of the size of the \emph{initial} game $\mathcal{G}_0$.
\begin{itemize}
    \item \emph{Constructing the \pcp game with a Rabin objective for the two Provers:} 
     by Proposition~\ref{prop:PCPParityRabin}, the \pcp game $(G',\Rabin((E_j,F_j)_{j \in J}),\Obs)$ has a size $|V'|$ polynomial in $|V|, |A|$ and exponential in $|\Pi|$, a size $|A'|$ polynomial in $|A|$ and exponential in $|\Pi|$, and a size $|J|$ polynomial in $|\Pi|,|\alpha_i|, \forall i$.
    
    \item \emph{Constructing the game \pc game with a Rabin objective for Prover:}
      by Lemma~\ref{lem:sizePCgame} and the previous item, the \pc game $(G',\Rabin((E_j,F_j)_{j \in J}),\Obs)$ has a size $|V'|$ polynomial in $|V|, |A|$ and exponential in $|\Pi|$, a size $|A'|$ exponential in $|V|,|A|$ and double-exponential in $|\Pi|$, and a size $|J|$ polynomial in $|\Pi|,|\alpha_i|, \forall i$.
   
    \item \emph{Constructing the two-player parity game with perfect information:}
     by Lemma~\ref{lem:sizeLastGame} and the previous item, the parity game $(G',\Parity(\beta))$ has a size $|V'|$ exponential in $|V|, |A|$, $|\alpha_i|, \forall i$, and double-exponential in $|\Pi|$, a size $|A'|$ exponential in $|V|,|A|$ and double-exponential in $|\Pi|$, and a size $|\beta|$ polynomial in $|V|,|A|,|\alpha_i|, \forall i$ and exponential in $|\Pi|$.
   
    \item \emph{Solving this parity game:}
      by \cite{DBLP:journals/siamcomp/CaludeJKLS22}, solving the ultimate parity game, or equivalently the \NCRS problem, is in time exponential in $|V|,|A|,|\alpha_i|, \forall i$, and double-exponential in $|\Pi|$. As the game structure of $\mathcal{G}_0$ is action-unique (see Definition~\ref{def:game_structure}), it follows that \Ve{$|A| \leq |V|^2$}, thus leading to an algorithm in time exponential in $|V|$ and each $|\alpha_i|$, and double-exponential in $|\Pi|$.
   
\end{itemize}
\end{proof}

\begin{rema}
Suppose that our algorithm establishes the existence of a solution $\sigma_0$ to the \NCRS problem for parity games. As it is obtained from a memoryless winning strategy in the \emph{final} parity zero-sum game, we get a finite-memory solution $\sigma_0$ whose memory size is exponential when the number of players is fixed, doubly exponential otherwise. For a lower bound on the memory required, it is fairly straightforward to show that it requires exponential memory by reducing a Streett zero-sum game to our problem.
\end{rema}

\subsection{Complexity Analysis for Reachability Games} \label{sec:reach_case}

In the previous sections, we have shown a general solution to solve the problem \NCRS, since parity objectives subsume all $\omega$ regular objectives. In this section, for the particular  case of reachability games, we show that a careful analysis of this simpler case, when \emph{the number of players 
is fixed}, leads to a more fine-tuned solution and a better complexity result: Theorem~\ref{thm:mainReach} states a polynomial complexity instead of the exponential complexity of Theorem~\ref{thm:main}, second part.

This approach follows the same steps as presented in Section~\ref{sec:SolvingPCPGame}, but with a few adjustments. The key idea is that monitoring reachability objectives is simpler than parity ones, which means that the synchronized product (described in the proof of Proposition~\ref{prop:PCPParityRabin}) can omit some information from the original game structure.
Indeed, when monitoring a play in the \pcp game that stems from an original reachability game, one still needs to keep track of the actions of both Provers and Challenger, but the checking of the gain component is simplified compared to parity: only the information of whether each player has already visited his target set is to be remembered. 

This can be exploited to avoid the exponential blowup of the state space in the subset construction phase (with respect to the original state set $V$) to get rid of imperfect information (see Section~\ref{subsec:eliminating}) and thus obtain a better complexity, as we detail in the following subsections. 


In the sequel, we therefore suppose that $\mathcal{G}$ is a reachability game such that $|\Pi|$ is fixed. We can also suppose that $|A|$ is \emph{constant}, leading to the size of $\mathcal{G}$ depending only on $|V|$. Indeed, it can be easily shown that any game structure of a reachability game with $|V|$ vertices and $|A|$ actions can be transformed into one with $\mathcal{O}(|V|^2)$ vertices and two actions such that Player~$0$ has a solution to the \NCRS problem in the initial game if, and only if, he has a solution in the modified game. Such a similar construction can be found in the Appendix of \cite{DBLP:journals/tocl/BruyereFRT24} (the idea is to replace the outgoing transitions of any vertex $v$ by a complete binary tree whose $v$ is the root).

\subsubsection{The \texorpdfstring{\pcp} Game as a Rabin Game} 
\label{subsec:construction_solving_reach}
In this section, we come back to the construction of Section~\ref{subsec:pr-g-parity} and describe the adjustment made to define the \pcp game of a reachability game as a three-player Rabin game.

In Section~\ref{subsec:pr-g-parity}, we defined the game \pcpOG game for a parity game $\mathcal G$, which involves the construction of an observer automaton $\mathcal{O}$ monitoring the states of \pcpG as the plays progressed. In the reachability case, we make some \emph{adaptations} on this observer automaton: the states of $\mathcal O$ keep track of which player has already visited his target set, these states include additional information in a way to define the Rabin objective (translating the objective $W_P$ for the two Provers) directly on $\mathcal O$ and not on \pcpOG, and the alphabet of $\mathcal O$ includes both states and actions of the \pcp (for the need of future constructions).

The next proposition is the counterpart of Proposition~\ref{prop:PCPParityRabin}.

\begin{prop} \label{prop:PCPReachRabin}
Let $\mathcal{G}$ be a reachability game and $|V|$ be the size of its game structure $G$ (the sizes $|A|$ and $|\Pi|$ are constants).
Let $G'$ be the game structure of \pcpG (according to Section~\ref{subsec:DefPCPgame}). Then, 
\begin{itemize}
\item the game \pcpOG uses an observer automaton $\mathcal O = (Q,q_0,\Sigma,E,\Rabin((E_j,F_j)_{j \in J}))$ such that $|Q|$ and $|J|$ are constant, and $|\Sigma|$ is linear in $|V|$,
\item the game structure $G' \times \ObsAut$ of \pcpOG has a set of states of size linear in $|V|$, a constant set of actions and a constant number of Rabin pairs. 
\end{itemize}
\end{prop}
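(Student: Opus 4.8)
The plan is to follow the construction of Proposition~\ref{prop:PCPParityRabin} verbatim for the deviation bookkeeping, and to replace the priority-tracking part of the observer (the only source of the dependence of $|J|$ on the $|\alpha_i|$ there) by a single reachability bit per player. The key feature of reachability that makes this cheaper is \emph{monotonicity}: once the simulated play visits $T_i$, player~$i$ is winning forever. Thus $\gainSim{\rho}_i$ is determined by the single bit recording whether $T_i$ has ever been reached, so a state set of constant size suffices to monitor it and, moreover, the entire Rabin objective can be placed on the observer $\ObsAut$ itself (rather than on the product, as was needed in the parity case to read the $\alpha_i$).

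First I would build $\ObsAut = (Q,q_0,\Sigma,E,\Rabin((E_j,F_j)_{j\in J}))$ whose states are tuples $(\nu,\bar{t},j,d,f)$: here $\nu \in \NegG$ is the \NegComponent of the current state of $G'$ (so that in particular Challenger's gain-component $\bar g$ is exposed inside the observer), $\bar t \in \{0,1\}^{|\Pi|}$ records which target sets have already been visited, and $(j,d,f)$ is the deviation bookkeeping of Proposition~\ref{prop:PCPParityRabin}. On reading $v'a'$ with $u'=\delta'(v',a')$, the transition sets $\nu,\mu$ to the \NegComponents of $v',u'$, puts $t'_i=1$ whenever the $G$-component of $u'$ lies in $T_i$ (and $t'_i=t_i$ otherwise, so $\bar t$ is nondecreasing), and updates $(j,d,f)$ exactly as before. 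Because $|A|$ and $|\Pi|$ are constants, $\NegG$, the set of vectors $\bar t$, and the ranges of $j,d,f$ are all of constant size, so $|Q|$ is constant; and $\Sigma$, the set of defined pairs $v'a'$, has size linear in $|V|$ since $|V'|$ is linear in $|V|$ (Lemma~\ref{lem:sizeGamePCP} with $|A|,|\Pi|$ constant) and every state of $G'$ has constant out-degree.

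Next I would read off the Rabin objective directly from $\ObsAut$, with one pair for condition $(iP)$ and one pair per player $p \in \Pi\setminus\{0\}$ for condition $(iiP)$. For $(iP)$: $E_0=\{(\nu,\bar t,j,d,f)\mid j\neq\varnothing\}$ and $F_0=\{(\nu,\bar t,j,d,f)\mid \bar g\neq\bar t\}$, where $\bar g$ is the gain-component of $\nu$; requiring $\Inf(\rho)\cap E_0=\varnothing$ forces deviations to stop, after which both $\bar g$ and the nondecreasing $\bar t$ stabilise, so $\Inf(\rho)\cap F_0\neq\varnothing$ says exactly that the stabilised prediction $\bar g$ differs from $\bar t=\gainSim{\rho}$. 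For $(iiP)$: $E_p=\{(\nu,\bar t,j,d,f)\mid f=1\}$ and $F_p=\{(\nu,\bar t,j,d,f)\mid j=p \text{ and } g_p<t_p\}$; requiring $\Inf(\rho)\cap E_p=\varnothing$ forces a single eventually-deviating player, while $\Inf(\rho)\cap F_p\neq\varnothing$ forces player~$p$ to deviate infinitely often and to profit from it ($g_p=0<1=t_p$), so that $p$ is that unique player.

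The correctness step — that the induced Rabin objective on $G'\times\ObsAut$ equals $W_P$ — is where I expect the main difficulty, and it turns on the conjunct $j=p$ in $F_p$. The delicate point is that conditions $(iiC)$ and $(iiP)$ only constrain Challenger's prediction for the \emph{deviating} player, so a play in which a single player~$i$ deviates infinitely often with $g_i\geq t_i$ must be a Challenger win even if Challenger mispredicts some non-deviating player~$p$ (i.e. $g_p<t_p$); insisting in $F_p$ that a genuine $p$-deviation has just occurred ($j=p$) is precisely what stops such a misprediction from spuriously satisfying $(E_p,F_p)$, since $p$ then deviates only finitely often and $\Inf(\rho)\cap F_p=\varnothing$. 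I would then close the argument by a routine case split over the Challenger cases $(iC),(iiC),(iiiC)$ and the Prover cases $(iP),(iiP)$, in each case using the stabilisation of $\bar g$ and the monotonicity of $\bar t$ to identify $\gainSim{\rho}$ with the limit of $\bar t$, and checking that exactly one Rabin pair is satisfied precisely when the play lies in $W_P$. Forming the synchronised product (with $\nu$ identified with the \NegComponent already carried by $v'$, and observation $\Obs(v',\bar t,j,d,f)=\Obs(v')$) then yields a state set of size linear in $|V|$, a constant action set, and the announced $|J|=|\Pi|$ Rabin pairs, all constant.
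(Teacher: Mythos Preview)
Your construction is the same as the paper's: identical observer states $(\nu,\bar t,j,d,f)$, identical transitions, and the same Rabin-pair structure (one pair for condition $(iP)$, one per environment player for $(iiP)$). The one place you differ is the explicit conjunct $j=p$ in your $F_p$; the paper writes $F_j=\{(\nu,\bar t,j,d,f)\mid g_j<t_j\}$ with an unfortunate collision between the pair index and the state-component name, and its accompanying explanation (``with $F_j$, that $j$ is the deviating player'') suggests the same intent without literally encoding it in the set. Your argument for why that conjunct is needed --- to block a spurious Prover win via pair $(E_p,F_p)$ when the sole infinitely-deviating player is some $i\neq p$ yet Challenger happens to mispredict $g_p<t_p$ --- is correct and more careful than the paper on this point; otherwise the two proofs coincide.
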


Let us first introduce a notation.  Given a state $v'$ of $G'$, we call \emph{\NegComponent} the tuple obtained from $v'$ by deleting its $G$-component (for the initial state $v'_0$, its \NegComponent is empty, denoted by $\varnothing$). The set of \NegComponents is denoted $\NegG$. Note that this set is of constant size as the \NegComponents only depend on $A$ and $\Pi$ which have constant size. In the synchronized product $G' \times \mathcal{O}$, the states are of the form $(v',q)$ where $v'$ is a state of $G'$ and $q$ is a state of $\mathcal{O}$ called the \emph{\OComponent} of $(v',q)$.

\begin{proof}[Proof of Proposition~\ref{prop:PCPReachRabin}]
We suppose that each reachability objective $W_i$, $i \in \Pi$, of the reachability game $\mathcal G$ uses the target set $T_i \subseteq V$. For the game structure $G'$ of the \pcpG, we use the notations of Definitions~\ref{def:pcp-game-structure-states}-\ref{def:pcp-game-structure-transitions}.

We are going to construct the observer automaton as a deterministic Rabin automaton $\ObsAut = (Q,q_0,\Sigma,E,\Rabin((E_j,F_j)_{j \in J}))$ that observes the states and actions of \pcpG and whose Rabin condition will encode the objective $W_P$ of the Provers. The automaton $\ObsAut$ is defined as follows (Figure~\ref{fig:prover-game-play} should again be helpful): 

\begin{itemize}
    \item The set $Q$ of states is composed of the tuples $(\nu,\bar{t},j,d,f) \in \NegG \times\{0,1\}^{|\Pi|} \times (\Pi \setminus \{0\} \cup \{\varnothing\}) \times (\Pi \setminus \{0\}) \times \{0,1\}$. In addition to the \NegComponent $\nu$, each component $t_i$ of $\bar t$ indicates whether or not $T_i$ has already been visited, $j$ is the last player seen in a player-state (including $j = \varnothing$), $d$ is the last deviating player (with $d \neq \varnothing$), and $f$ oscillates between $0$ and $1$ ($f=1$ meaning that the last two deviating players are different). 
    \item The initial state $q_0$ is the tuple $(\varnothing,\bar{0},0,0,0)$. (As this state will be synchronized with the initial state $v'_0$ of $G'$, we can arbitrarily choose the values of the tuple, except for $\nu = \varnothing$ being the \NegComponent of $v'_0$ and $\bar{t} = \bar{0}$ that indicates that no target set $T_i$ has been visited yet).
    \item The alphabet $\Sigma$ is equal to $\{v'a' \mid v' \in V', a' \in A', \text{ and } \delta'(v',a') \text{ is defined} \}$.
    \item The set $E$ of transitions of $\ObsAut$ if composed of triples $((\nu,\bar{t},j,d,f),v'a',(\mu,\bar{t}',j',d',f'))$ such that: let $u' = \delta'(v',a')$,
    \begin{itemize}
        \item $\nu$ is the \NegComponent of $v'$ and $\mu$ is the \NegComponent of $u'$,
        \item let $u$ be the $G$-component of $u'$; for all $i \in \Pi$, if $u \in T_i$, then $t'_i = 1$, otherwise $t'_i = t_i$,
        \item if $u'$ is not a player-state, then $(j',d',f') = (j,d,f)$.
        \item if $u' = (u,i,\bar{g})$ is a player-state with $i = \varnothing$, then $j' = \varnothing$ and $(d',f') = (d,f)$,
        \item if $u' = (u,i,\bar{g})$ is a player-state with $i \neq \varnothing$, then $j' = d' = i$ and $f' = 1 \Leftrightarrow d' \neq d$,
    \end{itemize} 
\end{itemize}

Let us now describe the Rabin condition $\Rabin((E_j,F_j)_{j \in J})$ of $\ObsAut$. It is composed of the following pairs that we explain below: 
\begin{itemize}
\item a pair $(E_0,F_0)$ such that $E_0 = \{(\nu,\bar{t},j,d,f) \mid j \neq \varnothing\}$ and $F_0 = \{(\nu,\bar{t},j,d,f) \mid \bar{g} \neq \bar{t} \text{ where $\bar g$ is the gain-component of $\nu$}\}$,
\item for each $j \in \Pi \setminus \{0\}$, a pair $(E_j,F_j)$ such that $E_j = \{(\nu,\bar{t},j,d,f) \mid f = 1 \}$ and $F_j = \{(\nu,\bar{t},j,d,f) \mid g_j < t_j \text{ where $\bar g$ is the gain-component of $\nu$}\}$.  
\end{itemize}

The pair $(E_0,F_0)$ corresponds to condition $(iP)$  of the Provers. Indeed, with $E_0$, we impose that eventually, no player deviates, and with $F_0$, as $\bar{g}$ and $\bar{t}$ eventually stabilize, we impose that the gain $\bar{g}$ predicted by $C$ is not correct. Given $j \in \Pi \setminus \{0\}$, with the pair $(E_j,F_j)$, we impose with $E_j$ that there is at most one infinitely deviating player, and with $F_j$, that $j \in \Pi$ is the deviating player and that this deviation is profitable for $j$ (as $g_j = 0 < 1 = t_j$).

Notice that $\ObsAut$ has a constant number $|Q|$ of states, a constant number $|J|$ of Rabin pairs, and an alphabet size $|\Sigma|$ linear in $|V|$ (see Definitions~\ref{def:pcp-game-structure-states}-\ref{def:pcp-game-structure-transitions} with $A$ and $\Pi$ of constant size). 

\begin{figure} 
\begin{center}
	\begin{tikzpicture}[->, >=latex,shorten >=1pt, scale=0.7, every node/.style={scale=1, align=center}]

    \node[] (game) at (-2, 11) {$G'$};
	\node[draw,circle, dotted, minimum height = 0.8cm, minimum width = 0.8cm] (v'0) at (1, 11) {$v'_0$};
	\node[draw, circle, dotted] (v'1) at (5, 11) {$v'_1$};
	\node[draw, circle, dotted, minimum height = 0.8cm] (v'2) at (9, 11) {$v'_2$};
    \node[] (s) at (13, 11) {$\dots$};

	\path[->] (v'0) edge node[above] {$a'_0$} (v'1);
	\path[->] (v'1) edge node[above] {$a'_1$} (v'2);
	\path[->] (v'2) edge node[above] {$a'_2$} (s);

    \node[] (obs) at (-2, 9) {$\mathcal O$};
	\node[draw, circle, dotted,minimum height = 0.8cm, minimum width = 0.8cm] (q0) at (1, 9) {$q_0$};
	\node[draw, circle, dotted] (q1) at (5, 9) {$q_1$};
	\node[draw, circle, dotted, minimum height = 0.8cm] (q2) at (9, 9) {$q_2$};
    \node[] (s') at (13, 9) {$\dots$};

	\path[->] (q0) edge node[above] {$v'_0a'_0$} (q1);
	\path[->] (q1) edge node[above] {$v'_1a'_1$} (q2);
	\path[->] (q2) edge node[above] {$v'_2a'_2$} (s');

    \node[] (gameobs) at (-2, 7) {$G' \times {\mathcal O}$};
	\node[draw,circle, dotted,minimum height = 0.8cm, minimum width = 0.8cm] (vq0) at (1, 7) {$(v'_0,q_0)$};
	\node[draw, circle, dotted] (vq1) at (5, 7) {$(v'_1,q_1)$};
	\node[draw, circle, dotted, minimum height = 0.8cm] (vq2) at (9, 7) {$(v'_2,q_2)$};
    \node[] (vs') at (13, 7) {$\dots$};

	\path[->] (vq0) edge node[above] {$a'_0$} (vq1);
	\path[->] (vq1) edge node[above] {$a'_1$} (vq2);
	\path[->] (vq2) edge node[above] {$a'_2$} (vs');

	\end{tikzpicture}
\end{center}
\caption{The synchronized product of $G'$ and $\ObsAut$}
\label{fig:sync}
\end{figure}

We then construct the synchronized product of the game structure $G'$ of \pcpG with the observer automaton $\ObsAut$ (see Figure~\ref{fig:sync}). The resulting game structure is composed of the states $(v',\bar{t},j,d,f)$ resulting from the states $v'$ of $G'$ extended with the observed information $(\nu,\bar{t},j,d,f)$ of $\ObsAut$.\footnote{As the \NegComponent $\nu$ of $v'$ is already part of $v'$, it is not duplicated in the synchronized product.} 
By definition of the observer $\mathcal O$ and by Definitions~\ref{def:pcp-game-structure-states}-\ref{def:pcp-game-structure-transitions}, the set of states of \pcpOG has a size linear in $|V|$, a set of actions of constant size, and a constant number of Rabin pairs. 

Finally, the observation function of the extended states is naturally defined such that the observation of $(v',\bar{t},j,d,f)$ is equal to $\Obs(v')$, while we keep the same observation function for the actions. The resulting function is still denoted by $\Obs$. The Rabin condition is also naturally extended to the states of the synchronized product and still denoted $\Rabin((E_j,F_j)_{i \in J})$. In this way, given a reachability game $\mathcal{G} = (G,(W_i)_{j \in \Pi})$, we get the \pcpOG $= (G'\times \mathcal{O},\Rabin((E_j,F_j)_{i \in J}),\Obs)$ with a Rabin objective for the Provers. 
\end{proof}


From the player-stability of $\Obs$ (see Lemma~\ref{lem:playerStable}), we get the next technical lemma that will be useful later in the solving process. Figure~\ref{fig:sync} could be helpful to understand this lemma and its proof given in Appendix~\ref{App:dernierlem}. 

\begin{lemm} \label{lem:crucial}
Let $\rho = (v'_0,q_0)a'_0(v'_1,q_1)a'_1 \ldots$ be a play in the game structure $G' \times \mathcal{O}$ and
$$p_0 \xrightarrow{u'_0b'_0} p_1 \xrightarrow{u'_1b'_1} \ldots$$ be an infinite path in the observer automaton $\ObsAut$ with label $\pi = u'_0b'_0u'_1b'_1 \ldots$. If $\Obs(\rho) = \Obs(\pi)$, then 
$(u'_0,p_0)b'_0(u'_1,p_1)b'_1 \ldots$
is also a play in $G' \times \mathcal{O}$.   
\end{lemm}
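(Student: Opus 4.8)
The plan is to observe that the observer side of the statement is already handed to us: the hypotheses give the edges $(p_k, u'_k b'_k, p_{k+1})$ of $\ObsAut$ along $\pi$, and the very definition of the transitions of $\ObsAut$ forces the \NegComponent of $p_k$ to equal that of $u'_k$ and the \NegComponent of $p_{k+1}$ to equal that of $\delta'(u'_k,b'_k)$. Consequently, as soon as we know that $\delta'(u'_k,b'_k) = u'_{k+1}$ holds for every $k$, each pair $(u'_k,p_k)$ is a legitimate state of $G'\times\ObsAut$ and $(u'_0,p_0)b'_0(u'_1,p_1)b'_1\ldots$ is a play. So the whole statement reduces to proving that the $G'$-projection $u'_0 b'_0 u'_1 b'_1 \ldots$ belongs to $\Plays{G'}$.

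First I would argue by contradiction, in the spirit of the construction. Suppose $u'_0 b'_0 u'_1 \ldots$ is not a play of $G'$ and let $g'$ be its longest prefix that is a genuine history of $G'$; it is non-empty since $\Obs(u'_0)=\Obs(v'_0)=v'_0$ and $v'_0$ is a visible singleton observation, forcing $u'_0=v'_0$ (and $p_0=q_0$). Writing $u'_k$ for the last state of $g'$ and $v'_k$ for the state of $\rho$ at the same depth, the equality $\Obs(\rho)=\Obs(\pi)$ gives $\Obs(v'_k)=\Obs(u'_k)$, and since $\Obs$ is \pstable (Lemma~\ref{lem:playerStable}) the states $v'_k$ and $u'_k$ are controlled by the same player. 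The goal is then to show $g' b'_k u'_{k+1}$ is still a history of $G'$, i.e. $\delta'(u'_k,b'_k)=u'_{k+1}$, contradicting maximality. I would split according to the owner of $v'_k=u'_k$, matching the four shapes of states in Figure~\ref{fig:prover-game-play}. For the $G$-component of $u'_{k+1}$: when the owner is $P_1$ or $P_2$ the relevant action is visible, so $\Obs(a'_k)=\Obs(b'_k)$ forces $a'_k=b'_k$, and the first clause of \visibleactionstability (Lemma~\ref{lem:actionStable}), together with $\Obs(v'_{k+1})=\Obs(u'_{k+1})$ coming from the observation equality, yields that $\delta'(u'_k,b'_k)$ and $u'_{k+1}$ share their $G$-component; when the owner is $C$ the action is invisible, but the transition merely records a proposed action or performs the adjusting phase, leaving the $G$-component unchanged, so the $G$-components coincide again. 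The \NegComponent of $u'_{k+1}$ is then pinned down exactly as above from the two consecutive edges $(p_k,u'_kb'_k,p_{k+1})$ and $(p_{k+1},u'_{k+1}b'_{k+1},p_{k+2})$ of $\ObsAut$, which impose that the \NegComponent of $\delta'(u'_k,b'_k)$ equals that of $p_{k+1}$, which equals that of $u'_{k+1}$. Matching both components gives $\delta'(u'_k,b'_k)=u'_{k+1}$, the desired contradiction.

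The main obstacle, and the reason the lemma is not immediate, is that a path in $\ObsAut$ reads an \emph{arbitrary} word of $\Sigma^\omega$, so a priori nothing links successive label symbols $u'_kb'_k$ and $u'_{k+1}b'_{k+1}$ through $\delta'$. The coherence we need is split between two independent sources: the $G$-component of the successor is recovered only from the observation equality combined with \visibleactionstability, whereas its \NegComponent is recovered only from the local structure of $\ObsAut$'s transitions. The delicate point is to verify, uniformly across all four state shapes of the \pcp game, that these two partial pieces of information together determine the full successor state $u'_{k+1}$; the case of the invisible Challenger actions is where one must lean on the structural regularity of the plays (the $G$-component being carried verbatim through the decision and adjusting phases) rather than on \visibleactionstability.
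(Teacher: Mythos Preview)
Your proposal is correct and follows essentially the same approach as the paper's proof: reduce to showing $u'_0b'_0u'_1b'_1\ldots \in \Plays{G'}$, argue by contradiction via the longest valid prefix, use \pstability to align owners, and split into the four state shapes, recovering the $G$-component of the successor from the observation equality and its \NegComponent from the two consecutive $\ObsAut$-edges. The paper's case analysis is slightly more explicit (it writes out each of the four cases in full rather than invoking Lemma~\ref{lem:actionStable} uniformly), but the structure and ideas are identical.
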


\subsubsection{Solving the game \pcpOG}

Now let us move on to solving the game \pcpOG. We will proceed as in the parity case in Section~\ref{sec:SolvingPCPGame}.

First, we start by \emph{merging the two provers} into one to get a two-player Rabin game with imperfect information, similarly to the general case (see Section~\ref{subsec:2-pl-pr-g-ii} for a detailed account).
Let us simply remark that, while in the parity case we start from a somewhat generic three-player Rabin game, for the reachability game, it is good to keep in mind that this Rabin game stems from specific construction. Namely, the synchronized product of Proposition~\ref{prop:PCPReachRabin} between an observer automaton and a game structure described with a Rabin condition that only concerns the observer automaton. The resulting two-player Rabin game is denoted by \pcOG as it is still seen as the synchronized product of the \pcp game and its observer automaton $\mathcal O$ where the actions of $A_{P_1}$ and $A_{P_2}$ of Provers $P_1$ and $P_2$ have been \emph{replaced by the actions of $A_{P_1} \cup \ARondePd$} of Prover $P$ (see Definition~\ref{def:OneProver}). In the sequel, we keep the same notation $\mathcal O$ for this modified observer automaton. Notice that Lemma~\ref{lem:crucial} still holds when applied to the game \pcOG.

Second, we show how to \emph{get rid of the imperfect information} of the game \pcOG. We proceed as in the general case (see Section~\ref{subsec:eliminating}), however with some adaptations. The first one is the construction of a Streett automaton, which builds directly on the observer automaton $\mathcal O$ for the reachability case, unlike the one constructed from \pcOG for the parity case (see Definition~\ref{def:parityAut}).

\begin{defi} 
From a reachability game and its (modified) observer $\ObsAut = (Q,q_0,\alphabet,E,\Rabin((E_j,F_j)_{j\in J})$, we define a non-deterministic Streett automaton ${\automaton}' = (Q,q_0,\alphabet',E',\Streett((E_j,F_j)_{j\in J})$ such that 
\begin{itemize}
    \item its set of states is equal to $Q$ and its initial state to $q_0$,
    \item its alphabet is equal to $\alphabet' = \{\Obs(va) \mid v a \in \Sigma\}$
    \item its set $E'$ of edges is equal to $\{(v,o\bar{o},v') \mid o \bar{o} = \Obs(va)$ for some $va \in \Sigma$, and $(q,va,q') \in E \}$
    \item its Streett acceptance condition $\Streett((E_j,F_j)_{j\in J})$ is the opposite of $\Rabin((E_j,F_j)_{j\in J})$. 
\end{itemize}
\end{defi}

We can then continue as in Section~\ref{subsec:eliminating}, and complement this Streett automaton ${\automaton}'$ to obtain a deterministic parity automaton $\automaton$ (as in Definition~\ref{def:DetParityAut}). The automaton $\automaton$ satisfies the following two properties (the second one corresponds to Corollary~\ref{cor:automaton} in the parity case).

\begin{lemm} \label{cor:automatonBis} 
\begin{itemize}
    \item The deterministic parity automaton $\automaton$ has a constant number of states and a parity acceptance condition of constant size. Its alphabet size is linear in $|V|$.
    \item Let $w = \Obs(\rho) \in \alphabet^\omega$ be the observation of a play $\rho$ in \pcOG. Then, $w$ is accepted by the parity automaton $\automaton$ if, and only if, for every play $\pi$ in \pcOG such that $\Obs(\pi) = w$, we have $\pi \in \Rabin((E_j,F_j)_{j\in J})$.
\end{itemize}
\end{lemm}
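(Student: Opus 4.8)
The plan is to mirror the parity-case argument: the first bullet is a size computation that propagates the bounds of Proposition~\ref{prop:PCPReachRabin} through the determinization procedure, and the second bullet is the exact reachability analogue of Corollary~\ref{cor:automaton}, which I would obtain by first proving the analogue of Lemma~\ref{lem:AutParity} for the non-deterministic Streett automaton $\automaton'$ and then dualizing through the complementation of Definition~\ref{def:DetParityAut}.

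For the first bullet, I would start from Proposition~\ref{prop:PCPReachRabin}, which gives that the observer $\ObsAut$, and hence $\automaton'$ (it shares the state set $Q$ and the family of pairs $(E_j,F_j)_{j\in J}$), has a \emph{constant} number of states and a \emph{constant} number of Streett pairs, while its alphabet $\alphabet' = \{\Obs(va) \mid va \in \Sigma\}$ has size linear in $|V|$ (state-observations are linear in $|V|$ and action-observations are constant since $|A|$ is constant). Determinizing a Streett automaton with $n$ states and $k$ pairs into a deterministic parity automaton costs $2^{O(nk\log(nk))}$ states and $O(nk)$ priorities~\cite{DBLP:journals/lmcs/Piterman07}; with $n = |Q|$ and $k = |J|$ constant, both quantities are constant, and the subsequent complementation (shifting every priority by one) preserves the state count, the number of priorities, and the alphabet. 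This yields the announced constant number of states, constant parity acceptance condition, and alphabet linear in $|V|$.

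For the second bullet, since $\languageOf(\automaton) = \alphabet^\omega \setminus \languageOf(\automaton')$, it suffices to prove the reformulation: a word $w = \Obs(\rho)$, with $\rho$ a play in \pcOG, is accepted by $\automaton'$ if, and only if, there exists a play $\pi$ in \pcOG with $\Obs(\pi) = w$ and $\pi \notin \Rabin((E_j,F_j)_{j\in J})$; the lemma then follows by taking the contrapositive of both sides. For the ``only if'' direction, an accepting run of $\automaton'$ on $w$ is, by definition of its edge set, the observation-relabeling of a genuine path $p_0 \xrightarrow{u'_0b'_0} p_1 \xrightarrow{u'_1b'_1} \cdots$ in $\ObsAut$ whose label $\pi_{\ObsAut} = u'_0b'_0u'_1b'_1\cdots$ satisfies $\Obs(\pi_{\ObsAut}) = w = \Obs(\rho)$. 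Here I would invoke Lemma~\ref{lem:crucial} (which still holds for \pcOG), with $\rho$ serving as the anchoring play of observation $w$: it guarantees that $\pi := (u'_0,p_0)b'_0(u'_1,p_1)b'_1\cdots$ is itself a play of the product $G'\times\ObsAut$, and $\pi$ inherits the observer-states $p_0p_1\cdots$ of the accepting (hence Streett-satisfying) run, so $\pi \notin \Rabin((E_j,F_j)_{j\in J})$. The ``if'' direction is the easy converse: given such a $\pi$, projecting onto its observer component and relabeling the transitions by $\Obs$ produces an accepting run of $\automaton'$ on $w$.

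The main obstacle is precisely the ``only if'' direction, where an accepting run of the \emph{non-deterministic} automaton $\automaton'$ only yields a path in $\ObsAut$, not a play of the product game; turning that path back into a legitimate play of \pcOG is exactly what Lemma~\ref{lem:crucial} delivers, and it is essential that the hypothesis provides an actual play $\rho$ of the same observation $w$ to anchor the lifting. One should also check carefully that the natural extension of the Rabin/Streett condition from $\ObsAut$-states to product-states keeps $\Inf$ of the observer component in lockstep with $\Inf$ of the run, so that ``$\pi \notin \Rabin$ on the product'' is genuinely equivalent to ``the run of $\automaton'$ is Streett-accepting''.
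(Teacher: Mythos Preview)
Your proposal is correct and follows essentially the same approach as the paper: propagate the constant bounds of Proposition~\ref{prop:PCPReachRabin} through Piterman's determinization for the first bullet, and for the second bullet dualize to the Streett automaton $\automaton'$ and use Lemma~\ref{lem:crucial} (with the given play $\rho$ as anchor) to lift an accepting run of $\automaton'$ to a genuine play of \pcOG. Your explicit remark that the hypothesis $w = \Obs(\rho)$ is what makes Lemma~\ref{lem:crucial} applicable is exactly the point the paper is exploiting.
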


\begin{proof}
The first part of the lemma is proved as for Lemma~\ref{lem:sizeParityAut} given the size of $\mathcal O$ given in Proposition~\ref{prop:PCPReachRabin}. 

For the second part of the lemma, we prove the following equivalent property on the Streett automaton~$\automaton'$: Let $w = \Obs(\rho) \in \alphabet^\omega$ be  the observation of a play $\rho$ in \pcOG. Then, $w$ is accepted by ${\automaton}'$ if, and only if, there exists a play $\pi$ in \pcOG such that $\Obs(\pi) = w$ and $\pi \not\in \Rabin((E_j,F_j)_{j\in J})$.

Suppose that $w = o_0 \bar{o}_0 o_1 \bar{o}_1 \ldots$ is accepted by $\automaton'$. Then there exists a run $q_0 \xrightarrow{o_0 \bar{o}_0} q_1 \xrightarrow{o_1 \bar{o}_1} \ldots $ in $\automaton'$ that satisfies its Streett acceptance condition. By definition of $\automaton'$, there exists an infinite path $q_0 \xrightarrow{v_0 a_0} q_1 \xrightarrow{v_1 a_1} \ldots $ in $\ObsAut$ using the same states as the run of $\mathcal{A}'$ such that $\Obs(v_0a_0v_1a_1 \ldots) = w$. By Lemma~\ref{lem:crucial}, as $w = \Obs(\rho)$, it follows that $\pi = (v_0,q_0)a_0 (v_1,q_1) a_1 \ldots$ is a play in \pcOG. By definition of the Streett acceptance condition of $\automaton'$, we get that $\pi \not\in \Rabin((E_j,F_j)_{j\in J})$. The other implication is easily proved from the definition of $\automaton'$. 
\end{proof}

Finally, we can finish the solving process and provide a proof of Theorem~\ref{thm:mainReach}.

\begin{proof}[Proof of Theorem~\ref{thm:mainReach}]
The last step is to construct a two-player parity game $\mathcal{G}'$ with perfect information as in Definition~\ref{def:last_game}. Let us study the number of states of $\mathcal{G}'$ and the size of its parity objective. 

The parity objective is of constant size as it is defined from the parity acceptance condition of $\automaton$ (see Lemma~\ref{cor:automatonBis}).

Recall the form $(U,q)$ of the states of $\mathcal{G}'$, where $U$ is a subset of states of \pcOG and $q$ is a state of $\automaton$. Recall also that by Lemma~\ref{lem:coherent}, all states of $U$ are observed by $\Obs$ as the same state $v \in V$ of the original game. Thus, the subset construction used in Definition~\ref{def:last_game} does not involve the set $V$, but only on the \NegComponents and the \OComponents of \pcOG. As those components do not depend on $V$, the number of states of $\mathcal{G}$ is thus polynomial in $|V|$. This is where the gain in complexity is made compared to the parity case studied in Section~\ref{sec:SolvingPCPGame}.

To solve the \NCRS problem, it remains to solve the game $\mathcal{G'}$ that can be done in time $n^{O(\log(d))}$ where $n$ is its number of vertices and $d$ its number of priorities~\cite{DBLP:journals/siamcomp/CaludeJKLS22}. From the previous arguments, it follows that \NCRS problem is solvable in time polynomial in $|V|$.
\end{proof}

\section{Conclusion}
\label{sec:conclusion}

In this work, we introduce a novel algorithm to solve the \NCRS problem for parity objectives. Unlike previous methods that converted the problem into a model-checking problem for Strategy Logic, our algorithm reduces the \NCRS problem to a three-player zero-sum game with imperfect information, framed as a Prover-Challenger game. This new angle yields improved complexity upper bounds: exponential time in the number of vertices of the game structure and the number of priorities of the parity objectives, doubly exponential time in the number of players. In particular, our algorithm runs within exponential time for a fixed number of players, which is particularly relevant since the number of players is typically small in practical scenarios. Moreover, we establish a lower bound that indicates the impossibility of solving the \NCRS problem in polynomial time unless ${\sf P=NP}$ even for a fixed number of players. For the particular case of reachability objectives, when the number of players is fixed, we prove polynomial complexity like for the NE-NCRS problem \cite{DBLP:conf/icalp/ConduracheFGR16}.

We believe that the Prover-Challenger framework, based on a three-player model with imperfect information, may be applicable for other synthesis challenges beyond our current application.

\bibliographystyle{plain}
\bibliography{prover-game}

\appendix

\section{Proofs of Lemmas~\ref{lem:playerStable} and \ref{lem:actionStable}} \label{app:Action-stability}

We begin with the proof of Lemma~\ref{lem:actionStable}.

\begin{proof}[Proof of Lemma~\ref{lem:actionStable}] 
First, suppose that $a'_1 = a'_2$ and let us show that $\Obs(u'_1) = \Obs(u'_2)$. If $\Obs(v'_1) = \Obs(v'_2)$ is the initial state $v'_0$ of the \pcpG, then $v'_1 = v'_2 = v'_0$ (by definition of $\Obs$) and we trivially get that $\Obs(u'_1) = \Obs(u'_2)$ is the initial state $v_0$ of $\mathcal G$. Suppose now that $v = \Obs(v'_1) = \Obs(v'_2)$ with $v \in V$. If $a'_1 = a'_2 \in A'_{P_1}$, then $v'_1, v'_2 \in V'_1$. Then in $G$, $\delta(v, a'_1) = u = \delta(v, a'_2)$. Hence by definition of $\delta'$ in \pcpG, $\Obs(u'_1) = u = \Obs(u'_2)$. We get the same conclusion if $a'_1 = a'_2 \in A'_{P_2}$. In the last case where $a'_1 = a'_2 \in A'_{C}$, by definition of $\delta'$, we get that $\Obs(u'_1) = \Obs(u'_2) = v$. 

Second, suppose that $\Obs(u'_1) = \Obs(u'_2)$ and $a'_1, a'_2$ are visible actions, that is, $a'_1, a'_2 \in \APun \cup \APd$. Let us show that $a'_1 = a'_2$. Let $v,u \in V$ be such that $v = \Obs(v'_1) = \Obs(v'_2)$ and $u = \Obs(u'_1) = \Obs(u'_2)$. By definition of \pcpG and since $\delta'(v'_1, a'_1) = u'_1$, $\delta'(v'_2, a'_2) = u'_2$, we have in $G$ that $\delta(v, a'_1) = u = \delta(v, a'_2)$. As $G$ is \actionstable (see Definition~\ref{def:game_structure}), it follows that $a'_1 = a'_2$. 
\end{proof}

From the definition of $\delta'$ and $\Obs$ and thanks to Lemma~\ref{lem:actionStable}, the next lemma is straightforward. It states what $P_1$ knows about the next state after observing an action.

\begin{lemm} \label{lem:remark} 
Let $v', u' \in V'$ be two states and $a' \in A'$ be an action such that $\delta'(v', a') = u'$.
\begin{itemize}
    \item Suppose that $a' \in A_{P_2}$. Then $u'$ is owned by player~$C$.
    \item Suppose that $a' \in A_C$. If $v' \neq v'_0$, then $\Obs'(u') = \Obs'(v')$, otherwise $\Obs'(u') = v_0$.
    \item Suppose that $a' \in A_{P_1}$. Then $u'$ is owned by either player~$P_1$ or player~$C$. Moreover, let $v'_1, u'_1 \in V'$ and $a'_1 \in A'$ be such that $\delta'(v'_1, a'_1) = u'_1$ and $\Obs'(v'a') = \Obs'(v'_1a'_1)$. Then $\Obs(u') = \Obs(u'_1)$ and $u',u'_1$ are owned by the same player.
\end{itemize}
\end{lemm}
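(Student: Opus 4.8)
The plan is to dispatch on which of the three action blocks $a'$ lies in, and in each case read off the answer directly from the transition function $\delta'$ (Definition~\ref{def:pcp-game-structure-transitions}) and the observation function $\Obs$ (Definition~\ref{def:Obs-of-the-PCP-game}). Two standing remarks drive everything: the observation of any non-initial state of $G'$ is exactly its $G$-component, and whether a $G$-state $(u,\bar g)$ is owned by $P_1$ or by $C$ is decided solely by whether $u \in V_0$, hence solely by its $G$-component.

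For the first bullet ($a' \in \APd$), I would note that the only clauses of $\delta'$ firing on an action of $\APd = A \setminus A_0$ start from an action-state $(v,a,\bar g) \in \VPd$ and land in a player-state, either $(u,\varnothing,\bar g)$ or $(u,i,\bar g)$; both families of player-states lie in $\VC$, so $u'$ is owned by $C$. For the second bullet ($a' \in \AC$), Challenger has exactly four clauses: from $v'_0$ a gain profile leads to $(v_0,\bar g)$, giving $\Obs(u') = v_0$ precisely in the case $v' = v'_0$; in each of the three remaining clauses (a $G$-state to an action-state, and a player-state to a $G$-state, for $i \neq \varnothing$ and for $i = \varnothing$) the $G$-component is copied unchanged, so $\Obs(u') = \Obs(v')$.

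The third bullet ($a' \in \APun = A_0$) is where a little care is needed. Here $v' = (v,\bar g) \in \VPun$ with $v \in V_0$ and $\delta'$ sends it to the $G$-state $(u,\bar g)$ with $u = \delta(v,a')$; by the standing remark this state belongs to $P_1$ or to $C$ according as $u \in V_0$ or not, which gives the first claim. For the ``moreover'' part, I would first argue that $a'_1 = a'$: since actions of $\APun$ are visible we have $\Obs(a') = a'$, so the hypothesis $\Obs(v'a') = \Obs(v'_1a'_1)$ forces $\Obs(a'_1) = a'$, and as the invisible Challenger actions are all observed as $\sharp$, the action $a'_1$ must be visible and therefore equal to $a'$. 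With $a'_1 = a'$ and $\Obs(v') = \Obs(v'_1)$ in hand, the first part of Lemma~\ref{lem:actionStable} yields $\Obs(u') = \Obs(u'_1)$, and the standing remark on owners then gives that $u'$ and $u'_1$ are controlled by the same player.

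No step is a genuine obstacle: the lemma is a direct case analysis on the shapes of states and transitions fixed in Definitions~\ref{def:pcp-game-structure-states} and~\ref{def:pcp-game-structure-transitions}. The single subtle point is the ``moreover'' clause of the third bullet, where one must extract $a'_1 = a'$ from visibility before Lemma~\ref{lem:actionStable} becomes applicable; everything else is bookkeeping.
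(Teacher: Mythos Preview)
Your proposal is correct and follows essentially the same approach as the paper, which simply states that the lemma is ``straightforward'' from the definitions of $\delta'$ and $\Obs$ together with Lemma~\ref{lem:actionStable}. Your case analysis spells out exactly those details, including the one nontrivial step (extracting $a'_1 = a'$ from visibility before invoking action-stability in the third bullet).
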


Let us now prove Lemma~\ref{lem:playerStable}. 

\begin{proof}[Proof of Lemma~\ref{lem:playerStable}] 
Consider  in $G'$ two histories $h' = v'_0a'_0v'_1a'_1 \ldots v'_k$ and $g' = u'_0b'_0u'_1b'_1 \ldots u'_k$ such that $\Obs(h') = \Obs(g')$. We have to prove that $v'_\ell,u'_\ell$ belong to the same player, for all $\ell \in \{0,\ldots, k\}$. By definition of \pcpG, from the observation of $\Obs(a'_\ell)$ of each action $a'_\ell$ of $h'$, player~$P_1$ knows who controls $v'_\ell$: this is player~$C$ if $\Obs(a'_\ell) = \sharp$; player~$P_1$ if $\Obs(a'_\ell) = a'_\ell \in A_0$; player~$P_2$ if $\Obs(a'_\ell) = a'_\ell \in A \setminus A_0$. (Recall that $A_0$ and $A \setminus A_0$ are distinct sets, see Definition~\ref{def:game_structure}.) Therefore, as the actions are observed similarly in both histories $h'$ and $g'$, that is, $\Obs(a'_\ell) = \Obs(b'_\ell)$ for all $\ell$, it follows that $v'_\ell,u'_\ell$ belong to the same player, for all $\ell \in \{0,\ldots, k-1\}$. It remains to prove that this is also the case for the last pair of states $v'_k,u'_k$. Several cases have to be considered, by considering the definition of \pcpG and $\Obs$, and Lemma~\ref{lem:remark}.

Suppose first that $k = 0$. Then $h', g'$ are both equal to the initial state $v'_0$ owned by Challenger. Suppose now that $k>0$. If $v'_{k-1}, u'_{k-1}$ are owned by Prover~$1$ (resp. Prover~$2$), then $v'_k,u'_k$ are owned by the same player by Lemma~\ref{lem:remark}. We can thus suppose that $v'_{k-1}, u'_{k-1}$ are owned by Challenger. If $k = 1$, then $v'_{k-1}, u'_{k-1}$ are both the initial state $v'_0$, showing that $v'_k,u'_k$ have the same $G$-component $v_0$. It follows that $v'_k,u'_k$ are again owned by the same player. It remains to consider the case $k \geq 2$: \begin{itemize} 
\item If $v'_{k-2}, u'_{k-2}$ are both owned by either Prover~$1$ or Challenger, recalling that $v'_{k-1}, u'_{k-1}$ are owned by Challenger, we have that $v'_k,u'_k$ are owned by Prover~$2$. 
\item If $v'_{k-2}, u'_{k-2}$ are both owned by Prover~$2$, we can check that $v'_k,u'_k$ are owned by the same player (by definition of $\delta$ and $\Obs$). Indeed, $v'_{k-2}, u'_{k-2}$ are necessarily action-states (see Definition~\ref{def:pcp-game-structure-states} and Figure~\ref{fig_structure}). Thus, $v'_{k-1}, u'_{k-1}$ are owned by Challenger. Furthermore, the states $v'_k,u'_k$ are $G$-states with the same $G$-component, as $\Obs(v'_k) = \Obs(u'_k)$. Depending whether this $G$-component is owned in $G$ by player $0$ or another one, both $v'_k$ and $u'_k$ belong either to $P_1$ or $C$.
\end{itemize}
Therefore we have proved that $\Obs$ is \pstable.
\end{proof}

\begin{rema} \label{rem:playerStable} 
In the previous proof, for all $\ell \in \{0,\ldots k-1\}$, it was straightforward to derive that $v'_\ell,u'_\ell$ belong to the same player, from the hypothesis that the pair of actions $a'_\ell, b'_\ell$ is observed similarly. 
We need the stronger hypothesis $\Obs(h') = \Obs(g')$ to get that $v'_k,u'_k$ also belong to the same player.
\end{rema}

\section{Proof of Lemma~\ref{lem:strongBis}} \label{app:StrongPStable}

\begin{proof}[Proof of Lemma~\ref{lem:strongBis}] 
Let $h' = h'_1v'_k = v'_0a'_0v'_1a'_1 \ldots v'_k$ and $g' = g'_1u'_k = u'_0b'_0u'_1b'_1 \ldots u'_k$ be two histories in $G'$ such that $\Obs'(h'_1)=\Obs'(g'_1)$. Recall the certain regularity of the plays’ shape in the \pcpG, see Figure~\ref{fig:prover-game-play}.
Let us prove that $v'_k, u'_k$ are owned by the same player and that $\Obs(v'_k) = \Obs(u'_k)$. If $k = 0$, then $v'_k, u'_k$ are both equal to and observed as the initial state $v'_0$ of $G'$ that is controlled by player~$C$. Suppose that $k > 0$. Let us consider the pair of actions $a'_{k-1}, b'_{k-1}$ (which both belong to the same player since $\Obs'(h'_1) = \Obs'(g'_1)$ and $\Obs'$ is player-stable by Corollary~\ref{cor:playerStable}).
\begin{itemize}
\item If $a'_{k-1}, b'_{k-1} \in \ARondePd$, then $v'_{k-1}, u'_{k-1}$ are owned by player~$P$ and $v'_k, u'_k$ are owned by player~$C$ by definition of $\delta'$ and by Lemma~\ref{lem:remark}.
It follows that $\Obs(v'_k) = \Obs(v'_{k-1}) = \Obs(u'_{k-1}) = \Obs(u'_k)$ by definition of $\delta'$ and as $\Obs(h'_1) = \Obs(h'_2)$.
\item If $a'_{k-1}, b'_{k-1} \in A_{P_1}$, as $\Obs(v'_{k-1}a'_{k-1}) = \Obs(u'_{k-1}b'_{k-1})$, then $v'_k, u'_k$ are owned by the same player and $\Obs'(v'_k) = \Obs'(u'_k)$ by Lemma~\ref{lem:remark}.
\item If $a'_{k-1}, b'_{k-1} \in A_{C}$, as $\Obs(v'_{k-1}) = \Obs(u'_{k-1})$, then $\Obs(v'_k) = \Obs(u'_k)$ by Lemma~\ref{lem:remark}. Therefore $\Obs'(h') = \Obs'(g')$, and thus $\Obs(\cor{h'}) = \Obs(\cor{g'})$ by Lemma~\ref{lem:corresp}. As $\Obs$ is \pstable (see Lemma~\ref{lem:playerStable}), it follows that $v'_k, u'_k$ are owned by the same player.
\end{itemize} 
\end{proof}

\section{Proof of the complexity lower bounds of Theorem~\ref{thm:main}}  \label{app:lowerBounds}

In this appendix, we prove the complexity lower bounds of Theorem~\ref{thm:main}, second part. This proof is inspired by the one proposed for the NE-NCRS problem in \cite{DBLP:conf/icalp/ConduracheFGR16}, that we adapt in a way to deal with SPEs instead of NEs. 

\begin{proof}[Proof of Theorem~\ref{thm:main}, second part, hardness.]
We first prove that the \NCRS problem for parity games with three players is NP-hard. We use a reduction from solving a two-player zero-sum game with an intersection of two parity objectives. Deciding whether one player has a winning strategy for this kind of objective is coNP-hard~\cite{DBLP:conf/fossacs/ChatterjeeHP07}.

\begin{figure} 
\begin{center}
	\begin{tikzpicture}[->, >=latex,shorten >=1pt, scale=1, every node/.style={scale=1, align=center}]
	
	\node[draw,minimum height = 0.8cm, minimum width = 0.8cm] (v) at (1, 11) {$v$};
    \node[above, rotate=45] at (1.6,11.6) {$\in V'_0$};
	\node[draw, diamond] (v'-bar) at (5, 11) {$\bar{v}'$};
    \node[above, rotate=45] at (5.6,11.6) {$\in V'_1$};
	\node[draw, circle, minimum height = 0.8cm] (v') at (9, 11) {$v'$};
    \node[above, rotate=45] at (9.6,11.6) {$\in V'_2$};
	\node[draw, minimum height = 0.8cm, minimum width = 0.8cm] (s) at (7, 8.5) {$s$} ;
	\node[below] at (7,8) {$(0, 0, 0)$};
	
	\path[->] (v) edge node[right] {} (v'-bar);
    \path[->] (v) edge[dotted] (3,12);
    \path[->] (v) edge[dotted] (3,10);
	\path[->] (v'-bar) edge node[above] {$\bar{a}$} (v');
	
	\path[->] (v'-bar) edge node[right, pos = 0.3] {$b_1$} (s);
	\path[->] (v') edge node[right, pos = 0.3] {$b_2$} (s);
    \path[->] (v') edge[dotted] (11,12);
    \path[->] (v') edge[dotted] (11,11);
    \path[->] (v') edge[dotted] (11,10);
    \path[->] (s)  edge[loop right] node[right] {$b_0$} (s); 
    \draw[dashed] (0, 12.5) rectangle (12,9.5);
	\node (coin) at (1, 9.8) {$(0,\alpha'_1, \alpha'_2)$};

	\end{tikzpicture}
\end{center}
\caption{Reduction for the NP-hardness}
\label{fig:reduction1}
\end{figure}
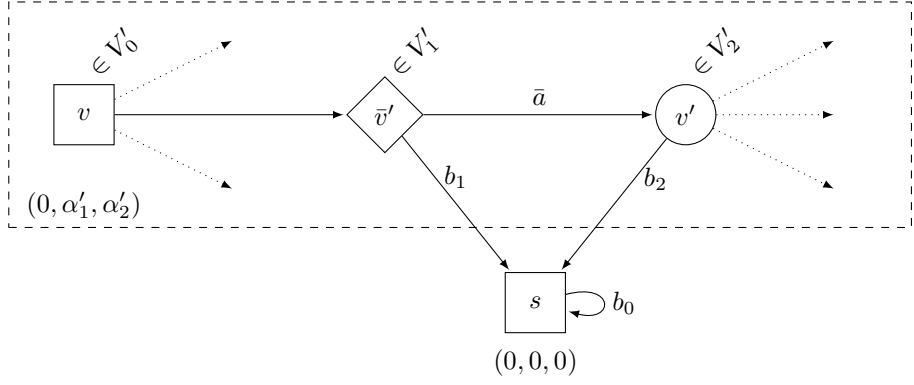

Consider a two-player zero-sum game with an intersection of two parity objectives  ${\mathcal G} = (G,\Parity(\alpha_1) \cap \Parity(\alpha_2))$ such that $G = (V,A,\{J_1,J_2\},\delta,v_0)$. We assume that player~$J_1$ wants to satisfy the objective $\Parity(\alpha_1) \cap \Parity(\alpha_2)$. W.l.o.g., we can suppose that the initial state $v_0$ is owned by player~$J_1$, that $A_{J_1} \cap A_{J_2} = \varnothing$,  and that $\delta(v,a) \in V_{J_2}$ for all $(v,a) \in V_{J_1} \times A_{J_1}$ and $\delta(v,a) \in V_{J_1}$ for all $(v,a) \in V_{J_2} \times A_{J_2}$ (the states of $V_{J_1}$ and $V_{J_2}$ alternate along the plays). From $\mathcal G$, we construct a parity game $\mathcal{G}' = (G', (\Parity(\alpha'_i))_{i\in\{0,1,2\}})$ with three players as follows. Its game structure $G'$ consists in a modified copy of $G$ where the states of $J_1$ are duplicated and where a new sink state $s$ is added with a self-loop labeled by a new action $b_0$.  

Formally, see also Figure~\ref{fig:reduction1}, we define: 
\begin{itemize}
    \item the set of states $V' = V'_0 \cup V'_1 \cup V'_2$ such that $V'_0 = V_{J_2} \cup \{s\}$, $V'_1 = \{\bar v \mid v \in V_{J_1}\}$, and $V'_2 = V_{J_1}$, 
    \item the set of actions $A' = A'_0 \cup A'_1 \cup A'_2$ such that $A'_0 = A_{J_2} \cup \{b_0\}$, $A'_1 = \{\bar a, b_1\}$ and $A'_2 = A_{J_1} \cup \{b_2\}$,
    \item the initial state $v_0$,
    \item the transition function $\delta' = \bigcup_{i=0,1,2} (V'_i \times A'_i) \rightarrow V'$ such that 
    \begin{itemize}
        \item $\delta'_0(v,a) = \overline{\delta(v,a)}$ for all $(v,a) \in V_{J_2} \times A_{J_2}$ and $\delta'_0(s,b_0)=s$,
        \item $\delta'_1(\bar v,\bar a) = v$ and $\delta'_1(\bar v,b_1) = s$ for all $\bar v \in V'_1$,
        \item $\delta'_2(v,a) = \delta(v,a)$ for all $(v,a) \in V_{J_1} \times A_{J_1}$ and $\delta'_2(v,b_2) = s$ for all $v \in V'_2$,
    \end{itemize} 
    \item the priority functions of the parity objectives $(\Parity(\alpha'_i))_{i\in\{0,1,2\}}$ such that 
    \begin{itemize}
        \item $\alpha'_0(v) = 1$ for all $v \in V' \setminus \{s\}$ and $\alpha'_0(s) = 0$,
        \item $\alpha'_1(v) = \alpha_1(v)$ for all $v \in V$, $\alpha'_1(\bar v) = \alpha_1(v)$ for all $\bar v \in V'_1$, and $\alpha'_1(s) = 0$, 
        \item $\alpha'_2(v) = \alpha_2(v)$ for all $v \in V$, $\alpha'_2(\bar v) = \alpha_2(v)$ for all $\bar v \in V'_1$, and $\alpha'_2(s) = 0$.
    \end{itemize}
\end{itemize}
Thus, the priority functions are such that all plays eventually looping in $s$ are winning for all players. The plays not visiting $s$ are losing for player~$0$ and have priorities depending on $\alpha_i$ for each player~$i \in \{1,2\}$.  

    We claim that player $J_1$ has a winning strategy from $v_0$ in $\mathcal G$ if, and only if, there is no solution to the \NCRS problem in $\mathcal G'$. As the first problem is coNP-hard, we will get that the \NCRS problem for parity games with three players is NP-hard. 
    \begin{itemize}
    \item Suppose first that player $J_1$ has a winning strategy $\sigma_{J_1}$ from $v_0$ in $\mathcal G$ for the objective $\Parity(\alpha_1) \cap \Parity(\alpha_2)$. Let $\sigma'_0$ be any strategy of player~$0$ in $\mathcal G'$. Let us show that $\sigma'_0$ is not solution to the \NCRS problem. The strategy $\sigma'_0$ can be viewed as a strategy $\sigma_{J_2}$ of player $J_2$ in $\mathcal G$. 
    By hypothesis, $\rho = \langle \sigma_{J_1},\sigma_{J_2} \rangle_{v_0}$ belongs to $\Parity(\alpha_1) \cap \Parity(\alpha_2)$. Let us show that there exists a subgame-perfect response $\bar \sigma'_{-0}$ to $\sigma'_0$ such that the play $\langle (\sigma'_0, \bar{\sigma}'_{-0}) \rangle_{v_0}$ is losing for player~$0$. From $\rho = v_0v_1v_2 \dots \in \Plays{G}$, we define $\rho' = v_0 \bar v_1 v_1 v_2 \bar v_3 v_3  \dots v_{2k} \bar v_{2k+1} v_{2k+1} \dots \in \Plays{G'}$. Notice that $\rho'$ is winning for both players~$1$ and~$2$. We define two strategies $\sigma'_1$ and $\sigma'_2$ respectively for player~$1$ and player~$2$ in $\mathcal G'$ such that 
    \begin{itemize}
        \item $\rho'  =  \langle \bar \sigma'\rangle_{v_0} = \langle \sigma'_0,\sigma'_1,\sigma'_2 \rangle_{v_0}$ ($\sigma'_2$ mimics $\sigma_{J_1}$ and for all $k \in \N$, $\sigma'_1(v_0 \bar v_1 v_1 \dots v_{2k} \bar v_{2k+1}) = v_{2k+1}$),
        \item for any history $h'$ not prefix of $\rho'$, $\sigma'_i(h') = s$ if $h'$ ends with a state owned by player~$i \in \{1,2\}$.
    \end{itemize}
    The outcome $\rho'$ of $\bar \sigma'$ is losing for player~$0$ as $\rho'$ never visits the sink state~$s$. Moreover, $\bar \sigma'$ is a 0-fixed SPE. Indeed, for any history $h'$, $\bar{\sigma}'_{\upharpoonright h'}$ is a 0-fixed NE in the subgame $G_{\upharpoonright h'}$ as $\rho'$ and any play eventually looping in $s$ is winning for both players~$1$ and~$2$ (they have no incentive to deviate).
    \item Suppose now that player $J_2$ has a winning strategy $\sigma_{J_2}$ from $v_0$ in $\mathcal G$ for the objective opposite to $\Parity(\alpha_1) \cap \Parity(\alpha_2)$. The strategy $\sigma_{J_2}$ can be viewed as a strategy $\sigma'_0$ for player~$0$ in $\mathcal G'$. Let us show that $\sigma'_0$ is a solution to the \NCRS problem. Consider any play $\rho' = v_0 \bar v_1 v_1 v_2 \bar v_3 v_3  \dots v_{2k} \bar v_{2k+1} v_{2k+1} \dots \in \Plays{G'}$ compatible with $\sigma'_0$ that does not visit $s$. By eliminating the copies $\bar{v}_{2k+1}$ in $\rho'$, we get a play $\rho = v_0v_1v_2 \dots $ compatible with $\sigma_{J_2}$, thus losing for $\Parity(\alpha_1) \cap \Parity(\alpha_2)$. Hence, $\rho'$ is not a $0$-fixed NE outcome (and thus not a 0-fixed SPE outcome), as at least one player $i \in \{1,2\}$ will has an incentive to deviate in the sink state~$s$. It follows that for all subgame-perfect responses $\bar \sigma'_{-0}$ to $\sigma'_0$, the play $\langle \sigma'_0, \bar \sigma'_{-0}\rangle_{v_0}$ eventually loops in $s$ and thus is winning for player~$0$. 
    \end{itemize}

\begin{figure} 
\begin{center}
	\begin{tikzpicture}[->, >=latex,shorten >=1pt, scale=1, every node/.style={scale=1, align=center}]
	
	\node[draw,minimum height = 0.8cm, minimum width = 0.8cm] (v) at (1, 11) {$v$};
    \node[above, rotate=45] at (1.6,11.6) {$\in V'_0$};
	\node[draw, diamond] (v'-bar-bar) at (5, 11) {$\bar{\bar{v}}'$};
    \node[above, rotate=45] at (5.6,11.6) {$\in V'_1$};
	\node[draw, regular polygon,regular polygon sides=6,inner sep=1.5pt, minimum height = 0.8cm] (v-bar') at (9, 11) {$\bar{v}'$};
    \node[above, rotate=45] at (9.6,11.6) {$\in V'_2$};
    \node[draw, circle, minimum height = 0.8cm] (v') at (13, 11) {$v'$};
    \node[above, rotate=45] at (13.6,11.6) {$\in V'_3$};
	\node[draw, circle, minimum height = 0.8cm, minimum width = 0.8cm] (s1) at (5, 8.5) {$s_1$} ;
	\node[below] at (5,8) {$(1, 1, 0, 0)$};
 \node[draw, circle, minimum height = 0.8cm, minimum width = 0.8cm] (s2) at (9, 8.5) {$s_2$} ;
	\node[below] at (9,8) {$(1, 0, 1, 0)$};
 \node[draw, circle, minimum height = 0.8cm, minimum width = 0.8cm] (s3) at (13, 8.5) {$s_3$} ;
	\node[below] at (13,8) {$(0, 0, 0, 0)$};
	
	\path[->] (v) edge node[right] {} (v'-bar);
    \path[->] (v) edge[dotted] (3,12);
    \path[->] (v) edge[dotted] (3,10);
	\path[->] (v'-bar-bar) edge node[above] {$\bar{\bar{a}}$} (v-bar');
    \path[->] (v-bar') edge node[above] {$\bar{a}$} (v');
	
	\path[->] (v'-bar-bar) edge node[right, pos = 0.3] {$b_1$} (s1);
    \path[->] (v-bar') edge node[right, pos = 0.3] {$b_2$} (s2);
	\path[->] (v') edge node[right, pos = 0.3] {$b_3$} (s3);
    \path[->] (v') edge[dotted] (15,12);
    \path[->] (v') edge[dotted] (15,11);
    \path[->] (v') edge[dotted] (15,10);
    \path[->] (s1)  edge[loop right] node[right] {$b_0$} (s1); 
    \path[->] (s2)  edge[loop right] node[right] {$b_0$} (s2); 
    \path[->] (s3)  edge[loop right] node[right] {$b_0$} (s3); 
    \draw[dashed] (0, 12.5) rectangle (16,9.5);
	\node (coin) at (1, 9.8) {$(0,\alpha_1', \alpha_2', 0)$};

	\end{tikzpicture}
\end{center}
\caption{Reduction for the coNP-hardness}
\label{fig:reduction2}
\end{figure}
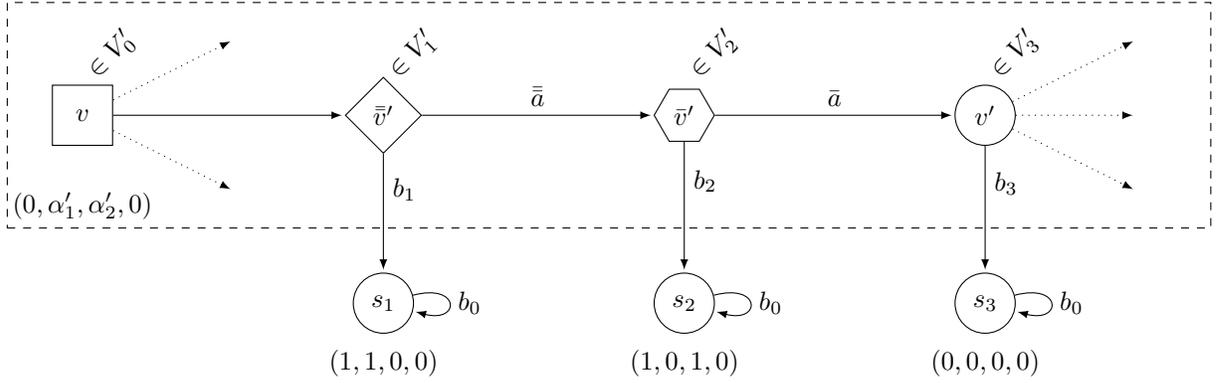

\medskip
Let us now prove that the \NCRS problem for parity games with four players is coNP-hard. We consider a two-player zero-sum game with an intersection of two parity objectives ${\mathcal G} = (G,\Parity(\alpha_1) \cap \Parity(\alpha_2))$ such that $G = (V,A,\{J_1,J_2\},\delta,v_0)$, with the same assumptions as in the first part of the proof. From $\mathcal G$, we construct a parity game $\mathcal{G}' = (G', (\Parity(\alpha'_i))_{i\in\{0,1,2,3\}})$ with four players as follows. Its game structure $G'$ consists in a modified copy of $G$ obtained by triplicating the states of $J_2$ and by adding three new sink states $s_i$, $i \in \{1,2,3\}$, with a self-loop. Formally, see also Figure~\ref{fig:reduction2}, we define: 
\begin{itemize}
    \item the set of states $V' = V'_0 \cup V'_1 \cup V'_2 \cup V'_3$ such that $V'_0 = V_{J_1} \cup \{s_1,s_2,s_3\}$, $V'_1 = \{\bar{\bar v} \mid v \in V_{J_2}\}$, $V'_2 = \{\bar v \mid v \in V_{J_2}\}$, and $V'_3 = V_{J_2}$,
    \item the set of actions $A' = A'_0 \cup A'_1 \cup A'_2 \cup A'_3$ such that $A'_0 = A_{J_1} \cup \{b_0\}$, $A'_1 = \{\bar{\bar a}, b_1\}$, $A'_2 = \{\bar a, b_2\}$, and $A'_3 = A_{J_2} \cup \{b_3\}$,
    \item the initial state $v_0$,
    \item the transition function $\delta' = \bigcup_{i=0,1,2,3} (V'_i \times A'_i) \rightarrow V'$ such that 
    \begin{itemize}
        \item $\delta'_0(v,a) = \overline{\overline{\delta(v,a)}}$ for all $(v,a) \in V_{J_1} \times A_{J_1}$, and $\delta'_0(s_i,b_0)=s_i$, for all $i \in \{1,2,3\}$,
        \item $\delta'_1(\bar{\bar v},\bar{\bar a}) = \bar{v}$ and $\delta'_1(\bar{\bar v},b_1) = s_1$ for all $\bar{\bar v} \in V'_1$,
        \item $\delta'_2(\bar v,\bar a) = v$ and $\delta'_1(\bar v,b_2) = s_2$ for all $\bar v \in V'_2$,
        \item $\delta'_3(v,a) = \delta(v,a)$ for all $(v,a) \in V_{J_2} \times A_{J_2}$, $\delta'_3(v,b_3) = s_3$ for all $v \in V_{J_2}$,
    \end{itemize} 
    \item the priority functions of the parity objectives $(\Parity(\alpha'_i))_{i\in\{0,1,2,3\}}$ such that 
    \begin{itemize}
        \item $\alpha'_0(v) = 0$ for all $v \in V' \setminus \{s_1,s_2,s_3\}$,  $\alpha'_0(s_1) = \alpha'_0(s_2) = 1$, and $\alpha'_0(s_3) = 0$,
        \item $\alpha'_1(v) = \alpha_1(v)$ for all $v \in V$, $\alpha'_1(\bar{\bar v}) = \alpha'_1(\bar v) = \alpha_1(v)$ for all $v \in V_{J_2}$, $\alpha'_1(s_1) = 1$, and $\alpha'_1(s_2) = \alpha'_1(s_3) = 0$,
        \item $\alpha'_2(v) = \alpha_2(v)$ for all $v \in V$, $\alpha'_2(\bar{\bar v}) = \alpha'_2(\bar v) = \alpha_2(v)$ for all $v \in V_{J_2}$, $\alpha'_2(s_1) = \alpha'_2(s_3) = 0$, and $\alpha'_2(s_2) = 1$,
        \item $\alpha'_3(v) = 0$ for all $v \in V'$.
    \end{itemize}
\end{itemize}
Thus, all the plays are winning for player~$3$. The plays winning for player~$0$ are those not visiting the sink states $s_1$ and $s_2$. And for each player~$i \in \{1,2\}$, it depends on the priority function~$\alpha_i$ and on the visited sink state.

    We claim that player $J_1$ has a winning strategy from $v_0$ in $\mathcal G$ if, and only if, there is a solution to the \NCRS problem in $\mathcal G'$. As the first problem is coNP-hard, we will get that the \NCRS problem for parity games with four players is coNP-hard. 
    \begin{itemize}
    \item Suppose first that player $J_1$ has a winning strategy $\sigma_{J_1}$ from $v_0$ in $\mathcal G$ for the objective $\Parity(\alpha_1) \cap \Parity(\alpha_2)$. The strategy $\sigma_{J_1}$ can be viewed as a strategy $\sigma'_0$ for player~$0$ in $\mathcal G'$. Let us show that $\sigma'_0$ is a solution to the \NCRS problem. Let us consider the different plays $\rho' \in \Plays{G'}$ compatible with $\sigma'_0$:
    \begin{itemize}
        \item Suppose that $\rho'$ visits no sink state, i.e., $$\rho' = v_0 \bar{\bar{v}}_1 \bar v_1 v_1 v_2 \bar{\bar{v}}_3 \bar v_3 v_3  \dots v_{2k} \bar{\bar{v}}_{2k+1} \bar v_{2k+1} v_{2k+1} \dots$$ Notice that it is winning for both players~$0$ and~$3$. By eliminating the copies $\bar{\bar{v}}_{2k+1}$ and $\bar{v}_{2k+1}$  in $\rho'$ for all $k \in \N$, we get a play $\rho = v_0v_1v_2 \dots \in \Plays{G}$ compatible with $\sigma_{J_1}$, thus winning for $\Parity(\alpha_1) \cap \Parity(\alpha_2)$. It follows that $\rho'$ is also winning for both players~$1$ and~$2$, thus for all players. Hence, $\rho'$ is a 0-fixed NE outcome (no player has an incentive to deviate).
        \item Suppose that $\rho'$ eventually loops in $s_3$. Then it is also winning for all players and thus a 0-fixed NE outcome.
        \item  Suppose that $\rho'$ eventually loops in $s_1$. Then, it is losing for player~$1$ and not a 0-fixed NE outcome. Indeed, player~$1$ has a profitable deviation by always playing the action $\bar{\bar a}$ (the resulting deviating play either loops in $s_2$, or loops in $s_3$, or visits no sink state).
        \item Suppose that $\rho'$ eventually loops in $s_2$. With an argument similar to the previous one, we get that $\rho'$ is not a $0$-fixed NE outcome. 
    \end{itemize}
        Therefore, as all subgame-perfect responses $\bar \sigma'_{-0}$ to $\sigma'_0$ are  in particular 0-fixed NEs, their outcome either visits no sink state or eventually loops in $s_3$. In both cases, this outcome is winning for player~$0$.
   
    \item  Suppose now that player $J_2$ has a winning strategy 
    from $v_0$ in $\mathcal G$ for the objective $W$ which is the opposite of $\Parity(\alpha_1) \cap \Parity(\alpha_2)$. Let $\sigma'_0$ be any strategy of player~$0$ in $\mathcal G'$. Let us show that $\sigma'_0$ is not solution to the \NCRS problem. The strategy $\sigma'_0$ can be viewed as a strategy $\sigma_{J_1}$ of player $J_1$ in $\mathcal G$. 
    Let us show that there exists a subgame-perfect response $\bar \sigma'_{-0}$ to $\sigma'_0$ such that the play $\langle (\sigma'_0, \bar{\sigma}'_{-0}) \rangle_{v_0} = \langle \bar{\sigma}' \rangle_{v_0}$ is losing for player~$0$.

\begin{figure} 
\centering

\begin{subfigure}[b]{\textwidth}

	\begin{tikzpicture}[->, >=latex,shorten >=1pt, scale=1, every node/.style={scale=1, align=center}]
	
	\node[draw, diamond] (v'-bar-bar) at (5, 11) {$\bar{\bar{v}}$};
	\node[draw, regular polygon,regular polygon sides=6,inner sep=1.5pt, minimum height = 0.8cm] (v-bar') at (9, 11) {$\bar{v}$};
    \node[draw, circle, minimum height = 0.8cm] (v') at (13, 11) {$v$};
	\node[draw, circle, minimum height = 0.8cm, minimum width = 0.8cm] (s1) at (5, 8.5) {$s_1$} ;
 
    \path[->] (v-bar') edge node[above] {$\bar{a}$} (v');
	
	\path[->] (v'-bar-bar) edge node[right, pos = 0.3] {$b_1$} (s1);

    \path[->] (v') edge[dotted] (15,11);
    
    \path[->] (s1)  edge[loop right] node[right] {$b_0$} (s1); 

	\end{tikzpicture}

\caption{Case $1(a)$}
\vspace{1cm}
\label{fig_case1a}
\end{subfigure}

\begin{subfigure}[b]{\textwidth}

	\begin{tikzpicture}[->, >=latex,shorten >=1pt, scale=1, every node/.style={scale=1, align=center}]
	
	\node[draw, diamond] (v'-bar-bar) at (5, 11) {$\bar{\bar{v}}$};
    
	\node[draw, regular polygon,regular polygon sides=6,inner sep=1.5pt, minimum height = 0.8cm] (v-bar') at (9, 11) {$\bar{v}$};
    
    \node[draw, circle, minimum height = 0.8cm] (v') at (13, 11) {$v$};
    
    \node[draw, circle, minimum height = 0.8cm, minimum width = 0.8cm] (s2) at (9, 8.5) {$s_2$} ;
	
	\path[->] (v'-bar-bar) edge node[above] {$\bar{\bar{a}}$} (v-bar');
    
    \path[->] (v-bar') edge node[right, pos = 0.3] {$b_2$} (s2);
	
    \path[->] (v') edge[dotted] (15,11);
    
    \path[->] (s2)  edge[loop right] node[right] {$b_0$} (s2); 

	\end{tikzpicture}

\caption{Case $1(b)$}

\label{fig_case1b}
\vspace{1cm}
\end{subfigure}

\begin{subfigure}[b]{\textwidth}

	\begin{tikzpicture}[->, >=latex,shorten >=1pt, scale=1, every node/.style={scale=1, align=center}]
	
	\node[draw, diamond] (v'-bar-bar) at (5, 11) {$\bar{\bar{v}}$};
    
	\node[draw, regular polygon,regular polygon sides=6,inner sep=1.5pt, minimum height = 0.8cm] (v-bar') at (9, 11) {$\bar{v}$};
   
    \node[draw, circle, minimum height = 0.8cm] (v') at (13, 11) {$v$};
    
 \node[draw, circle, minimum height = 0.8cm, minimum width = 0.8cm] (s3) at (13, 8.5) {$s_3$} ;
	
	\path[->] (v'-bar-bar) edge node[above] {$\bar{\bar{a}}$} (v-bar');
    \path[->] (v-bar') edge node[above] {$\bar{a}$} (v');

	\path[->] (v') edge node[right, pos = 0.3] {$b_3$} (s3);
   
    \path[->] (s3)  edge[loop right] node[right] {$b_0$} (s3); 
   
	\end{tikzpicture}
\caption{Case $2$}

\label{fig_case2}
\end{subfigure}

\caption{A subgame-perfect response to $\sigma'_0$}
\label{fig:subgameResponse}
\end{figure}
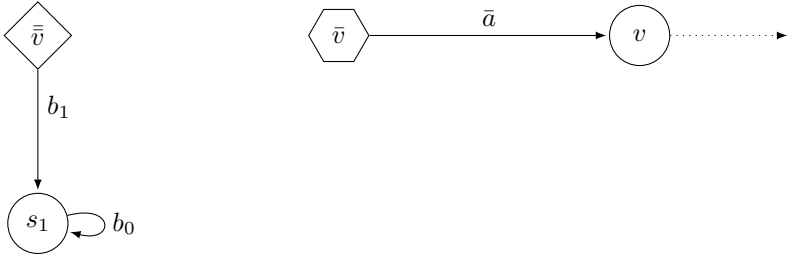
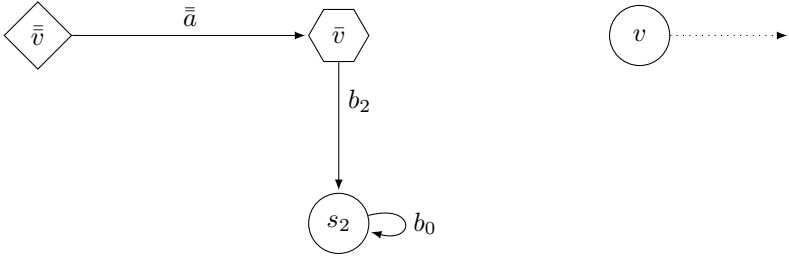
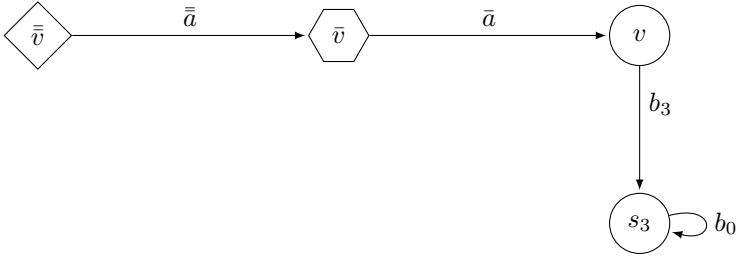

    We define $\bar \sigma'_{-0}$ as follows (see Figure~\ref{fig:subgameResponse}). Let $h'v'$ be any history compatible with $\sigma'_0$, such that its last state $v'$ belongs to $V'_i$, for some $i \in \{1,2,3\}$. This state is either a state $v$ in $G$ or one of its copies $\bar v$ or $\bar{\bar v}$. We denote by $hv$ the history in $G$ that corresponds to $h'v'$ (any state $u$ in $V_2$ and its copies $\bar{\bar u}$ and $\bar u$ are merged into one state $u$)
    \begin{enumerate}
        \item Suppose first that $J_2$ has a winning strategy $\sigma_{J_2,v}$ from $v$ in $\mathcal G$ for the objective $W$. Then the outcome $\pi' = \langle \sigma_{J_1,\upharpoonright hv}, \sigma_{J_2,v}\rangle_{v}$ belongs to $W$.
        (a) If $\pi'$ does not satisfy $\Parity(\alpha_1)$,  then 
        \begin{itemize}
            \item if $v' \in V'_1$, then player~$1$ chooses action~$b_1$,
            \item if $v' \in V'_2$, then player~$2$ chooses action~$\bar{a}$,
            \item if $v' \in V'_3$, then player~$3$ chooses the action dictated by $\sigma_{J_2,v}$.
        \end{itemize}
        (b) Otherwise, $\pi'$ does not satisfy $\Parity(\alpha_2)$. Then \begin{itemize}
            \item if $v' \in V'_1$, then player~$1$ chooses action~$\bar{\bar a}$,
            \item if $v' \in V'_2$, then player~$2$ chooses action~$b_2$,
            \item if $v' \in V'_3$, then player~$3$ chooses the action dictated by $\sigma_{J_2,v}$.
        \end{itemize}
        \item Suppose now that $J_2$ has no winning strategy from $v$ in $\mathcal G$ for the objective $W$. Then
        \begin{itemize}
            \item if $v' \in V'_1$, then player~$1$ chooses action~$\bar{\bar a}$,
            \item if $v' \in V'_2$, then player~$2$ chooses action~$\bar a$,
            \item if $v' \in V'_3$, then player~$3$ chooses action $b_3$. 
        \end{itemize}
    \end{enumerate}
    Notice that in case 1, when $v' \in V '_3$, player 3 plays an action leading to a state from which $J_2$ has again a winning stategy, thus to a situation where case 1 still holds. 
    Let us show that $\bar \sigma'_{-0}$ so defined is a subgame-perfect response to $\sigma'_0$. Consider any subgame $\mathcal{G}'_{\upharpoonright h'v'}$ of $\mathcal{G}'$. Recall that $\sigma'_0$ is fixed. Hence, there are essentially three cases to discuss: 
    \begin{itemize}
        \item Suppose that the outcome $\langle \bar \sigma'_{\upharpoonright h'v'}\rangle_{v'}$ eventually loops in $s_1$ (thus case 1(a)). Then, this outcome is an NE outcome. Indeed, it is losing for player~$1$, who has no profitable deviation as $\pi'$ does not satisfy $\Parity(\alpha_1)$ in case 1(a). Player~$2$ has no incentive to deviate as looping in $s_2$ is losing for him, and player~$3$ has also no incentive to deviate as he always wins in $\mathcal G'$.
        \item Suppose that  $\langle \bar \sigma'_{\upharpoonright h'v'}\rangle_{v'}$ eventually loops in $s_2$  (thus case 1(a)). This outcome is also an NE outcome. Indeed, it is losing for player~$2$, who has no profitable deviation as $\pi'$ does not satisfy $\Parity(\alpha_2)$ in case 1(b). Player~$1$ has no incentive to deviate as looping in $s_1$ is losing for him, and player~$3$ always wins in $\mathcal G'$. 
        \item Suppose finally that $\langle \bar \sigma'_{\upharpoonright h'v'}\rangle_{v'}$ eventually loops in $s_3$ (thus case 2). It is again an NE outcome as it is winning for all players. 
    \end{itemize}
    Therefore, $\sigma'_{-0}$ is a subgame-perfect response to $\sigma'_0$. Finally, notice that at the initial state $v_0$, the outcome of $\bar \sigma'$ loops in either $s_1$ (case 1(a)) or $s_2$ (case 1(b)), because $J_2$ has a winning strategy from $v_0$ by hypothesis. In both cases, this outcome is losing for player~$0$.
    \end{itemize}
\end{proof}

\section{Proof of Lemma~\ref{lem:crucial}} \label{App:dernierlem}

\begin{proof}[Proof of Lemma~\ref{lem:crucial}]\footnote{In this proof, we use the notations of Definitions~\ref{def:pcp-game-structure-states}-\ref{def:pcp-game-structure-transitions} of the observer definition.} 
We are going to prove that $\pi = u'_0b'_0u'_1b'_1 \ldots$ is a play in $G'$. From the synchronized product of this play with the path $p_0 \xrightarrow{u'_0b'_0} p_1 \xrightarrow{u'_1b'_1} \ldots$ in $\mathcal O$, it will follow that $(u'_0,p_0)b'_0(u'_1,p_1)b'_1 \ldots$ is a play in $G' \times \mathcal{O}$.

Assume that $\pi$ is not a play in $G'$. Consider the play $\rho_{G'} = v'_0a'_0v'_1a'_1 \ldots$ being the projection of $\rho$ on the game structure $G'$ (we remove the \OComponent of all states of $\rho$). Recall that $\Obs(\rho) = \Obs(v'_0a'_0v'_1a'_1 \ldots)$ and that it is equal to $\Obs(\pi)$ by hypothesis. Consider the longest prefix $g'$ of $\pi$ that is a history in $G'$. Note that $g$ exists as $\Obs(v'_0) = \Obs(u'_0)$ is the initial state of $G'$. Let $h'$ be the prefix of $\rho_{G'}$ such that $|h'| = |g'|$ and let $v'_{k}$ and $u'_{k}$ be the last state of $h'$ and $g'$ respectively. As $\Obs(h') = \Obs(g')$ by hypothesis and $\Obs$ is \pstable by Lemma~\ref{lem:playerStable}, it follows that $v'_{k}$ and $u'_{k}$ both belong to the same player.

To get a contradiction, we show that $g'b'_ku'_{k+1}$ is still a history of $G'$. We proceed case by case (recall the certain regularity of the plays’ shape in the game \pcpG, see Figure~\ref{fig:prover-game-play}): 
\begin{itemize}
\item If $v'_k, u'_k$ belong to Prover~1, as $\Obs(v'_ka'_k) =  \Obs(u'_kb'_k)$ by hypothesis, then $v'_k = (v,\bar g)$, $u'_k = (v,\bar \gamma)$ for some $v \in V_0$ and $\bar g, \bar \gamma \in \{0,1\}^{|\Pi|}$, and $a'_k, b'_k$ are both visible actions and thus equal to some $a \in A_0$. By definition of the transition function $\delta'$ of $G'$, we have $\delta'(v'_k,a'_k) = (\delta(v,a),\bar g)$ and $\delta'(u'_k,b'_k) = (\delta(v,a),\bar \gamma)$. So we have to prove that $u'_{k+1} = (\delta(v,a),\bar \gamma)$. For the $G$-component, this follows from $\delta(v,a) = \Obs(v'_{k+1}) = \Obs(u'_{k+1})$. For the \NegComponent, this follows from the definition of the set $E$ of $\ObsAut$: the existence of the two consecutive edges $p_k \xrightarrow{u'_kb'_k} p_{k+1} \xrightarrow{u'_{k+1}b'_{k+1}}p_{k+2}$ impose that the \NegComponent of $p_{k+1}$ is equal to the \NegComponent of both $\delta'(u'_k,b'_k)$ and $u'_{k+1}$, that is, to $\bar \gamma$. It follows that $u'_{k+1} = (\delta(v,a),\bar \gamma)$.

The following cases are treated with the same approach.

\item If $v'_k, u'_k$ belong to Challenger such that $v'_k = (v,\bar g)$ for some $v \in V_i$ with $i \neq 0$ and some $\bar g \in \{0,1\}^{|\Pi|}$, then $u'_k = (v, \bar \gamma)$ for some $\bar \gamma \in \{0,1\}^{|\Pi|}$, and both actions $a'_k, b'_k$ are invisible (they are observed as \#). We have $\delta'(v'_k,a'_k) = (v,a,\bar g)$ and $\delta'(u'_k,b'_k) = (v,b,\bar \gamma)$ for some $a, b \in A_i$. So we have to prove that $u'_{k+1} = (v,b,\bar \gamma)$. For the $G$-component, this follows from $v = \Obs(v'_{k+1}) = \Obs(u'_{k+1})$. For the \NegComponent, this follows from the  existence of the two consecutive edges $p_k \xrightarrow{u'_kb'_k}p_{k+1} \xrightarrow{u'_{k+1}b'_{k+1}}p_{k+2}$ in $E$ imposing that the \NegComponent of $p_{k+1}$ is equal to the \NegComponent of both $\delta'(u'_k,b'_k)$ and $u'_{k+1}$, that is, to $(b, \bar \gamma)$.

\item If $v'_k, u'_k$ belong to Challenger such that $v'_k = (v,i,\bar g)$ for some $v \in V$, $i \in \Pi \cup \{\varnothing\}$ and $\bar g \in \{0,1\}^{|\Pi|}$, then $u'_k = (v, j, \bar \gamma)$ for some $j \in \Pi \cup \{\varnothing\}$ and $\bar \gamma \in \{0,1\}^{|\Pi|}$, and both actions $a'_k, b'_k$ are observed as \#. We have $\delta'(v'_k,a'_k) = (v,\bar{g}')$ with $\bar{g}' = \bar g$ if $i = \varnothing$ and $\delta'(u'_k,b'_k) = (v,\bar{\gamma}')$ with $\bar{\gamma}' = \bar \gamma$ if $j = \varnothing$. So we have to prove that $u'_{k+1} = (v,\bar{ \gamma}')$. For the $G$-component, this follows from $v = \Obs(v'_{k+1}) = \Obs(u'_{k+1})$. For the \NegComponent, this follows from $p_k \xrightarrow{u'_kb'_k} p_{k+1} \xrightarrow{u'_{k+1}b'_{k+1}}p_{k+2}$ imposing that the \NegComponent of $\delta'(u'_k,b'_k)$ and $u'_{k+1}$ are both equal to $\bar{\gamma}'$. 

\item If $v'_k, u'_k$ belong to Prover~2 such that $v'_k = (v,a,\bar g)$ for some $v \in V_i$, $a \in A_i$ (with $i \neq 0$), and $\bar g \in \{0,1\}^{|\Pi|}$, then $u'_k = (v, b, \bar \gamma)$ for some $b \in A_i$ and $\bar \gamma \in \{0,1\}^{|\Pi|}$, and both actions $a'_k, b'_k$ are visible and equal to some $a' \in A_i$. We have $\delta'(v'_k,a'_k) = (\delta(v,a'),i,\bar g)$ and $\delta'(u'_k,b'_k) = (\delta(v,a'),j,\bar \gamma)$ for some $i, j \in \Pi \cup \{\varnothing\}$. So we have to prove that $u'_{k+1} = (\delta(v,a'),j,\bar \gamma)$. For the $G$-component, this follows from $\delta(v,a') = \Obs(v'_{k+1}) = \Obs(u'_{k+1})$. For the \NegComponent, this follows from $p_k \xrightarrow{u'_kb'_k} p_{k+1} \xrightarrow{u'_{k+1}b'_{k+1}}p_{k+2}$ imposing that the \NegComponent of $\delta'(u'_k,b'_k)$ and $u'_{k+1}$ are both equal to $(j, \bar \gamma)$.
\end{itemize}

We thus obtain the announced contradiction. It follows that $\pi = u'_0b'_0u'_1b'_1 \ldots$ is a play in $G'$ and then $(u'_0,p_0)b'_0(u'_1,p_1)b'_1 \ldots$ is a play in $G' \times \mathcal{O}$.
\end{proof}

\end{document}